\newtheorem{theorem}{Theorem}
\newtheorem{lemma}[theorem]{Lemma}
\newtheorem{corollary}[theorem]{Corollary}
\theoremstyle{definition}
\newtheorem{definition}[theorem]{Definition}
\newtheorem{remark}[theorem]{Remark}
\newtheorem{example}[theorem]{Example}
\newcommand{\re}{\operatorname{Re}}
\newcommand{\NN}{\mathbb{N}}
\newcommand{\ZZ}{\mathbb{Z}}
\newcommand{\RR}{\mathbb{R}}
\newcommand{\CC}{\mathbb{C}}
\newcommand{\rmd}{\mathrm{d}}
\newcommand{\rmi}{\mathrm{i}}
\newcommand{\rmr}{\mathrm{right}}
\newcommand{\rml}{\mathrm{left}}
\newcommand\cB{{\mathcal{B}}}
\newcommand\cL{{\mathcal{L}}}
\newcommand\cD{{\mathcal{D}}}
\newcommand\cR{{\mathcal{R}}}
\newcommand\cH{{\mathcal{H}}}
\newcommand\cF{{\mathcal{F}}}
\newcommand\cN{{\mathcal{N}}}
\newcommand\cJ{{\mathcal{J}}}
\newcommand{\wti}{\widetilde}
\newcommand{\tv}{\wti{v}}
\newcommand{\ts}{\wti{s}}
\newcommand{\tu}{\wti{u}}
\newcommand{\tW}{\wti{W}}
\newcommand{\oor}{\mathrm{ord}}
\newcommand{\arcsinh}{\mathrm{arcsinh}}
\newcommand{\loc}{\operatorname{loc}}
\newcommand{\spec}{\operatorname{spec}}
\newcommand{\ran}{\operatorname{ran}}
\newcommand{\beq}{\begin{equation}}
	\newcommand{\eeq}{\end{equation}}
\newcommand{\im}{{\operatorname{Im}}}
\newcommand{\sgn}{{\mathrm{sign}}}
\newcommand{\hPhi}{{\widehat{\Phi}}}
\renewcommand\d{\partial}
\def\eps{\varepsilon }
\renewcommand\d{\partial}
\def\eps{\varepsilon}
\def\tmu{\tilde{\mu}}
\newcommand\br{\begin{remark}}
	\newcommand\er{\end{remark}}
\newcommand\bp{\begin{pmatrix}}
	\newcommand\ep{\end{pmatrix}}
\newcommand{\be}{\begin{equation}}
	\newcommand{\ee}{\end{equation}}
\newcommand\ba{\begin{equation}\begin{aligned}}
		\newcommand\ea{\end{aligned}\end{equation}}
\newcommand{\bap}{\begin{app}}
	\newcommand{\eap}{\end{app}}
\newcommand{\begs}{\begin{exams}}
	\newcommand{\eegs}{\end{exams}}
\newcommand{\beg}{\begin{example}}
	\newcommand{\eeg}{\end{exaplem}}
\newcommand{\bpr}{\begin{proposition}}
	\newcommand{\epr}{\end{proposition}}
\newcommand{\bt}{\begin{theorem}}
	\newcommand{\et}{\end{theorem}}
\newcommand{\bc}{\begin{corollary}}
	\newcommand{\ec}{\end{corollary}}
\newcommand{\bl}{\begin{lemma}}
	\newcommand{\el}{\end{lemma}}
\newcommand{\bd}{\begin{definition}}
	\newcommand{\ed}{\end{definition}}
\newcommand{\brs}{\begin{remarks}}
	\newcommand{\ers}{\end{remarks}}
\newcommand{\CalR}{\mathcal{R}}
\newcommand\cC{{\mathcal C}}
\newcommand{\rma}{\mathrm{a}}
\newcommand{\rmR}{\mathrm{R}}
\numberwithin{equation}{section}
\numberwithin{theorem}{section}
\begin{document}

\author{ Yuri Latushkin}
\address{University of Missouri}
\email{latushkiny@missouri.edu}
\thanks{Y.L. was supported by the NSF grant  DMS-2106157, and would like to thank the Courant Institute of Mathematical Sciences and especially Prof.\ Lai-Sang Young for the opportunity to visit CIMS.\\ A. P.   was partially supported under the Simons Foundation Grant nr. 524928.\\E-mail addresses: latushkiny@missouri.edu (Y. Latushkin), pogana@miamioh.edu (A.Pogan).}
\author{Alin Pogan}
\address{Miami University}
\email{pogana@miamioh.edu}
\title[Resonances of the square well]{Resonances for the one dimensional Schr\"odinger operator with the matrix-valued complex square-well potential}
\date{\today}
\begin{abstract}We study the resonances of (generally, non-selfadjoint) Schr\"odinger operators with matrix-valued square-well potentials. We compute explicitly the Jost function and derive complex transcendental equations for the resonances. We prove several results concerning the distribution of resonances in the complex plane. We compute the multiplicity of resonances and  prove a version of the Weyl Law for the number of resonances.
\end{abstract}	

\maketitle

\vspace{0.3cm}
\begin{minipage}[h]{0.48\textwidth}
	\begin{center}
		University of Missouri \\
		Department of Mathematics\\
		810 East Rollins Street\\ Columbia, MO 65211, USA
	\end{center}
\end{minipage}
\begin{minipage}[h]{0.48\textwidth}
	\begin{center}
		Miami University\\
		Department of Mathematics\\
		100 Bishop Circle\\
		Oxford, OH 45056, USA
	\end{center}
\end{minipage}

\vspace{0.3cm}

\section{Introduction}\label{Intro}

Schr\"odinger operators with square well potentials are classical; we were not able to trace the study of their resonances as far as to \cite{Ga} but the square well definitely takes a notable place in every text on quantum scattering
 \cite{H,Me,Sc,Si,T}
and still attract full attention of many \cite{Br,MT,MM,MP} for quite some time. The beauty of this example is that one can calculate everything by hand, and illustrate practically all notions in scattering theory in general and, in particular, in the theory of scattering resonances; we refer to \cite{DZ,Hi,Sj,Z} and also to \cite{AGHKH,C06, CD, CDY, FH, GH} for several general sources on this fascinating subject. The current paper originated as an attempt to indeed provide an illustration to the results of the upcoming work \cite{LP3}
where, in the spirit of \cite[Theorem 2.9]{DZ}, we use resonances to represent solutions of the abstract wave equations with an operator generating a cosine family (and, in particular, covering the case of non-self-adjoint Schr\"odinger operators with matrix valued potentials). We ended up writing a detailed analysis of resonances for the square well trying to make the exposition as elementary and detailed as possible with no intention to claim novelty of our results. 

We cover the case of matrix valued potentials. We are aware of only few recent papers on resonances in this case (often refer to as {\em multichannel systems}, \cite{dAR,N}), see, e.g., \cite{Mo1,Mo2} and  \cite{ABMS}, but of course there is a big general literature on the matrix Schr\"odinger operators, see, e.g., \cite{EGNT} and also  \cite{GNZ} and the bibliography therein. Specifically, 
we consider the Schr\"odinger operator $A=\d_{xx}+V(x)$ in the space
 $L^2(\RR; \CC^d)$ of $\CC^d$-valued vector function with the (matrix valued complex) square well potential 
 \begin{equation}\label{defV}
 \text{ $V(x)=a$ for $|x|\le\ell$ and $V(x)=0$ for $|x|>\ell$ where $\ell>0$ and $a\in\CC^{d\times d}$}.
 \end{equation}
Here, $a$ is a fixed (generally, non-Hermitian) $d\times d$ matrix with complex entries and $d\ge1$.
Throughout, we assume that the matrix $a$ is invertible, that is, $0\notin\spec(a)$.

In Section \ref{Jost-Green} we provide explicit formulas for the (matrix valued) Jost solutions, and also for the Jost function and the Green's function, associated with the Schr\"odinger operator $A$. Following \cite{DZ}, we call {\em resonances} all poles of the (meromorphic in the entire complex plane) Green's function; their set we denote by $\CalR(A)$. In particular, the resonances located in the right half plane correspond to the poles of the resolvent of the operator $A$ while the resonances located in the left half plane correspond to the poles of the meromorphic extension of the resolvent viewed as an operator acting from $L^2_{\textrm{comp}}(\RR; \CC^d)$ into 
$H^2_{\loc}(\RR; \CC^d)$. 
It turns out that each eigenvalue $\alpha$ of the matrix $a$ from \eqref{defV} generates infinitely many resonances of the operator $A$. This result is of course not surprising as for the general case of compactly supported real valued (scalar) potentials it goes back to \cite{R}.
The resonances generated by the eigenvalue $\alpha$ are hereby called $\alpha$-resonances; we denote their set by $\CalR_\alpha$. We show that $\CalR(A)=\cup_{\alpha\in\spec(a)}\CalR_\alpha$ and $\CalR_\alpha=\CalR(A_\alpha)$ where $A_\alpha=\partial_{xx}+\alpha\chi_{[-\ell,\ell]}$ is the Schr\"odinger operator in $L^2(\RR;\CC)$ with the {\em scalar} (complex) well potential, $\chi_{[-\ell,\ell]}$ is the characteristic function of  $[-\ell,\ell]$. We show that the set $\CalR(\alpha_1)\cap\CalR(\alpha_2)$ is finite for different eigenvalues $\alpha_1$ and $\alpha_2$.
We write down explicitly transcendental equations for the $\alpha$-resonances and give a very detailed description of the solutions to the equations.

In Section \ref{Complex-Resonances} we show that each complex eigenvalue $\alpha$ of the matrix $a$ generates infinitely many resonances (Theorem \ref{t3.1}), but that only finitely many of them are located in the complex right half-plane (Theorem \ref{t3.12}). It is interesting to compare this with the case of the real $\delta$-potential (formally obtained by letting $\RR\ni a\to+\infty$ or $-\infty$ and $\ell\to0$) where one can have exactly one real resonance \cite[p.\ 78]{AGHKH}. Moreover, we show that there are infinitely may $\alpha$-resonances located in the second and third quadrant of the complex plane (Theorem \ref{t3.14}). Furthermore, in this theorem we give a {\em precise} asymptotic formula for a sequence $\{\lambda_n\}_{n\in\NN}$ of $\alpha$-resonances located in the left half-plane.
In particular, the formula allows us to show ``by hand" that there are only finitely many $\alpha$-resonances located to the right of a logarithmic curve  (Theorem \ref{t3.14}), a standard result \cite[Theorem 2.10]{DZ} that goes back probably to \cite{LaP}. Our approach is based on a direct analysis of the (complex) transcendental equations for (complex) $\alpha$-resonances; the equations can be re-written as real systems with unknown real and imaginary parts of the complex resonances. In Appendix \ref{App} we collected some rather technical but very detailed information about the systems. 

In Section~\ref{Real-Zeros} we study in more details the resonances located on the real and imaginary axes and/or generated by the real eigenvalues of $a$, if any.  We give a very detailed information about the location of the $\alpha$-resonances of $A$ for a positive (Theorem \ref{thm: 3.1}) or negative (
Theorem \ref{t4.3}) eigenvalue $\alpha$ of the matrix $a$ depending on the value of the parameter $\ell\sqrt{|\alpha|}$.

In Section~\ref{Multiplicity} we address the multiplicity of resonances.  We begin by describing in Theorem \ref{t5.2} the resonances of $A$ (that is, zeros of the Jost function defined as the determinant of a $(d\times d)$-matrix) in terms of the eigenvalues of infinite dimensional trace class Birman-Schwinger-type operators or, equivalently, zeros of the respective infinite dimensional Fredholm determinants. This is a fairly standard result; our proof if  based on the well developed by now theory of integral operators with semi-separable kernels, see, e.g., \cite[Chapter IX]{GGK1} and an important contribution in \cite{GM}; we follow the latter.
We then define the multiplicity of a resonance using the Birman-Schwinger principle, following \cite{G,GH} rather than \cite{DZ}.
Our final formula (Theorem \ref{t5.10}) for the multiplicity of resonances of $A$ involves the algebraic multiplicities of the eigenvalues $\alpha$ of the matrix $a$ and an explicitly computed function which takes values either $1$ or $2$ depending on the location of $\alpha$. To prove the formula we have to study in great details the sets $\cR_\alpha$ of resonances generated by particular eigenvalues of $a$; this study occupies the main part of Section~\ref{Multiplicity}. 

\textbf{Notation.}  We let $\CC_+=\{\lambda\in\CC: \re\lambda>0\}$ and
$D(\lambda,\varepsilon)$ be the disc in $\CC$ centered at $\lambda$ of radius $\varepsilon$.  We denote by $\cB(X,Y)$ the set of linear bounded operators between Banach spaces $X$ and $Y$ and by $\cC\cL(X,Y)$ the set of linear continuous operators between Fr\'{e}chet spaces $X$ and $Y$,
by $\spec(K)$, $\spec_{\rm p}(K)$, $\spec_{\rm d}(K)$ and $\spec_{\rm ess}(K)$ the spectrum,
point spectrum, discrete spectrum and essential spectrum of an operator $K$, by $m_{\rm a}(\lambda,K)$ the algebraic multiplicity of an eigenvalue $\lambda$ of $K$, by ${\rm ord}(\lambda,f)$ the order of multiplicity of a zero $\lambda$ of a function $f$, by $\chi_{[-\ell,\ell]}$ the characteristic function of the segment $[-\ell,\ell]$. We adhere to ``semigroup'' notations and write $A=\partial_{xx}$ rather then $A=-\partial_{xx}$, cf.\ \cite{DZ}; consequently, the spectral parameter ${z}^{1/2}$, $\im(z^{1/2})>0$, so that $Au=zu$ from, say, \cite{DZ,GM}
and the spectral parameter $\lambda$, $\re\lambda>0$, so that $Au=\lambda^2u$ from the current paper are related via $\lambda=-\rmi z^{1/2}$. Also,
we denote by $\sqrt{z}$ the brunch of the square root obtained by cutting $\CC$ along $(-\infty,0]$ so that $\re\sqrt{z}>0$ for $z\notin(-\infty,0]$ and $\sqrt{z}=\sqrt{z+\rmi 0}$ for $z\in(-\infty,0]$.

\textbf{Acknowledgments.} We are thankful to T.\ Christiansen, F.\ Gesztesy and I.\ M.\ Karabash for useful references and suggestions that helped us to navigate in the ocean of the resonance literature.
\section{The Jost function and the Green function}\label{Jost-Green}
In this section we give formulas for the Jost solutions, Jost function, and the associated Green's function related to the matrix square well potential.
The matrix-valued eigenvalue problem reads
\begin{equation}\label{evp}
\d_{xx}U(x)=(\lambda^2-a)U(x) \text{ for $|x|\le\ell$, $\d_{xx}U(x)=\lambda^2U(x)$ for $|x|>\ell$,} \end{equation}
where $\lambda\in\CC$ and $U:\RR\to\CC^{d\times d}$. 
Our first objective is to define the Jost solutions $U_\pm(x,\lambda)$ to \eqref{evp} that exponentially decay to zero as $x\to\pm\infty$ provided $\re\lambda>0$ and are analytic for all $\lambda\in\CC$  for each $x\in\RR$.

We fix a $\lambda\in\CC$ and seek to define the matrix $b=b(\lambda)\in\CC^{d\times d}$ such that $b^2=\lambda^2I_{d\times d}-a$. We will express the Jost solutions using $b(\lambda)$, cf.\ \eqref{defUplus}.
If $d=1$, that is, $a\in\CC$, then one can define $b(\lambda)=\sqrt{\lambda^2-a}$ for all $\lambda\in\CC\setminus\Gamma$ where $\Gamma=\{\lambda\in\CC: \lambda^2-a\in(-\infty,0]\}$. 
We note that $b(\lambda)$ is not analytic (not even continuous) for $\lambda\in\Gamma$. However, we will see below that the Jost solutions are analytic in $\lambda$ because they are {\em even} functions of $b$, and thus, in fact, depend on $b^2$ rather than $b$, and therefore do not depend on the choice of the branch of the square root.

Considering the general case when $d\ge1$, that is, $a\in\CC^{d\times d}$, we temporarily assume that
\begin{equation}\label{lambdasp}
\lambda^2\notin\spec(a)\cup\{0\};
\end{equation}
this assumption will be removed soon.  We will define the matrix 
\begin{equation}\label{def-b}
b=b(\lambda) \text{ such that $b^2=\lambda^2I_{d\times d}-a$}
\end{equation}
using Riesz-Dunford functional calculus, cf., e.g., \cite[Section VII.4]{Con}.
 To begin, we use our standing assumption $0\notin\spec(a)$ and  choose a curve $\Gamma$ in $\CC$ that connects zero  and infinity such that $\Gamma\cap\spec(a)=\emptyset$.
We cut the complex plane along $\Gamma$, define the unique branch of the square root function $({\cdot})^{1/2}$ such that $({-1})^{1/2}=\rmi$ and introduce
 $\phi_\lambda(z)$ so that $\phi_\lambda(z)=({\lambda^2-z})^{1/2}$ for all $z\in\CC$. Next, we take a simple closed positively oriented contour
$\gamma$ in $\CC\setminus\Gamma$ enclosing $\spec(a)$ such that $\gamma\cap\spec(a)=\emptyset$ and $\gamma\cap\Gamma=\emptyset$ for the pre-fixed curve $\Gamma$, again using the standing assumption $0\notin\spec(a)$. The function $\phi_\lambda(\cdot)$ is analytic on $\spec(a)$ since it is analytic for $z\in\CC\setminus\Gamma$.
We now define $b=b(\lambda)$ such that $b^2=\lambda^2I_{d\times d}-a$
by the formula
\begin{equation}\label{defb}
b=b(\lambda):=\frac1{2\pi i}\int_\gamma\phi_\lambda(z)(z-a)^{-1}\,dz.
\end{equation} 

Still (temporarily) assuming \eqref{lambdasp} so that, in particular, $b=b(\lambda)$ is invertible and $\lambda\neq0$, we define the Jost solution $U_+=U_+(x,\lambda)\in\CC^{d\times d}$ to \eqref{evp}
 by letting
 \begin{equation}\label{defUplus}
 U_+(x,\lambda)= \begin{cases}
 e^{-\lambda x}I_{d\times d},& \text{ for } x>\ell, \\
  e^{-\lambda\ell}\big(\cosh((x-\ell)b)-\lambda b^{-1}\sinh((x-\ell)b)\big), & \text{ for } |x|\le\ell,\\ 
 W_+(\lambda)e^{-\lambda x}+W_-(\lambda)e^{\lambda x},  
& \text{ for } x<-\ell,
 \end{cases}\end{equation}
 where we introduce $\CC^{d\times d}$-valued functions
 \begin{equation}\label{defUplus3}
 W_+(\lambda)=e^{-2\lambda\ell}
 \big(
 \cosh(2\ell b)+\frac12(\lambda b)^{-1}(\lambda^2+b^2)\sinh(2\ell b)\big),\, W_-(\lambda)=\frac12a(\lambda b)^{-1}\sinh(2\ell b).
  \end{equation}
The last line in \eqref{defUplus} could be also re-written as follows,
 \begin{equation}\label{defUplus2}\begin{split}
 U_+(x,\lambda)=e^{-\lambda\ell}\big(\big(&\cosh(2\ell b)+\lambda b^{-1}\sinh(2\ell b)\big)\cosh(\lambda(x+\ell))\\&\quad-\big(\lambda\cosh(2\ell b)+b\sinh(2\ell b)\big)\lambda^{-1}
 \sinh(\lambda(x+\ell))\big), \text{ for } x<-\ell;
 \end{split}
 \end{equation}
 this shows  that $U_+(x,\cdot)$ has no singularity at $\lambda=0$, that is, 
 \begin{equation}
 U_+(x,0)=\begin{cases} I_{d\times d}, & \text{ for $x>\ell$,}\\
 \cosh((x-\ell)b), & \text{ for $|x|\le\ell$,}\\
 \cosh(2\ell b)-b\sinh(2\ell(x+\ell)b),& \text{ for $x<-\ell$.}
 \end{cases}
 \end{equation}

Formulas \eqref{defUplus}, \eqref{defUplus3} and \eqref{defUplus2}extend to the case when $b$ is not invertible, cf.\ Remark \ref{remb0}, with the expression $b^{-1}\sinh((x-\ell)b)$ replaced by $(x-\ell)P_0+\big(b^{-1}\sinh((x-\ell)b\big)(I-P_0)$, where $P_0=P_0(\lambda)$ is the Riesz projection for $b=b(\lambda)$ onto its zero eigenvalue, see more details in Remark \ref{remb0} below. In particular, if $d=1$ and $b(\lambda)=0$ then $b^{-1}\sinh((x-\ell)b)$, $b^{-1}\sinh(2\ell b)$, etc., can be replaced by 
 $(x-\ell)$, $2\ell$, etc.
 
 The matrix $b$ commutes with the matrices $\cosh((x-\ell)b)$ and $b^{-1}\sinh((x-\ell)b)$. The right hand side of \eqref{defUplus} is a linear combination with $x$-independent matrix coefficients of the solutions $e^{\pm\lambda x}I_{d\times d}$ and 
 $e^{\pm xb(\lambda)}$ to \eqref{evp} at the respective parts of the line $\RR$. A direct computation shows that $U_+(\cdot,\lambda)$ and $\d_xU_+(\cdot,\lambda)$ are continuous at $x=\pm\ell$. Thus, $U_+(\cdot,\lambda)$ is a solution to equation \eqref{evp}. 
 
\begin{remark}\label{remJsol}
 If $\textrm{Re }\, (\lambda)>0$ then $U_+(\cdot,\lambda)$ is asymptotic to the exponentially decaying to zero as $x\to+\infty$ ``plane wave'' $e^{-\lambda x}I_{d\times d}$. This justifies the term ``Jost solution''
 for $U_+(\cdot,\lambda)$.   Clearly, for a vector $u_0\in\CC^d$ the following equivalencies hold:  
 The $\CC^d$-valued function $U_+(\cdot,\lambda)u_0$ is an outgoing\footnote{Cf., e.g., \cite[Section 2.1]{DZ}.} solution to \eqref{evp} if and only if $u_0\in\ker (W_+(\lambda))$  if and only if  the function $U_+(\cdot,\lambda)u_0$ belongs to $L^2(\RR;\CC^d)$. 
 In particular, if $\re\lambda>0$ then the map $u_0\mapsto U_+(\cdot,\lambda)u_0$ gives an isomorphism between $\ker(W_+(\lambda))$ and $\ker(\lambda^2 I_{L^2}-A)$.
Indeed, 
 the term $e^{-\lambda x}$ in the last line of formula \eqref{defUplus} grows as $x\to-\infty$ and thus  \eqref{evp} has an exponentially decaying at both $+\infty$ and $-\infty$ solution if an only if $\lambda$ is such that the matrix $W_+(\lambda)$ in \eqref{defUplus3} has a nonzero kernel,   that is, ${\det}_{d\times d}\big( \cosh(2\ell b)+\frac12(\lambda b)^{-1}(\lambda^2+b^2)\sinh(2\ell b)\big)=0$. As we will see in a second, the latter determinant plays the role of the Jost function. \hfill$\Diamond$\end{remark}
  
 To introduce the Jost function, we first define the  second Jost solution, $U_-(\cdot,\lambda)$,  to \eqref{evp} which is asymptotic to the decaying plane wave $e^{\lambda x}I_{d\times d}$ as $x\to-\infty$ provided $\re  \, (\lambda)>0$.
 Clearly, one can set $U_-(x,\lambda)=U_+(-x,\lambda)$ for all $x\in\RR$. On the other hand, if $\re  \, (\lambda)<0$ then $U_+(x,\lambda)$, respectively, $U_-(x,\lambda)$, exponentially grows as $x\to+\infty$, respectively, $x\to-\infty$.

 We now explain how to remove assumption \eqref{lambdasp} and show that the Jost solutions can be defined for all $\lambda\in\CC$ and are entire in $\lambda\in\CC$. Indeed, although the function $\lambda\mapsto b(\lambda)$ is not analytic (and is not even defined for $\lambda$ such that $\lambda^2\in\spec(a)$), the Jost solutions are expressed via the functions $b\mapsto\cosh((x-\ell)b)$, $b^{-1}\sinh((x-\ell)b)$ which are {\em even} in $b$. Thus, $U_+(x,\lambda)$ depends, in fact, not on $b(\lambda)$ but on $(b(\lambda))^2=\lambda^2I_{d\times d}-a$; the latter function is entire in $\lambda$.
 
 To give a more formal argument, we  recall that 
 \begin{equation}\label{defZ}
 \cosh Z = \sum_{n=0}^\infty\frac{(Z^2)^n}{(2n)!} \text{ and }
 Z^{-1}\sinh Z=\sum_{n=0}^\infty\frac{(Z^2)^n}{(2n+1)!}\end{equation}
 for any invertible matrix $Z\in\CC^{d\times d}$, where the series converge absolutely in the matrix norm. In particular, when $Z$ is not invertible, the right hand side of the second formula makes perfect sense and thus provides  the definition of the matrix $Z^{-1}\sinh Z$ in the left-hand side without assuming invertibility\footnote{If this is the case then the part of $Z^{-1}\sinh Z$ on $\ker (Z)$ is equal to the identity operator on the subspace.} of $Z$. Since the function $\lambda\mapsto (b(\lambda))^2=\lambda^2I_{d\times d}-a$ is entire, it thus follows that the functions $\lambda\mapsto\cosh(b(\lambda)(x-\ell))$,
 $\lambda\mapsto b^{-1}\sinh( 2\ell b(\lambda))$ and $\lambda\mapsto\lambda^{-1}\sinh(\lambda(x+\ell))$ in the right-hand side of \eqref{defUplus} for each fixed $x\in\RR$ could be continued from the set of $\lambda$ satisfying \eqref{lambdasp} to the whole of  $\CC$ as entire functions of $\lambda$ with values in $\CC^{d\times d}$. We conclude that the Jost solutions $U_\pm(x,\lambda)$ for each $x\in\RR$ are entire functions of $\lambda$.
 
 \begin{remark}\label{remb0}
 We now describe in more details changes to be made in formulas \eqref{defUplus}, \eqref{defUplus2} and \eqref{defUplus3} at the point $\lambda=\lambda_0$ such that $\lambda_0^2\in\spec(a)$, that is, when the matrix $b(\lambda_0)=\sqrt{\lambda^2_0I_{d\times d}-a}$ is not invertible. We let $P_0=P_0(\lambda_0)$ denote the Riesz projection for $b(\lambda_0)$ in $\CC^d$ onto the zero eigenvalue of $b(\lambda_0)$; in particular,
 $\ker(b(\lambda_0))=\ran(P_0)$. We denote by $b_0=b_0(\lambda_0)$ the restriction of $b(\lambda_0)$ onto $\ker (P_0)$  
  such that $b(\lambda_0)=b_0(I_{d\times d}-P_0)$ and view $b_0$ 
  is an invertible operator acting in the $\big(d-\dim(\ker(b(\lambda_0))\big)$-dimensional subspace  $\ker(P_0)$. Formulas \eqref{defZ} now show that 
  $\cosh(sb)=P_0+\cosh(sb_0)(I_{d\times d}-P_0)$ and $b^{-1}\sinh(sb)=sP_0+
  b^{-1}_0\sinh(sb_0)(I_{d\times d}-P_0)$ for any $s\in\RR$. Thus, formulas \eqref{defUplus}, \eqref{defUplus2} and \eqref{defUplus3} hold at $\lambda=\lambda_0$ provided the following changes are made: $\cosh((x-\ell)b)$ and $\cosh(2\ell b)$
 are replaced by $P_0
  +\cosh(b_0(x-\ell))(I_{d\times d}-P_0)$ and $P_0+\cosh(2\ell b_0)(I_{d\times d}-P_0)$ while
  $b^{-1}\sinh((x-\ell)b)$ and $b^{-1}\sinh(2\ell b)$ are replaced by
  $(x-\ell)P_0+b^{-1}_0\sinh(b_0(x-\ell))(I_{d\times d}-P_0)$ and $2\ell P_0+b^{-1}_0\sinh(2\ell b_0)(I_{d\times d}-P_0)$. In particular, if $d=1$ and $\lambda_0^2=a$ we set $P_0=I$ and $b_0=0$ so that $\cosh((x-\ell)b)$ and $\cosh(2\ell b)$ are replaced by $1$ while while
  $b^{-1}\sinh((x-\ell)b)$ and $b^{-1}\sinh(2\ell b)$ are replaced by
  $x-\ell$ and $2\ell$ respectively.
 \hfill$\Diamond$\end{remark}
Alternatively, as it is always done in scattering theory, see, e.g.,  \cite[Section 4]{GM}, one can introduce the Jost solutions $F_\pm$ as solutions to the Volterra equations
\begin{equation}\label{Volterra-Jost}
F_\pm(x,\lambda)=e^{\mp\lambda x}I_{d\times d}+\lambda^{-1}\int_{x}^{\pm\infty}\sinh{(\lambda(x-y))}V(y)F_{\pm}(y,\lambda)\rmd y,\, x\in\RR, \lambda\in\CC;
\end{equation}
here the potential $V$ is defined in \eqref{defV} and we note in passing that equation \eqref{Volterra-Jost} makes perfect sense at $\lambda=0$ by \eqref{defZ}.  The Volterra integral operator has a semi-separable kernel, and thus a large body of well-known results applies, cf.\  \cite{GGK1, GM}. In particular,
as shown in \cite[Section 4]{GM},  the integral kernel of the Volterra operators in \eqref{Volterra-Jost} satisfies \cite[Equations (2.5)-(2.6)]{GM} and so using \cite[Theorem 2.4]{GM} or, alternatively, \cite[Section IX.2]{GGK1} or \cite[Sections XIII5, XIII6]{GGK2},
we infer that equations \eqref{Volterra-Jost} have unique solutions. 
Differentiating twice, one can readily check that $F_\pm(\cdot,\lambda)$ satisfies equation \eqref{evp}. In addition, $F_\pm(x,\lambda)=e^{\mp\lambda x}I_d$, whenever $\pm x\geq\ell$. We conclude that, in fact, for any $\lambda\in\CC$ one has $F_\pm(\cdot,\lambda)=U_\pm(\cdot,\lambda)$ for the Jost function $U_+(\cdot,\lambda)$ as defined in \eqref{defUplus}. This property of the Jost solutions will be useful in proving \eqref{W-Jost-connection} and Lemma~\ref{l5.1}.

For any $\lambda\in\CC$ we now define a matrix $W(\lambda)$ (build analogously to the Wronskian) by 
\begin{equation}\label{defW}
W(\lambda)=
U_+(x,\lambda)\d_x (U_-(x,\lambda))-(\d_x U_+(x,\lambda)) U_-(x,\lambda),
\end{equation}
which is $x$-independent because $U_\pm(\cdot,\lambda)$ are solutions to \eqref{evp}. 
Assuming $\lambda^2\notin\spec(a)$, computing $W(\lambda)$ at $x=0$ and using $U_-(x,\lambda)=U_+(-x,\lambda)$ and $W(\lambda)=-2U_+(0,\lambda)(\d_xU_+)(0,\lambda)$ yields 
\begin{equation}\label{Wcomp}\begin{split}
W(\lambda)
=b^{-1}e^{-2\lambda\ell}\big(2\lambda b\cosh(2\ell b)+(b^2+\lambda^2)\sinh(2\ell b)\big)
\text{ for } \lambda^2\notin\spec(a);
\end{split}
\end{equation}
we used the second line in \eqref{defUplus}, its $x$-derivative at $x=0$ and  $b^2+\lambda^2I_{d\times d}=2\lambda^2I_{d\times d}-a$. 

As in Remark \ref{remb0}, formula \eqref{Wcomp} holds without assuming that $b$ is invertible provided $\cosh(2\ell b)$, respectively, $b^{-1}\sinh(2\ell b)$  is replaced 
by $P_0+\cosh(2\ell b_0)(I_{d\times d}-P_0)$, respectively, by $2\ell P_0+ (b_0)^{-1}\sinh(2\ell b_0)(I_{d\times d}-P_0)$. As a, result
 $\lambda\mapsto W(\lambda)$ extends to an entire matrix valued function and so is $\lambda\mapsto\det_{d\times d}W(\lambda)$. In particular, the set of $\lambda\in\CC$ where the matrix $W(\lambda)$ is not invertible can have an accumulation point only at infinity. Also, equation \eqref{Wcomp} shows that $W(\lambda)={2\lambda}W_+(\lambda)$ for the matrix $W_+(\lambda)$ from \eqref{defUplus3}
and so if $\lambda\neq0$ then ${\det}_{d\times d}W(\lambda)=0$ if an only if ${\det}_{d\times d}W_+(\lambda)=0$. 

Formula \eqref{Wcomp} can be also re-written as
\begin{equation}\label{Wcomp2}
W(\lambda)=2\lambda e^{-2\lambda\ell}\big(\cosh(2\ell b)+\frac12\big(\lambda^{-1}b+\lambda b^{-1}\big)\sinh(2\ell b)\big).
\end{equation}
In particular, $W(0)/2=b(0)\sinh(2\ell b(0))/2$, with $b(0)=\sqrt{-a}$, is the residue at $\lambda=0$ of the function $\lambda\mapsto(2\lambda)^{-1}W(\lambda)$ which is analytic in $\CC\setminus\{0\}$. We note that $0\in\spec(W(0))$ if and only if $\alpha:=(\pi n/2\ell)^2\in\spec(a)$ for some $n\in\ZZ$, cf.\ Lemma \ref{lzero} below.

We will call ${\mathcal F}(\lambda):=\det_{d\times d}(W(\lambda))$ the \textit{Jost function}. In particular, if $\re\lambda>0$ then $\lambda^2\in\spec_{\rm d}(A)$ if and only if $\cF(\lambda)=0$, cf.\ Remark \ref{remJsol}. If $a=0$ (and so $V=0$) then the respective matrix in \eqref{defW}  is equal to $2\lambda I_{d\times d}$. It is therefore natural for any $a\in\CC^{d\times d}$ to normalize the ``Wronskian'' $W(\lambda)$  further by dividing it by $2\lambda$. Consequently, quite often (and in the scalar case $d=1$) the Jost function is defined as $\det\big((2\lambda)^{-1}W(\lambda)\big)$, cf., e.g., \cite[Chapter 17]{CS} and see Theorem \ref{t5.2} below.

In the sequel we will use the following identity satisfied by the matrix-values function $W(\cdot)$ and the Jost solutions $U_\pm(\cdot,\cdot)$,
\begin{equation}\label{W-Jost-connection}
W(\lambda)=2\lambda I_{d\times d}-\int_{-\infty}^{+\infty}e^{\pm\lambda x}V(x)U_\pm(x,\lambda)\rmd x\;\mbox{for any}\;\lambda\in\CC;	
\end{equation}	
the integration here is in fact over $[-\ell,\ell]$. To prove \eqref{W-Jost-connection}, one can either evaluate the integral in the right hand side of \eqref{W-Jost-connection} using  \eqref{defUplus}  (a simple but somewhat tedious computation), or, alternatively, differentiate \eqref{Volterra-Jost} and then plug in $x=\pm\ell$ to \eqref{defW}.  

Recalling that $a\in\CC^{d\times d}$, we now define, for all $\lambda\in\CC$, the Greens function
\begin{equation}\label{defG}
G_\lambda(x,y)=\begin{cases}W(\lambda)^{-1}U_+(x,\lambda)U_-(y,\lambda), & \text{ for } x>y,\\ W(\lambda)^{-1}
U_+(y,\lambda)U_-(x,\lambda), &\text{ for } x<y,\end{cases}
\end{equation}
stressing that the matrices $W(\lambda)$, $U_\pm(x,\lambda)$ commute. Then, for each $x,y\in\RR$, the function $\lambda\mapsto G_\lambda(x,y)$ is meromorphic in $\CC$ with the poles at the points $\lambda$ where $\det_{d\times d}(W(\lambda))=0$.

If $\re  \,(\lambda)>0$ then the Jost solutions decay at infinity exponentially as we have mentioned above. This implies that the resolvent of $A$ at $\lambda^2$ for $\re  \,(\lambda)>0$ such that ${\mathcal F}(\lambda)\neq0$ is given by the formula 
\begin{align}
\big((&\lambda^2-A)^{-1}f\big)(x)=\int_{-\infty}^{+\infty}G_\lambda(x,y)f(y)\, dy\label{resA}\\
&=(W(\lambda))^{-1}U_-(x,\lambda)\int_{-\infty}^x U_+(y,\lambda)f(y)\,dy+
(W(\lambda))^{-1}U_+(x,\lambda)\int_{-\infty}^x U_-(y,\lambda)f(y)\,dy\nonumber
\end{align}
for each $f\in L^2(\RR; \CC^d)$. Indeed, to check that the integral operator with the kernel $G_\lambda(x,y)$ is bounded in $L^2(\RR; \CC^d)$ we rewrite \eqref{defUplus} as $U_+(x,\lambda)=c(x,\lambda,\ell)e^{-\lambda x}$ where  $c(x,\lambda,\ell)\in\CC^{d\times d}$ are such that 
$\|c(\cdot,\lambda,\ell)\|_{L^\infty(\RR)}<\infty$ for each $\lambda$ with $\re  (\lambda)>0$, ${\mathcal F}(\lambda)\neq0$. This, \eqref{defG}, and $U_-(x,\lambda)=U_+(-x,\lambda)$ imply $\|G_\lambda(x,y)\|\le c e^{-\re  \,(\lambda)|x-y|}$ for some computable constant $c>0$ and all $x,y\in\RR$. Using Young's inequality for convolutions, the integral operator in \eqref{resA} is bounded in $L^2(\RR; \CC^d)$.
To check that the sum of the  integrals in the second line of \eqref{resA} indeed gives the resolvent, we denote the right hand side of the equation by $g(x)$ and differentiate $g$ twice to check that $\partial_{xx}g+(V(x)-\lambda^2)g=f$ using that $U_\pm(\cdot,\lambda)$ solve \eqref{evp}.

Formula \eqref{resA} could be extended for $\re  \,(\lambda)\le 0$ in the following sense. As we have mentioned above, $U_\pm(\cdot,\lambda)$ are entire in $\lambda$ 
and thus $G_\lambda(x,y)$ is meromorphic in $\lambda$ in $\CC$ with the poles at zeros of ${\mathcal{F}}(\cdot)={\det}_{d\times d}W(\cdot)$.
For 
$\re  \,(\lambda)\le 0$ the Jost solutions $U_\pm(\cdot,\lambda)$ {\em do not} decay exponentially as $x\to\pm\infty$. As a result, there is no convergence of the integrals in \eqref{resA} for all $f\in L^2(\RR; \CC^d)$. However, the integrals do converge for $f\in L^2_{\textrm{comp}}(\RR;\CC)$, the compactly supported functions. Thus, the resolvent $\lambda\mapsto(\lambda^2-A)^{-1}$ admits a meromorphic extension to the entire complex plane $\CC$ as a bounded operator acting from $L^2_{\textrm{comp}}(\RR; \CC^d)$ into 
$H^2_{\loc}(\RR; \CC^d)$, the space of functions that are locally in the Sobolev space. The poles of the resolvent are the zeros of ${\mathcal{F}}(\cdot)={\det}_{d\times d}W(\cdot)$, that is, the {\em resonances} of $A$ (here, and everywhere in the paper, we follow the terminology of \cite{DZ}). We note that $\lambda^2$ is an eigenvalue of $A$ provided $\lambda$ is a pole of resolvent $\lambda\mapsto(\lambda^2-A)^{-1}$ located in the half-plane of $\lambda\in\CC$ such that $\re  \,(\lambda)>0$. It is possible  that a pole of the meromorphic extension of the resolvent $\lambda\mapsto(\lambda^2-A)^{-1}$ is located in the half-plane of $\lambda\in\CC$ such that $\re  \,(\lambda)<0$ and is such that $\lambda^2$ is not necessarily an eigenvalue of $A$. 

It is therefore important to describe the zeros of ${\mathcal{F}}(\cdot)$ located in both half planes; this is our next objective. To this end, we will need two formulas for $W(\lambda)$.  The first formula is obtained by re-writing $W(\lambda)$ in \eqref{Wcomp} as the following product,
\begin{equation}\label{WprodS}
W(\lambda)=2e^{-2\lambda\ell}\big(\lambda\cosh(\ell b)+b\sinh(\ell b)\big)
\big(\cosh(\ell b)+\lambda b^{-1}\sinh(\ell b)\big) \text{ for $\lambda^2\notin\spec(a)$}.
\end{equation}
Acting as in Remark \ref{remb0}, we also compute
\begin{equation}\label{WprodS0}
W(\lambda_0)=2\lambda_0 e^{-2\lambda_0\ell}(1+\lambda_0\ell)P_0+W_0(\lambda_0)(I_{d\times d}-P_0) \text{ for $\lambda_0^2\in\spec(a)$;}
\end{equation}
here and in what follows we denote by $W_0(\lambda_0)$ the operator acting in the subspace $\ker(P_0)$ and obtained by replacing $b$ in \eqref{WprodS} by $b_0$ and $\lambda$ by $\lambda_0$. Formula \eqref{WprodS} shows that the matrix $W(\lambda)$ is not invertible, that is, $\lambda$ is a zero of the Jost function $\cF(\cdot)$, if and only if one of the following two spectral inclusions holds, provided $\lambda^2\notin\spec(a)$,
\begin{align}\label{WprodS1a}
0&\in\spec\big(\lambda\cosh(\ell b)+b\sinh(\ell b)\big),\\
0&\in\spec\big(\cosh(\ell b)+\lambda b^{-1}\sinh(\ell b)\big). \label{WprodS1b}
\end{align}
Formula \eqref{WprodS0} shows that if $\lambda_0^2\in\spec(a)$, that is, if $0\in\spec(b(\lambda_0))$, then $\lambda_0$ is a zero of $\cF(\cdot)$ if and only if one of the following three spectral inclusions hold,
\begin{align}\label{WprodS10a}
0&\in\spec_{\ker(P_0)}\big(\lambda_0\cosh(\ell b_0)+b_0\sinh(\ell b_0)\big),\\
0&\in\spec_{\ker(P_0)}\big(\cosh(\ell b_0)+\lambda_0 b_0^{-1}\sinh(\ell b_0)\big), \, \text{ or } 1+\lambda_0\ell=0. \label{WprodS10b}
\end{align}
In formulas \eqref{WprodS10a}, \eqref{WprodS10b} the spectrum is computed in the subspace $\ker(P_0)$ of the dimension $d_0=d-\dim\big(\ker(b(\lambda_0))\big)$; we recall that $P_0$ is the Riesz projection onto the zero eigenvalue of the $d\times d$ matrix $b(\lambda_0)=\sqrt{\lambda_0^2-a}$, and the $d_0\times d_0$ invertible matrix $b_0=b(\lambda_0)\big|_{\ker(P_0)}$ is the compression of $b$ onto $\ker(P_0)$.

To derive the second formula for $W(\lambda)$, an elementary calculation with $\cosh$ and $\sinh$ shows that \eqref{WprodS} can be re-written as
\begin{equation}\label{WprodF}
W(\lambda)=(2b)^{-1}\big((b+\lambda)e^{\ell(b-\lambda)}-(b-\lambda)e^{-\ell(b+\lambda)}\big)\big((b+\lambda)e^{\ell(b-\lambda)}+(b-\lambda)e^{-\ell(b+\lambda)}\big),
\end{equation}
for $\lambda^2\notin\spec(a)$, while if $\lambda_0^2\in\spec(a)$ then $W_0(\lambda_0)$ in \eqref{WprodS0} can be re-written as 
\begin{equation}\label{WprodF0}\begin{split}
W_0(\lambda_0)&=(2b_0)^{-1}\big((b_0+\lambda_0)e^{\ell(b_0-\lambda_0)}-(b_0-\lambda_0)
e^{-\ell(b_0+\lambda_0)}\big)\\ &\qquad \qquad \times\big((b_0+\lambda_0)e^{\ell(b_0-\lambda_0)}+(b_0-\lambda_0)
e^{-\ell(b_0+\lambda_0)}\big).\end{split}
\end{equation}
Formulas for $W(\lambda)$ and $W_0(\lambda_0)$ from \eqref{WprodF} and \eqref{WprodF0} can be further simplified as follows.
Using that $a$ is invertible and that $a=(\lambda-b)(\lambda+b)$, we conclude that both matrices 
 $\lambda+ b$  and $\lambda-b$ are invertible and that $(b-\lambda)(b+\lambda)^{-1}=-a^{-1}(b-\lambda)^2$.
 We can now re-write $W(\lambda)$ from  \eqref{WprodF} as follows, assuming $\lambda^2\notin\spec(a)$,
\begin{equation}\label{Wprod21}
W(\lambda)=(2b)^{-1}e^{-2\ell(b+\lambda)}(b-\lambda)^{-2}\big(\big( (-a)^{1/2}e^{\ell b}\big)^2-(b-\lambda)^2\big)\big(\big( (a)^{1/2}e^{\ell b}\big)^2-(b-\lambda)^2\big).
\end{equation}
A similar formula with $b$ replaced by $b_0$ and $\lambda$ by $\lambda_0$ holds true for $W_0(\lambda_0)$ from \eqref{WprodF0}. 

As a result, we may use \eqref{Wprod21} to conclude that $W(\lambda)$
 is not invertible if and only if $\lambda$ is a solution of one of the following four equations,
\begin{equation}\label{FOUR}
{\det}_{d\times d}\big( (\pm a)^{1/2}e^{\ell b}\pm(b-\lambda)\big)=0, \text{ provided }\lambda^2\notin\spec(a), 
\end{equation}
or, alternatively, $\lambda$ satisfies one of the following four spectral inclusions,
\begin{equation}\label{FOUR1}
0\in\spec\big( (\pm a)^{1/2}e^{\ell b}\pm(b-\lambda)\big), 
\text{ provided }\lambda^2\notin\spec(a), 
\end{equation}
while for $\lambda_0$ such that $\lambda_0^2\in\spec(a)$ we conclude that 
 $W(\lambda_0)$ is not invertible if and only if one of the following five assertions hold,
\begin{equation}\label{FOUR2}
0\in\spec\big((\pm a)^{1/2}e^{\ell b_0}(I_{d\times d}-P_0)\pm(b_0-\lambda_0)(I_{d\times d}-P_0)\big) \,  \text{ or } (1+\lambda_0\ell)=0.
\end{equation}
In each of the equations in \eqref{FOUR}, \eqref{FOUR1}  and \eqref{FOUR2}
pluses and minuses are allowed to change independently. 

\begin{example}\label{ex:svc} The representations \eqref{Wprod21} could be particularly useful in the scalar $d=1$ case when invertibility of $a$ is equivalent to $a\neq0$ and of $b$ is equivalent to $\lambda^2\neq a$ for the given complex number $a\in\CC$. Under this condition $W(\lambda)=0$ is and only if $\lambda$ satisfies one of the following four equations, 
\begin{equation}\label{sceq}
\sqrt{-a}e^{\ell b(\lambda)}=\pm(b(\lambda)-\lambda),
\sqrt{a}e^{\ell b(\lambda)}=\pm(b(\lambda)-\lambda), \text{ where } b(\lambda)=\sqrt{\lambda^2-a},
\end{equation} which is a particular case of \eqref{FOUR}.\hfill$\Diamond$
\end{example}

We continue the general discussion when $a\in\CC^{d\times d}$, $d\ge1$, and offer yet another way of finding equations for the roots of $\cF(\lambda):=\det(W(\lambda))$ based on the the spectral mapping theorem saying that $\spec(f(Z))=f(\spec(Z))$ 
for a continuous on the spectrum of $Z$ function $f$ and an arbitrary matrix $Z$.  
Indeed, we may use the spectral mapping theorem for  
the matrix $Z =\ell b$ to conclude from \eqref{WprodS1a}, \eqref{WprodS1b} that 
$\lambda\in\CC$ such that $ \lambda^2\notin\spec(a)$ is a root of the Jost function $\cF(\cdot)$ (that is, $\lambda$ is a resonance) if and only if there is an eigenvalue $\alpha\in\spec(a)$ such that $\lambda$ satisfies one of the following two equations,
\begin{align}\label{Wa}
&\lambda\cosh\big(\ell\sqrt{\lambda^2-\alpha}\big)+(\sqrt{\lambda^2-\alpha})\sinh\big(\ell\sqrt{\lambda^2-\alpha}\big)=0,\\
&\cosh\big(\ell\sqrt{\lambda^2-\alpha}\big)+\lambda(\sqrt{\lambda^2-\alpha})^{-1}\sinh\big(\ell\sqrt{\lambda^2-\alpha}\big)=0.
\label{Wb}
\end{align}
Analogously, by \eqref{WprodS10a}, \eqref{WprodS10b},
if $\lambda_0^2\in\spec(a)$ then $\cF(\lambda_0)=0$ (that is, $\lambda_0$ is a resonance)  if and only if
$1+\lambda_0\ell=0$ or there is an $\alpha\in\spec(a)$ such that one of the equations \eqref{Wa}, \eqref{Wb} holds with $\lambda=\lambda_0$.
The same reasoning applied in the spectral inclusions \eqref{FOUR1}, \eqref{FOUR2} shows that $\cF(\lambda)=0$ for any $\lambda\in\CC$ if and only there is a point $\alpha\in\spec(a)$ such that $\lambda$ satisfies one of the following four equations, 
\begin{equation}\label{FOUR3}\begin{split}
&(\textrm{a})\quad \sqrt{\alpha}e^{\ell\sqrt{\lambda^2-\alpha}}=\sqrt{\lambda^2-\alpha}-\lambda,\quad \quad
(\textrm{b})\quad
 \sqrt{\alpha}e^{\ell\sqrt{\lambda^2-\alpha}}=-\sqrt{\lambda^2-\alpha}+\lambda,\\
&(\textrm{c})\quad \sqrt{-\alpha}e^{\ell\sqrt{\lambda^2-\alpha}}=\sqrt{\lambda^2-\alpha}-\lambda,
\quad  (\textrm{d})\quad \sqrt{-\alpha}e^{\ell\sqrt{\lambda^2-\alpha}}=-\sqrt{\lambda^2-\alpha}+\lambda,\end{split}
\end{equation}
or the equation $1+\lambda_0\ell=0$ (when $\lambda^2_0\in\spec(a)$).

Thus, the resonances of the operator $A$ with the matrix valued potential are described as roots of the scalar equations \eqref{Wa}, \eqref{Wb} or \eqref{FOUR3}, that is, the matrix valued case for $a$ has been essentially reduced to the scalar valued case for $\alpha\in\spec(a)$, cf.\ Example \ref{ex:svc}.
We will use the following terminology and notation.

\begin{definition}\label{def:gener} We say that an eigenvalue  $\alpha\in\spec(a)$ {\em generates} a resonance $\lambda=\lambda(\alpha)$ or that $\lambda$ is an {\em $\alpha$-resonance} if 
$\lambda$ satisfies one of the equations \eqref{Wa}, \eqref{Wb} or \eqref{FOUR3}. The set of $\alpha$-resonances is denoted by $\CalR_\alpha$. The set of all resonances of $A$ is denoted by $\CalR(A)$ so that $\CalR(A)=\cup_{\alpha\in\spec(a)}\CalR_\alpha$. \end{definition}

We denote $A_\alpha=\partial_{xx}+\alpha\chi_{[-\ell,\ell]}$, where $\chi_{[-\ell,\ell]}$ is the characteristic function.
In particular, it follows from the previous discussion that $\CalR_\alpha=\CalR(A_\alpha)$, the set of resonances of the operator $A_\alpha$, because $\lambda\in\CalR(A_\alpha)$ if and only if $\lambda$ is a solution of one of the equations \eqref{FOUR3} or the equation $1+\lambda_0\ell=0$ (when $\lambda_0^2=\alpha$). 

We will now discuss several general properties of the equations \eqref{Wa}, \eqref{Wb} starting with two special values of $\lambda$.
\begin{lemma}\label{lzero} Assume $0\notin\spec(a)$. Then $\lambda=0$ is a root of \eqref{Wa}, respectively, \eqref{Wb} with some $\alpha\in\spec(a)$ if and only if 
$\alpha:=(\pi n/\ell)^2\in\spec(a)$ for some $n\in\ZZ\setminus\{0\}$, respectively, 
$\alpha:=(\pi/(2\ell)+\pi n/\ell)^2\in\spec(a)$ for some $n\in\ZZ$. If this is the case, then $\lambda=0$ is a simple root of the respective equation.
\end{lemma}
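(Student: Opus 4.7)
My plan is to exploit the even-in-$b$ structure already highlighted in the excerpt: although $\sqrt{\lambda^2-\alpha}$ involves a branch choice, the functions
\[
C(z):=\cosh(\ell\sqrt{z}),\qquad \Sigma(z):=\frac{\sinh(\ell\sqrt{z})}{\ell\sqrt{z}}
\]
are entire in $z$ (by the series \eqref{defZ}). Writing $u=\sqrt{\lambda^2-\alpha}$, I note that $u\sinh(\ell u)=\ell u^2\cdot(\sinh(\ell u)/(\ell u))$ is even in $u$. Hence the left-hand sides of \eqref{Wa} and \eqref{Wb} can be recast as the entire functions of $\lambda$
\[
f(\lambda):=\lambda\,C(\lambda^2-\alpha)+\ell(\lambda^2-\alpha)\,\Sigma(\lambda^2-\alpha),\qquad
g(\lambda):=C(\lambda^2-\alpha)+\ell\lambda\,\Sigma(\lambda^2-\alpha).
\]
In particular the choice of branch of the square root is irrelevant, and derivatives at $\lambda=0$ make unambiguous sense.

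Next, I would simply evaluate at $\lambda=0$. One has $f(0)=-\ell\alpha\,\Sigma(-\alpha)$ and $g(0)=C(-\alpha)$. Since the standing hypothesis $0\notin\spec(a)$ gives $\alpha\neq0$, the equation $f(0)=0$ reduces to $\sinh(\ell\sqrt{-\alpha})=0$, equivalently $\ell\sqrt{-\alpha}=\rmi\pi n$ with $n\in\ZZ\setminus\{0\}$, i.e.\ $\alpha=(\pi n/\ell)^2$ with $n\neq0$. Similarly $g(0)=0$ reduces to $\cosh(\ell\sqrt{-\alpha})=0$, equivalently $\ell\sqrt{-\alpha}=\rmi(\pi/2+\pi n)$ with $n\in\ZZ$, i.e.\ $\alpha=(\pi/(2\ell)+\pi n/\ell)^2$. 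This takes care of the ``if and only if'' statements.

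For simplicity of the root $\lambda=0$, I would differentiate the entire representations above. Straightforward differentiation gives
\[
f'(0)=C(-\alpha)=\cosh(\ell\sqrt{-\alpha}),\qquad
g'(0)=\ell\,\Sigma(-\alpha)=\frac{\sinh(\ell\sqrt{-\alpha})}{\sqrt{-\alpha}}.
\]
Substituting the values of $\ell\sqrt{-\alpha}$ obtained in the two respective cases yields $f'(0)=\cosh(\rmi\pi n)=(-1)^n\neq0$ and $g'(0)=\rmi(-1)^n/\sqrt{-\alpha}\neq0$, so in each case $\lambda=0$ is a simple zero.

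There is no real obstacle here; the only subtlety is verifying that the branch ambiguity in $\sqrt{\lambda^2-\alpha}$ does not contaminate the derivative computation, which is exactly the point of passing through $C$ and $\Sigma$. Everything else is a one-line substitution and a one-line derivative.
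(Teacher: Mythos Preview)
Your proof is correct and follows essentially the same approach as the paper's: substitute $\lambda=0$ to obtain the $\sinh$/$\cosh$ vanishing conditions on $\ell\sqrt{-\alpha}$, then check that the $\lambda$-derivative at $0$ is nonzero. The only difference is one of presentation: you make explicit the passage to the entire functions $C$ and $\Sigma$ to justify differentiating at $\lambda=0$ without worrying about the branch of $\sqrt{\lambda^2-\alpha}$, whereas the paper simply asserts that ``the $\lambda$-derivative of the left hand side of \eqref{Wa} is not zero at $\lambda=0$'' and leaves the even-in-$b$ regularity (already discussed near \eqref{defZ}) implicit.
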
 

\begin{proof} Setting $\lambda=0$ in \eqref{Wa} yields $\sqrt{-\alpha}\sinh(\ell\sqrt{-\alpha} )=0$. Since ${\alpha}\neq0$ by the assumption $0\notin\spec(a)$, we have $\alpha=(\pi n/\ell)^2\in\spec(a)$ for some $n\in\ZZ\setminus\{0\}$ as required. Conversely, if 
$\alpha:=(\pi n/\ell)^2\in\spec(a)$ for some $n\in\ZZ\setminus\{0\}$ then $\sinh(\ell\sqrt{-\alpha} )=0$ and so $\lambda=0$ solves \eqref{Wa} since $\cosh(\ell\sqrt{-\alpha} )\neq0$ with this choice of $\alpha$. If  $\alpha=(\pi n/\ell)^2\in\spec(a)$ for some $n\in\ZZ\setminus\{0\}$ then the $\lambda$-derivative of the left hand side of \eqref{Wa} is not zero at $\lambda=0$ and so the root is simple. We deal with \eqref{Wb} analogously.
\end{proof}
The second special case is when $\lambda^2\in\spec(a)$.
\begin{lemma}\label{l1l} Assume  $0\notin\spec(a)$ and $\lambda^2\in\spec(a)$. Then \eqref{Wa} does not hold with $\alpha:=\lambda^2$ while
\eqref{Wb} holds with $\alpha:=\lambda^2$ if and only if $\lambda=-1/\ell$.
In particular, if $1/\ell^2\in\spec(a)$ then $\lambda=-1/\ell$ is a simple root of the function in the left hand side of \eqref{Wb}.\end{lemma}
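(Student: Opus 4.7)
The plan is to exploit the fact that, although equations \eqref{Wa} and \eqref{Wb} appear singular at $\lambda^2=\alpha$, the quantities $\cosh(\ell b)$ and $b^{-1}\sinh(\ell b)$ with $b=\sqrt{\lambda^2-\alpha}$ are entire in $\lambda$ by the series formulas \eqref{defZ}. Writing $y:=\ell\sqrt{\lambda^2-\alpha}$ and introducing the entire functions
\begin{equation*}
C(w) := \sum_{n\ge0} \frac{w^n}{(2n)!}, \qquad S(w) := \sum_{n\ge0} \frac{w^n}{(2n+1)!},
\end{equation*}
I would reformulate \eqref{Wa} and \eqref{Wb} respectively as
\begin{equation*}
\lambda\, C(y^2) + \ell^{-1} y^2\, S(y^2) = 0 \quad \text{and} \quad C(y^2) + \lambda\ell\, S(y^2) = 0,
\end{equation*}
both of which are entire in $\lambda$ and coincide with the original equations wherever the latter are non-singular.

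For the first assertion, I would set $\alpha=\lambda^2$, so $y=0$; the first reformulation collapses to $\lambda=0$, which is impossible under the hypotheses $\lambda^2\in\spec(a)$ and $0\notin\spec(a)$. For the second assertion, the same substitution $y=0$ in the second reformulation yields $1+\lambda\ell=0$, i.e.\ $\lambda=-1/\ell$; conversely, if one takes $\lambda=-1/\ell$ (so that $\lambda^2=1/\ell^2$ belongs to $\spec(a)$ by hypothesis), then $y=0$ and the equation is satisfied.

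For the simplicity claim I would regard $f(\lambda) := C(y^2(\lambda)) + \lambda\ell\, S(y^2(\lambda))$, with $y^2(\lambda)=\ell^2\lambda^2-1$ (taking $\alpha=1/\ell^2$), as an entire scalar function of $\lambda$ and compute
\begin{equation*}
f'(\lambda) = 2\ell^2\lambda\, C'(y^2(\lambda)) + \ell\, S(y^2(\lambda)) + 2\ell^3\lambda^2\, S'(y^2(\lambda)).
\end{equation*}
At $\lambda=-1/\ell$ one has $y^2=0$, and $C'(0)=1/2$, $S(0)=1$, $S'(0)=1/6$, so the three terms evaluate to $-\ell$, $\ell$, and $\ell/3$, giving $f'(-1/\ell)=\ell/3\neq 0$. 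This shows that $\lambda=-1/\ell$ is a simple zero of $f$.

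The only real subtlety I anticipate is the correct handling of the removable singularity at $b=0$; once one commits to the entire-series interpretation already used in \eqref{defZ} and Remark~\ref{remb0}, the rest is a direct power-series computation with no further obstacle.
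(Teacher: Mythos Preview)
Your argument is correct and, for the first two assertions, essentially identical to the paper's: both evaluate the entire extensions of the left-hand sides of \eqref{Wa} and \eqref{Wb} at $b=0$ via the series \eqref{defZ}. For the simplicity claim your route differs slightly: the paper substitutes $\nu=\sqrt{(\ell\lambda)^2-1}$, writes $f(\lambda)=g(\nu)$ with $g(\nu)=\cosh\nu-\sqrt{\nu^2+1}\,\nu^{-1}\sinh\nu$, and Taylor-expands $g$ at $\nu=0$ to find a double zero in $\nu$ (hence a simple zero in $\lambda$, since $\nu^2$ is linear in $\lambda$ near $-1/\ell$). Your direct differentiation of $f$ in the variable $\lambda$ avoids that change of variables and the attendant branch-point bookkeeping, and is arguably the cleaner computation; both yield the same nonzero derivative $f'(-1/\ell)=\ell/3$.
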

\begin{proof} Setting $\lambda^2=\alpha$ in \eqref{Wa} yields $\lambda=0$ which contradicts the assumptions $0\notin\spec(a)$  and $\lambda^2\in\spec(a)$, thus finishing the proof for \eqref{Wa}. To deal with \eqref{Wb}, by continuity at $z=0$ of the function $z^{-1}\sinh z$, setting $\lambda^2=\alpha$ in \eqref{Wb} yields $1+\lambda\ell=0$ as required. Conversely, if $\lambda=-1/\ell$ and $\lambda^2\in\spec(a)$ by the assumption,  then $\alpha:=\lambda^2\in\spec(a)$ as required. To see the last assertion in the lemma, we introduce a new variable $\nu$ related to $\lambda$ via $\nu=\sqrt{(\ell\lambda)^2-1}$ and $\lambda=-(1/\ell)\sqrt{\nu^2+1}$ and re-write the function $f(\lambda)$ in the left hand side of \eqref{Wb} as $f(\lambda)=g(\nu(\lambda))$ where $g(\nu)=\cosh \nu-\sqrt{\nu^2+1}\sinh \nu/\nu$. Computing  the Taylor series for $g$ at $\nu=0$ up to $\nu^2$ yields $f(-1/\ell)=0$ but $f'(-1/\ell)\neq0$.
\end{proof}

It is sometimes convenient to re-write equations \eqref{Wa}, \eqref{Wb} using $\tanh$  (assuming that $\lambda$ is not as discussed in the two special cases above).
\begin{lemma}\label{tcl}
Assume that $0\notin\spec(a)$, $\lambda\neq0$ and $\lambda^2\notin\spec(a)$. Then $\lambda$ is a solution to \eqref{Wa} with some $\alpha\in\spec(a)$ if and only if  conditions 
\begin{equation}\label{wwab}
\cosh\big(\ell\sqrt{\lambda^2-\alpha}\big)\neq0 \text{ and } \sinh\big(\ell\sqrt{\lambda^2-\alpha}\big)\neq0
\end{equation}
hold and $\lambda$ solves the equation
\begin{equation}\label{wwa} (\sqrt{\lambda^2-\alpha})\lambda^{-1}\tanh\big(\ell\sqrt{\lambda^2-\alpha}\big)+1=0
\end{equation}
 while $\lambda$ is a solution to \eqref{Wb} with some $\alpha\in\spec(a)$ if and only if
conditions \eqref{wwab} hold and $\lambda$ solves the equation
\begin{equation}\label{wwb}  (\sqrt{\lambda^2-\alpha})^{-1}\lambda \tanh\big(\ell\sqrt{\lambda^2-\alpha}\big)+1=0.
\end{equation}
In other words, $\lambda\in\CC$ is a resonance, $\cF(\lambda)=0$, if and only if \eqref{wwa} or \eqref{wwb} are satisfied.
\end{lemma}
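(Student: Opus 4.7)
The plan is a straightforward algebraic manipulation: show that each of \eqref{Wa}, \eqref{Wb} is equivalent (under the stated hypotheses) to the corresponding equation in $\tanh$ form by dividing by $\lambda\cosh(\ell\sqrt{\lambda^2-\alpha})$, respectively $\cosh(\ell\sqrt{\lambda^2-\alpha})$. The hypotheses $\lambda\neq 0$ and $\lambda^2\notin\spec(a)$ already guarantee $\lambda$ and $\sqrt{\lambda^2-\alpha}$ are both nonzero, so the only real content is verifying the non-vanishing statement \eqref{wwab}.

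For the forward direction for \eqref{Wa}, suppose $\lambda$ solves \eqref{Wa}. I would first rule out $\cosh(\ell\sqrt{\lambda^2-\alpha})=0$ and $\sinh(\ell\sqrt{\lambda^2-\alpha})=0$ one at a time, noting that these cannot both vanish because of the identity $\cosh^2 z-\sinh^2 z=1$. If only $\cosh$ vanishes, \eqref{Wa} collapses to $\sqrt{\lambda^2-\alpha}\,\sinh(\ell\sqrt{\lambda^2-\alpha})=0$, contradicting $\sqrt{\lambda^2-\alpha}\neq 0$ together with $\sinh\neq 0$. If only $\sinh$ vanishes, \eqref{Wa} reduces to $\lambda\cosh(\ell\sqrt{\lambda^2-\alpha})=0$, contradicting $\lambda\neq 0$. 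Hence \eqref{wwab} holds, and dividing \eqref{Wa} through by $\lambda\cosh(\ell\sqrt{\lambda^2-\alpha})$ yields \eqref{wwa}. Conversely, once \eqref{wwab} is known, multiplying \eqref{wwa} by $\lambda\cosh(\ell\sqrt{\lambda^2-\alpha})$ recovers \eqref{Wa}.

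The argument for \eqref{Wb} is entirely parallel. If $\cosh$ alone vanishes, \eqref{Wb} forces $\lambda(\sqrt{\lambda^2-\alpha})^{-1}\sinh(\ell\sqrt{\lambda^2-\alpha})=0$, a contradiction; and if $\sinh$ alone vanishes, \eqref{Wb} forces $\cosh(\ell\sqrt{\lambda^2-\alpha})=0$, also a contradiction. Dividing \eqref{Wb} by $\cosh(\ell\sqrt{\lambda^2-\alpha})$ yields \eqref{wwb}, and the reverse direction is again by multiplication.

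The concluding assertion that $\lambda\in\CC$ is a resonance iff \eqref{wwa} or \eqref{wwb} holds then follows by combining this equivalence with the description of $\CalR(A)$ as the union over $\alpha\in\spec(a)$ of solutions to \eqref{Wa} or \eqref{Wb} (discussed immediately after Definition \ref{def:gener}); the excluded cases $\lambda=0$ and $\lambda^2\in\spec(a)$ are already treated in Lemmas \ref{lzero} and \ref{l1l}, so no resonances are lost. I do not anticipate any genuine obstacle here — the proof is entirely bookkeeping around the identity $\cosh^2-\sinh^2=1$ together with a careful check that every quantity used as a divisor is known to be nonzero.
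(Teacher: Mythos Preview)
Your proof is correct and follows essentially the same approach as the paper: rule out the vanishing of $\cosh$ and $\sinh$ separately by reading off the resulting contradictions from \eqref{Wa} and \eqref{Wb}, then divide through to obtain the $\tanh$ form, with the converse by multiplication. The only cosmetic difference is that you invoke $\cosh^2 z-\sinh^2 z=1$ to see that $\cosh$ and $\sinh$ cannot vanish simultaneously, whereas the paper notes that their zero sets $\rmi(\pi/2+\pi\ZZ)$ and $\rmi\pi\ZZ$ are disjoint; this is immaterial.
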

\begin{proof} We recall that $\cosh z=0$, respectively, $\sinh z=0$ if and only if $z=\rmi(\pi/2+\pi n)$, respectively,  $z=\rmi\pi n$, for some $n\in\ZZ$. Suppose that $\lambda$ solves \eqref{Wa} but $\cosh\big(\ell\sqrt{\lambda^2-\alpha}\big)=0$. Then $\sinh\big(\ell\sqrt{\lambda^2-\alpha}\big)\neq0$ and thus \eqref{Wa} yields $\sqrt{\lambda^2-\alpha}=0$ contradicting the assumption $\lambda^2\notin\spec(a)$ in the lemma. Suppose that $\lambda$ solves \eqref{Wa} but $\sinh\big(\ell\sqrt{\lambda^2-\alpha}\big)=0$. Then $\cosh\big(\ell\sqrt{\lambda^2-\alpha}\big)\neq0$ and thus \eqref{Wa} yields $\lambda=0$ contradicting the assumption $\lambda\neq0$ in the lemma.
Thus, conditions \eqref{wwa} do hold. Using this and the assumption $\lambda\neq0$ in the lemma, we derive from \eqref{Wa} assertion \eqref{wwa}. Conversely, if \eqref{wwab} and \eqref{wwa} hold then \eqref{Wa} follows using the assumptions $\lambda\neq0$ and $\lambda^2\notin\spec(a)$ in the lemma. The proof for \eqref{wwab}, \eqref{Wb} and \eqref{wwb} is analogous.
\end{proof}

\section{Resonances generated by complex eigenvalues}\label{Complex-Resonances}
As we have mentioned above, the values of $\lambda$ such that ${\mathcal{F}}(\lambda)={\det}_{d\times d}W(\lambda)=0$ are the poles of the Green's function $\lambda\mapsto G_\lambda$ and therefore the poles of the meromorphic extension of the resolvent of $A$, that is, the {\em resonances}, cf., e.g., \cite{DZ}. In this section we study the roots of equations \eqref{FOUR3} for the general case of a complex eigenvalue $\alpha\in\spec(a)$. First, we make a change of variables to simplify the equations. Equations \eqref{FOUR3} (a)-(b), respectively,  \eqref{FOUR3} (c)-(d), are equivalent to, respectively,
\begin{align}\label{equiv-ab}
\alpha e^{2\ell\sqrt{\lambda^2-\alpha}}&=(\lambda-\sqrt{\lambda^2-\alpha})^2,\\
\label{equiv-cd}
-\alpha e^{2\ell\sqrt{\lambda^2-\alpha}}&=(\lambda-\sqrt{\lambda^2-\alpha})^2.
\end{align}
We make the change of variables $\lambda=\sqrt{\alpha}\cosh{z}$ in equations \eqref{equiv-ab}-\eqref{equiv-cd}. A simple computation shows that 
$\sqrt{\lambda^2-\alpha}=\sqrt{\alpha}\sinh{z}$ if $\re(\sqrt{\alpha}\sinh{z})\geq0$ and  $\sqrt{\lambda^2-\alpha}=-\sqrt{\alpha}\sinh{z}$ if  $\re(\sqrt{\alpha}\sinh{z})<0$, respectively, and so \eqref{equiv-ab}
becomes $e^{2\ell\sqrt{\alpha}\sinh{z}}=e^{-2z}$ and
$e^{-2\ell\sqrt{\alpha}\sinh{z}}=e^{2z}$, respectively.
Since the last two equations are equivalent, equation \eqref{equiv-ab} is equivalent to 
\begin{equation}\label{res-ab} 
\lambda=\sqrt{\alpha}\cosh{z}\;\mbox{ and }\;\ell\sqrt{\alpha}\sinh{z}+z\in\rmi\pi\ZZ.
\end{equation}
Similarly, equation \eqref{equiv-cd} is equivalent to 
\begin{equation}\label{res-cd} 
\lambda=\sqrt{\alpha}\cosh{z}\;\mbox{and}\;\ell\sqrt{\alpha}\sinh{z}+z\in\rmi\pi\ZZ+\rmi{\pi}/{2}.
\end{equation}
We are now ready to present the first main result of this section.
\begin{theorem}\label{t3.1} Assume that $0\notin\spec(a)$ and let $\alpha\in\CC$ be an eigenvalue of $a$. Then there are infinitely many resonances generated by $\alpha$.
\end{theorem}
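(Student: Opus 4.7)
\emph{Plan.} I would work entirely with the reformulation in \eqref{res-ab}: $\lambda$ is an $\alpha$-resonance whenever there exist $z \in \CC$ and $n \in \ZZ$ such that
\[
\Phi(z) := \ell\sqrt{\alpha}\sinh z + z = \rmi\pi n, \qquad \lambda = \sqrt{\alpha}\cosh z.
\]
(Equation \eqref{res-cd} could be used just as well.) Since $\Phi$ is entire and grows exponentially as $\re z \to +\infty$, the plan is to locate, for each integer $n$ of large modulus, a unique solution $z_n$ of $\Phi(z) = \rmi\pi n$ by perturbing the dominant-balance guess
\[
z_n^\star := \log\bigl(2\rmi\pi n/(\ell\sqrt{\alpha})\bigr),
\]
which satisfies $\tfrac{1}{2}\ell\sqrt{\alpha}\,e^{z_n^\star} = \rmi\pi n$ exactly, and for which $\re z_n^\star = \log|n| + O(1) \to +\infty$ as $|n|\to\infty$.

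The core step is a Rouch\'e comparison on the circle $|z - z_n^\star| = 1/2$. Writing
\[
\Phi(z) - \rmi\pi n = \bigl(\tfrac{1}{2}\ell\sqrt{\alpha}\,e^z - \rmi\pi n\bigr) + \bigl(-\tfrac{1}{2}\ell\sqrt{\alpha}\,e^{-z} + z\bigr),
\]
the first bracket has a unique simple zero at $z_n^\star$, and on the Rouch\'e circle its modulus is bounded below by an absolute multiple of $|n|$. Indeed, with $\eta = z - z_n^\star$, the identity $e^z - e^{z_n^\star} = e^{z_n^\star}(e^\eta - 1)$ combined with the elementary estimate $|e^\eta - 1| \geq |\eta|/2$ for $|\eta| \leq 1/2$ gives $|e^z - e^{z_n^\star}| \geq \tfrac{1}{4}|e^{z_n^\star}|$, while $|e^{z_n^\star}| = 2\pi|n|/(\ell|\sqrt{\alpha}|)$. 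The error bracket is $O(\log|n|)$, since $|e^{-z}| \leq e^{1/2 - \re z_n^\star} \to 0$ and $|z| \leq |z_n^\star| + 1/2 = O(\log|n|)$. For $|n|$ large the dominant bracket therefore beats the error, so Rouch\'e's theorem supplies a unique zero $z_n$ of $\Phi - \rmi\pi n$ in the disk, with $z_n = z_n^\star + O(\log|n|/|n|)$.

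Feeding this asymptotic back into $\lambda = \sqrt{\alpha}\cosh z$ and using $\cosh z_n \sim \tfrac{1}{2}e^{z_n} \sim \rmi\pi n/(\ell\sqrt{\alpha})$ yields
\[
\lambda_n = \frac{\rmi\pi n}{\ell} + O(\log|n|).
\]
Hence $|\lambda_n| \to \infty$ and consecutive $\lambda_n$'s eventually differ by more than $\pi/(2\ell)$, so the $\lambda_n$ are pairwise distinct for $|n|$ large. By Definition~\ref{def:gener} this produces the required infinite family of $\alpha$-resonances.

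\emph{Main obstacle.} The technical heart is the uniform lower bound $|\tfrac{1}{2}\ell\sqrt{\alpha}(e^z - e^{z_n^\star})| \geq c|n|$ on the Rouch\'e circle; everything else is bookkeeping. The ambiguity in the branch of $\sqrt{\alpha}$ is immaterial because the symmetry $(z,\sqrt{\alpha}) \mapsto (-z,-\sqrt{\alpha})$ preserves both $\lambda = \sqrt{\alpha}\cosh z$ and the condition $\Phi(z) \in \rmi\pi\ZZ$, so no separate branch analysis is required.
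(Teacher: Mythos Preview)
Your argument is correct and takes a genuinely different route from the paper's. One minor slip: the asymptotic $\lambda_n = \rmi\pi n/\ell + O(\log|n|)$ does \emph{not} by itself imply that consecutive $\lambda_n$'s eventually differ by more than $\pi/(2\ell)$, since the two $O(\log|n|)$ error terms need not nearly cancel. This is harmless, however: $|\lambda_n|\to\infty$ alone already forces the set $\{\lambda_n : |n|\text{ large}\}$ to be infinite, which is all that is required.

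The paper's proof is softer and non-constructive. It fixes the single target value $\mu=0$ and invokes Picard's Great Theorem (together with the shift identity $f_\alpha(z+2\pi\rmi)=f_\alpha(z)+2\pi\rmi$, which rules out a Picard exceptional value) to obtain infinitely many zeros of $f_\alpha$; oddness of $f_\alpha$ then places infinitely many of these in the open upper half-plane, and distinctness of the associated $\lambda_n=\sqrt{\alpha}\cosh z_n$ is argued via the dichotomy $\cosh z_k=\cosh z_j \Rightarrow z_k\pm z_j\in 2\pi\rmi\ZZ$. Your Rouch\'e construction instead \emph{varies} the target $\rmi\pi n$, locates one root $z_n$ per large $|n|$ near an explicit approximate solution, and produces a quantitative asymptotic $\lambda_n\sim\rmi\pi n/\ell$---in effect anticipating the content of Theorem~\ref{t3.14}. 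The price is that you only see the resonances attached to large $|n|$; the gain is explicit asymptotic control that the Picard argument cannot provide.
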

\begin{proof} 
We introduce $f_\alpha:\CC\to\CC$ by $f_\alpha(z)=\ell\sqrt{\alpha}\sinh{z}+z$ and note that $f_\alpha$ is an entire function that has an essential singularity at infinity. From Picard's Great Theorem we know that there exists $\nu_0\in\CC$ such that the set
\begin{equation}\label{t3.1.1}
\{z\in\CC: f_\alpha(z)=\nu\}\;\mbox{is infinite for any}\; \nu\in\CC\setminus\{\nu_0\}.	
\end{equation}	   
However, one can readily check that 
\begin{equation}\label{t3.1.2}
f_\alpha(z+2\pi\rmi)=f_\alpha(z)+2\pi\rmi\;\mbox{for any}\;z\in\CC,
\end{equation}	
which implies that 
\begin{equation}\label{t3.1.3}
\{z\in\CC: f_\alpha(z)=\nu_0\}=\{z\in\CC: f_\alpha(z)=\nu_0+2\pi\rmi\}-2\pi\rmi.
\end{equation}
This last identity allows us to improve on \eqref{t3.1.1}. Indeed, from \eqref{t3.1.1} and \eqref{t3.1.3} we infer 
\begin{equation}\label{t3.1.4}
\{z\in\CC: f_\alpha(z)=\nu\}\;\mbox{is infinite for any}\; \nu\in\CC.	
\end{equation}
In particular, the function $f_\alpha$ has infinitely many zeros. Next, we define 
\begin{equation}\label{t3.1.5}
Z_\pm(f_\alpha)=\{z\in\CC:f_\alpha(z)=0,\pm\im z>0\}.	
\end{equation}
Since the function $f_\alpha$ is odd we have that 
\begin{equation}\label{t3.1.6}
z\in Z_+(f_\alpha)\;\mbox{if and only if}\;-z\in Z_-(f_\alpha).
\end{equation}
Moreover, one can readily check that $f_\alpha$ has at most three roots on the real line. From \eqref{t3.1.4} and \eqref{t3.1.6} we conclude that $Z_\pm(f_\alpha)$ are both infinite sets. Hence, there exists a sequence $\{z_n\}_{n\in\NN}$ of complex numbers in $Z_+(f_\alpha)$ such that $z_k\ne z_j$ for any $j,k\in\NN$ with $k\ne j$. From \eqref{res-ab} we obtain   $\lambda_n:=\sqrt{\alpha}\cosh{z_n}$, $n\in\NN$, are resonances generated by $\alpha$. To prove the theorem it is enough to show that $\lambda_k\ne\lambda_j$, whenever $k,j\in\NN$ with $k\ne j$. To prove this claim we assume that $\lambda_k=\lambda_j$. It follows that $\cosh{z_k}=\cosh{z_j}$, thus 
\begin{equation}\label{t3.1.7} \text{ Case A:  $z_k-z_j\in2\pi\rmi\ZZ$ and Case B: $z_k+z_j\in2\pi\rmi\ZZ$}.
\end{equation}
We note that in Case A we have $z_k=-l\sqrt{\alpha}\sinh{z_k}=-l\sqrt{\alpha}\sinh{z_j}=z_j$, which implies that $k=j$. In Case B we obtain  $z_k=-l\sqrt{\alpha}\sinh{z_k}=l\sqrt{\alpha}\sinh{z_j}=-z_j$, which is a contradiction with the fact that $z_k,z_j\in Z_+(f_\alpha)$. We conclude that $\lambda_k\ne\lambda_j$, whenever $k,j\in\NN$ with $k\ne j$, proving the theorem.
\end{proof}
In the remaining part of this section we aim to locate the complex resonances generated by the eigenvalue $\alpha$. We will show that there are only finitely many eigenvalues located in the right half plane and will describe the asymptotic of the resonances converging to infinity and located in the left half plane.  To achieve this we first describe the location of the solutions of the equation 
\begin{equation}\label{S0mu} 
\ell\sqrt{\alpha}\sinh{z}+z=\rmi\mu,\;\mbox{with}\;\mu\in\pi\ZZ\cup(\pi\ZZ+\pi/2).
\end{equation}
To begin, we introduce the {\em real} variables
 $p=\ell\re(\sqrt{\alpha})$, $q=\ell\im(\sqrt{\alpha})$, $t=\re  z$, $s=\im z$
 so that $\ell\sqrt{\alpha}=p+\rmi q$ and $z=t+\rmi s$. Since $\alpha\ne0$ we have
\begin{equation}\label{p-q-condition}
p\geq 0,\quad(p,q)\ne(0,0).
\end{equation}
Moreover, for each $\mu\in\pi\ZZ\cup(\pi\ZZ+\pi/2)$ equation \eqref{S0mu} is equivalent to the system
\begin{equation}\label{sys-S0mu}\tag{$S_0(p,q,\mu)$}
\left\{\begin{array}{ll}
p\sinh{t}\cos{s}-q\cosh{t}\sin{s}+t=0,\\
q\sinh{t}\cos{s}+p\cosh{t}\sin{s}+s=\mu.\end{array}\right.
\end{equation} 
\begin{remark}\label{r3.13}
By periodicity, if $(t,s)$ is a solution of the system $S_0(p,q,\mu)$, then $(t,s-2\pi n)$ is a solution of the system $S_0(p,q,\mu-2\pi n)$ for any $n\in\ZZ$.\hfill$\Diamond$
\end{remark}
As shown in Theorem~\ref{t3.1}, equation \eqref{S0mu} has infinitely many solutions in the complex plane,  which implies that system \eqref{sys-S0mu} has infinitely many solutions $(p,q)\in\RR^2$. Moreover,  system \eqref{sys-S0mu} has infinitely many solutions that belong to each quadrant of $\RR^2$ see Lemma \ref{l3.2} below. 
This result, together with two rather technical Lemmas \ref{l3.5} and \ref{l3.8}, is proved in Appendix~\ref{App}. The lemmas also provide the proof of our next result.
\begin{lemma}\label{l3.10}
Assume $p\geq 0$ and $(p,q)\ne (0,0)$, let  $\{\mu_n\}_{n\in\NN}$ be a finite sequence, and $\{(t_n,s_n)\}_{n\in\NN}$ be a sequence of solutions to $(S_0(p,q,\mu_n))$ such that $|t_n|\to\infty$ and $|s_n|\to\infty$ as $n\to\infty$. Then, 
\begin{equation}\label{log-asym}
\lim_{n\to\infty}{|s_n|}{e^{-|t_n|}}=\sqrt{p^2+q^2}/{2}.	
\end{equation}
\end{lemma}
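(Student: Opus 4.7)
The plan is to extract the dominant exponential behavior in the system $(S_0(p,q,\mu_n))$ and then invoke the elementary trigonometric identity
\[
(p\cos s - q\sin s)^2 + (q\cos s + p\sin s)^2 = p^2+q^2,\qquad s\in\RR,
\]
which just records the fact that multiplication by the complex number $p+\rmi q$ preserves Euclidean norm up to the scaling factor $\sqrt{p^2+q^2}$.

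First, I would observe that the sequence $\{\mu_n\}$ is bounded, since being finite-valued in $\pi\ZZ\cup(\pi\ZZ+\pi/2)$ confines it to a finite set; the desired conclusion depends only on $|t_n|$. Thus it suffices to pass to a subsequence on which either $t_n\to+\infty$ or $t_n\to-\infty$: provided both situations produce the same limit $\sqrt{p^2+q^2}/2$, the original sequence converges. In either subsequential case one has $\cosh t_n\to\infty$, hence $t_n/\cosh t_n\to 0$ and $\mu_n/\cosh t_n\to 0$, and $\tanh t_n\to \pm 1$ according to the sign of $t_n$.

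Second, in both cases I would divide the two equations of $(S_0(p,q,\mu_n))$ by $\cosh t_n$. Using the limits just noted together with the boundedness of $\cos s_n$ and $\sin s_n$, the system rewrites, up to $o(1)$ errors, as
\begin{align*}
p\cos s_n \mp q\sin s_n &= o(1),\\
\pm q\cos s_n + p\sin s_n &= -\,s_n/\cosh t_n + o(1),
\end{align*}
with the upper signs corresponding to $t_n\to+\infty$ and the lower signs to $t_n\to-\infty$.

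Third, I would apply the displayed identity (with $q$ replaced by $\mp q$ if necessary) to conclude that the sum of squares of the two left-hand sides above is exactly $p^2+q^2$. The first line forces its square to tend to $0$, so the second line gives
\[
\bigl(s_n/\cosh t_n\bigr)^2 \longrightarrow p^2+q^2.
\]
Combining this with $\cosh t_n\sim\tfrac12 e^{|t_n|}$ as $|t_n|\to\infty$ yields $|s_n|e^{-|t_n|}\to\sqrt{p^2+q^2}/2$, as claimed; since the limit is the same along both subsequences $t_n\to\pm\infty$, the full sequence converges. The only real obstacle is careful bookkeeping of the $o(1)$ remainders after dividing by $\cosh t_n$; the core mechanism---an exponentially growing coefficient forced to balance the linearly growing $s_n$---is essentially automatic once one has made that division.
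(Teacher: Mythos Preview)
Your proof is correct and is genuinely simpler than the paper's. The paper obtains Lemma~\ref{l3.10} from the appendix Lemmas~\ref{l3.5} and~\ref{l3.8} combined with the quadrant symmetries of Remark~\ref{r3.4}: it first reduces to $t_n\to+\infty$, $s_n\to+\infty$, then splits into the cases $q\ne0$ and $q=0$, and in the former case works through three auxiliary claims (bounding $s_ne^{-t_n}$, showing $\tan s_n\to p/q$, and determining the parity of $\lfloor s_n/\pi+1/2\rfloor$) in order to identify the \emph{sign} and conclude $q\cos s_n+p\sin s_n\to-\sqrt{p^2+q^2}$, from which the limit follows via an estimate like your second displayed line. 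Your argument sidesteps all of this: dividing by $\cosh t_n$ and squaring lets the Pythagorean identity do the work in one stroke, uniformly in the sign of $t_n$ and of $q$ (including $q=0$), without ever needing to resolve the sign of $q\cos s_n+p\sin s_n$. The paper's route does yield that finer sign information, but it is not used elsewhere, so nothing is lost. One small remark: the hypothesis is only that $\{\mu_n\}$ is a finite (hence bounded) sequence, not that it lies in $\pi\ZZ\cup(\pi\ZZ+\pi/2)$; your argument uses only the boundedness, so this is harmless.
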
	

We will need some more information on the solutions of \eqref{sys-S0mu}.  
\begin{lemma}\label{l3.3}
Assume that $\alpha\in\CC\setminus\{0\}$, and let $\{\mu_n\}_{n\in\NN}\subset\RR$ be a {\em finite} sequence while $\{(t_n,s_n)\}_{n\in\NN}\subset\RR^2$ be an {\em infinite} sequence of solutions of system $(S_0(p,q,\mu_n))$. Then the sequences $\{t_n\}_{n\in\NN}$ and $\{s_n\}_{n\in\NN}$ are unbounded.
\end{lemma}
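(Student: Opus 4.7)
The plan is to argue by contradiction using the complex reformulation from the proof of Theorem~\ref{t3.1}: with $z_n = t_n + \rmi s_n$ and $f_\alpha(z) = \ell\sqrt{\alpha}\sinh z + z$, the system $(S_0(p,q,\mu_n))$ is equivalent to $f_\alpha(z_n) = \rmi \mu_n$. Since $\{\mu_n\}$ takes only finitely many real values, I would first apply pigeonhole to pass to a still-infinite subsequence on which $\mu_n \equiv \mu_* \in \RR$, reading ``infinite sequence of solutions'' as yielding pairwise distinct $z_n$ along this subsequence (otherwise the lemma would be trivially false). The main tool will be the elementary identity $|\sinh(t + \rmi s)|^2 = \sinh^2 t + \sin^2 s$, which pins $|\sinh z|$ to an envelope of order $e^{|\re z|}$.

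I would then negate the conclusion, say by assuming $\{t_n\}$ is bounded (the case $\{s_n\}$ bounded being symmetric), and reach a contradiction by two sub-cases. If $\{s_n\}$ is also bounded, then $\{z_n\}$ is a bounded sequence of pairwise distinct zeros of the entire function $f_\alpha - \rmi \mu_*$; by Bolzano--Weierstrass these zeros accumulate in $\CC$, but $f_\alpha - \rmi \mu_*$ is nonconstant (its derivative $\ell\sqrt{\alpha}\cosh z + 1$ is a nontrivial entire function, vanishing only on a discrete set), so its zeros are isolated---contradicting the isolated-zeros principle. If instead $\{s_n\}$ is unbounded, the identity above combined with $\{t_n\}$ bounded gives $|\sinh z_n| = O(1)$ and hence $|\ell\sqrt{\alpha}\sinh z_n| = O(1)$; but $|z_n| \geq |s_n|$ blows up along a subsequence, whereas $f_\alpha(z_n) = \rmi \mu_*$ forces the contradictory bound $|z_n| \leq |\mu_*| + O(1)$.

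The symmetric case where $\{s_n\}$ is bounded but $\{t_n\}$ is unbounded is even cleaner: $|\sinh z_n|^2 \geq \sinh^2 t_n$ grows exponentially in $|t_n|$ while $|z_n| \leq |t_n| + |s_n|$ grows only linearly, so $f_\alpha(z_n) = \rmi \mu_*$ and the triangle inequality give $|\ell\sqrt{\alpha}||\sinh z_n| \leq |\mu_*| + |z_n|$, impossible once $|t_n|$ is large. The substance of the argument is simply the exponential-beats-linear comparison between $|\sinh z|$ and $|z|$, combined with the isolated-zeros principle for the doubly-bounded case; I do not foresee any real obstacle beyond confirming the standard reading of ``infinite sequence of solutions'' as pairwise distinct along the pigeonhole subsequence.
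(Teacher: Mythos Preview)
Your proof is correct and shares the paper's core device (pigeonhole to a constant $\mu_*$, then the isolated-zeros principle for the doubly-bounded case), but diverges in how you dispose of the mixed cases. The paper never invokes $|\sinh(t+\rmi s)|^2=\sinh^2 t+\sin^2 s$: for $\{t_n\}$ bounded it reads the second real equation $q\sinh t_n\cos s_n+p\cosh t_n\sin s_n+s_n=\mu_n$ directly to force $\{s_n\}$ bounded as well, so your Subcase~1b is preempted; for $\{s_n\}$ bounded with $t_n\to+\infty$ it passes to $s_n\to s_*$ and divides the two real equations by $e^{t_n}$ to obtain the linear system $p\cos s_*-q\sin s_*=0$, $q\cos s_*+p\sin s_*=0$, contradicting $(p,q)\ne(0,0)$. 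Your route via the $\sinh$-identity and the exponential-beats-linear comparison is more uniform (one tool handles both mixed cases) and stays entirely in the complex picture, while the paper's route is more hands-on with the real system and avoids the identity altogether. Both are short; yours has the slight advantage of not needing to split on the sign of $t_n$ or extract a convergent subsequence of $s_n$.
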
	
\begin{proof}
Seeking a contradiction, let us suppose that $\{t_n\}_{n\in\NN}$ is bounded. Since the finite sequence $\{\mu_n\}_{n\in\NN}$ is bounded, 
  $\{s_n\}_{n\in\NN}$ is also bounded by the second equation in $(S_0(p,q,\mu_n))$. Moreover, since $\{\mu_n\}_{n\in\NN}$ is finite, by passing to a subsequence we will assume that $\mu_{n}=\mu_0$ for some $\mu_0\in\RR$ while the values $(t_n,s_n)$ are distinct for all $n\in\NN$.
 Thus, $\{t_{n}+\rmi s_{n}\}_{n\in\NN}$ is an infinite bounded sequence of zeros of the entire nonzero function $z\to\ell\sqrt{\alpha}\sinh{z}+z-\rmi\mu_0$. The contradiction proves that $\{t_n\}_{n\in\NN}$ is unbounded. Passing to a subsequence we will assume that $t_n\to+\infty$ since the case $t_n\to-\infty$ is analogous.

Seeking a contradiction, let us suppose that $\{s_{n}\}_{n\in\NN}$ is bounded. Passing to a subsequence we will assume that $s_{n}\to s_*$ 
for some $s_*\in\RR$ as $n\to\infty$. 
By the first equation in \eqref{sys-S0mu}, 
\begin{equation*}\label{l3.3.1} 
\big(p\cos{(s_n)}-q\sin{(s_n)}\big)e^{t_n}/(2t_n)-\big(p\cos{(s_n)}+q\sin{(s_n)}\big)e^{-t_n}/(2t_n)+1=0,
\end{equation*}
 while by the second equation in \eqref{sys-S0mu},
\begin{equation*}\label{l3.3.3} 
\big(q\cos{(s_n)}+p\sin{(s_n)}\big)e^{t_n}/2+\big(-q\cos{(s_n)}+p\sin{(s_n)}\big)e^{-t_n}/2+s_n=\mu_n,
\end{equation*}
yielding
\begin{equation}\label{l3.3.4} p\cos{(s_*)}-q\sin{(s_*)}=0 \text{ and }
q\cos(s_*)+p\sin(s_*)=0, 
\end{equation}  
respectively, because $e^{t_n}/(2t_n)\to\infty$, $t_n\to\infty$, $s_n\to s_*$ and $\{\mu_n\}$ is bounded as $n\to\infty$.  
  But \eqref{l3.3.4} contradicts \eqref{p-q-condition}. 
\end{proof}
 Our next simple but very useful lemma describes the position of the infinitely many resonances generated by the eigenvalue $\alpha\in\CC\setminus\{0\}$ of $a$.
\begin{lemma}\label{l3.11}
Assume $\alpha\in\CC\setminus\{0\}$, $\ell\sqrt{\alpha}=p+\rmi q$, and let $\lambda=\sqrt{\alpha}\cosh{(t+\rmi s)}$ where $(t,s)$ solves system $(S_0(p,q,\mu))$ for some $\mu\in\RR$. Then
\begin{align}\label{Real-lambda}
\big|\ell\,\re\lambda+|t|\big|&\leq(|p|+|q|)e^{-|t|},\\	
\label{Imaginary-lambda}
\big|\sgn{(t)}\ell\,\im\lambda+s-\mu\big|&\leq(|p|+|q|)e^{-|t|}.
\end{align}		
\end{lemma}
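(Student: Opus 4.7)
The plan is to compute $\ell\lambda$ explicitly, split into real and imaginary parts, and then match the result to the system $(S_0(p,q,\mu))$ at the cost of an exponentially small remainder.

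Writing $\ell\lambda = (p+\rmi q)\cosh(t+\rmi s) = (p+\rmi q)\bigl(\cosh t\cos s+\rmi\sinh t\sin s\bigr)$, one reads off
\begin{align*}
\ell\re\lambda &= p\cosh t\cos s - q\sinh t\sin s,\\
\ell\im\lambda &= q\cosh t\cos s + p\sinh t\sin s.
\end{align*}
The key point is that the system $(S_0(p,q,\mu))$ encodes the "mirror" combinations $p\sinh t\cos s-q\cosh t\sin s=-t$ and $q\sinh t\cos s+p\cosh t\sin s=\mu-s$. So I would convert one pattern into the other using the elementary identities $\cosh t=\sinh t+e^{-t}$ and $\sinh t=\cosh t-e^{-t}$.

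For the case $t\ge 0$, substituting these identities directly gives
\begin{align*}
\ell\re\lambda &= \bigl(p\sinh t\cos s - q\cosh t\sin s\bigr)+e^{-t}\bigl(p\cos s+q\sin s\bigr) = -t + e^{-t}(p\cos s+q\sin s),\\
\ell\im\lambda &= \bigl(q\sinh t\cos s + p\cosh t\sin s\bigr)+e^{-t}\bigl(q\cos s-p\sin s\bigr) = (\mu-s)+e^{-t}(q\cos s-p\sin s),
\end{align*}
and both correction terms are bounded by $(|p|+|q|)e^{-t}=(|p|+|q|)e^{-|t|}$, which proves \eqref{Real-lambda} and \eqref{Imaginary-lambda} with $\sgn(t)=1$. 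For the case $t<0$, I would use the companion identities $\cosh t=e^{t}-\sinh t$ and $\sinh t=e^{t}-\cosh t$ to get
\begin{align*}
\ell\re\lambda &= t + e^{t}(p\cos s - q\sin s),\\
\ell\im\lambda &= (s-\mu) + e^{t}(q\cos s + p\sin s),
\end{align*}
and since $|t|=-t$, $\sgn(t)=-1$, and $e^{t}=e^{-|t|}$, the same two inequalities follow. The boundary case $t=0$ is immediate: the first equation of $(S_0(p,q,\mu))$ forces $q\sin s=0$ and the second gives $\mu-s=p\sin s$, so $|\ell\re\lambda|=|p\cos s|\le |p|$ and $|s-\mu|=|p\sin s|\le |p|$, both within the required bound.

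There is no genuine obstacle: the argument is a short identity-juggling computation. The only thing to notice is that the replacement $\cosh t=\sinh t\pm e^{\mp t}$ simultaneously transforms each expression for $\ell\re\lambda$ and $\ell\im\lambda$ into a row of $(S_0(p,q,\mu))$ plus a manifestly $O(e^{-|t|})$ error, with the sign $\sgn(t)$ accounting for which of the two identities is used.
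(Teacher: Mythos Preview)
Your proof is correct and takes essentially the same approach as the paper: both compute $\ell\re\lambda$ and $\ell\im\lambda$, then use the relations $\cosh t=\sinh t+e^{-t}$ and $\sinh t=\cosh t-e^{-t}$ (or their $e^t$ variants) to convert each expression into a row of $(S_0(p,q,\mu))$ plus an $O(e^{-|t|})$ remainder. The paper phrases this by writing $\sinh t$ and $\cosh t$ in exponential form, obtaining both identities $\ell\re\lambda=-t+e^{-t}(p\cos s+q\sin s)$ and $\ell\re\lambda=t+e^{t}(p\cos s-q\sin s)$ at once, and then selecting the appropriate one according to $\sgn(t)$; your version splits into cases first and applies the hyperbolic identity directly, but the resulting formulas and bounds are identical.
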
	
\begin{proof}
Rearranging terms in $(S_0(p,q,\mu))$ we have
\begin{align}\label{l3.11.1a}
e^t(p\cos{s}-q\sin{s})-e^{-t}(p\cos{s}+q\sin{s})&=-2t,\\
e^t(q\cos{s}+p\sin{s})+e^{-t}(-q\cos{s}+p\sin{s})&=2\mu-2s.\label{l3.11.1b}\end{align} Taking the real and imaginary parts in $\ell\lambda=(p+\rmi q)\cosh{(t+\rmi s)}$ yields 
\begin{align}\label{l3.11.2a}
2\ell\,\re\lambda&=
e^t(p\cos{s}-q\sin{s})+e^{-t}(p\cos{s}+q\sin{s}),\\
2\ell\,\im\lambda&
=e^t(q\cos{s}+p\sin{s})-e^{-t}(-q\cos{s}+p\sin{s}),\label{l3.11.2b}
\end{align}
which, combined with \eqref{l3.11.1a}, gives identities 
\begin{equation*}\label{l3.11.3}
\ell\,\re\lambda=-t+e^{-t}(p\cos{s}+q\sin{s})\;\text{ and }\;\ell\,\re\lambda=t+e^t(p\cos{s}-q\sin{s}).
\end{equation*}
Using the first identity for $t\geq 0$ and the second identity for $t\leq 0$, we immediately infer \eqref{Real-lambda}. Similarly, from \eqref{l3.11.1b} and \eqref{l3.11.2b} we obtain the identities 
\begin{equation*}\label{l3.11.4}
\ell\,\im\lambda=(\mu-s)-e^{-t}(-q\cos{s}+p\sin{s})\;\mbox{and}\;\ell\,\im\lambda=-(\mu-s)+e^t(q\cos{s}+p\sin{s}),
\end{equation*}
and assertion \eqref{Imaginary-lambda} follows by using the first identity for $t\geq 0$ and the second for $t\leq 0$. 
\end{proof}
A main result of this section is given in the next theorem.
\begin{theorem}\label{t3.12}
Assume that $0\notin\spec(a)$ and let $\alpha\in\CC$ be an eigenvalue of $a$. Then there are finitely many $\alpha$-resonances in the closed right half plane  $\overline{\CC}_+=\{\lambda\in\CC:\re\lambda\geq0\}$.
\end{theorem}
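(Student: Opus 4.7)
The plan is to exploit the change of variables $\lambda=\sqrt{\alpha}\cosh z$ from \eqref{res-ab}--\eqref{res-cd} together with Lemma~\ref{l3.11} to show that every $\alpha$-resonance $\lambda$ with $\re\lambda\ge 0$ has a bounded modulus, and then to use that $\CalR_\alpha$ is the zero set of a non-trivial entire function to conclude finiteness.

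By Definition~\ref{def:gener} and equations \eqref{equiv-ab}--\eqref{equiv-cd}, every $\alpha$-resonance is either of the form $\lambda=\sqrt{\alpha}\cosh(t+\rmi s)$ with $(t,s)$ solving the system $(S_0(p,q,\mu))$ for some $\mu\in\pi\ZZ\cup(\pi\ZZ+\pi/2)$ (where $\ell\sqrt{\alpha}=p+\rmi q$), or the isolated point $\lambda=-1/\ell$ coming from Lemma~\ref{l1l}; the latter lies in the open left half plane and may be discarded. For such a resonance with $\re\lambda\ge 0$, estimate \eqref{Real-lambda} gives $\bigl|\ell\re\lambda+|t|\bigr|\le(|p|+|q|)e^{-|t|}$, and since $\ell\re\lambda+|t|\ge|t|\ge 0$, this yields $|t|e^{|t|}\le|p|+|q|$. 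Because $x\mapsto xe^x$ is a strictly increasing bijection of $[0,\infty)$ onto itself, there is a constant $T_0=T_0(\alpha,\ell)$, independent of $\mu$ and $s$, such that $|t|\le T_0$.

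Next, the identity $|\cosh(t+\rmi s)|^2=\cos^2 s+\sinh^2 t\le 1+\sinh^2 T_0=\cosh^2 T_0$ (valid for every $s\in\RR$) gives the uniform bound $|\lambda|=|\sqrt{\alpha}|\,|\cosh z|\le|\sqrt{\alpha}|\cosh T_0=:M$, so every $\alpha$-resonance in $\overline{\CC}_+$ lies in the compact disc $\{|\lambda|\le M\}$. To finish, I appeal to \eqref{WprodS}: the $\alpha$-resonances (other than $-1/\ell$) are precisely the zeros of the scalar entire function $h_\alpha(\lambda):=\bigl(\lambda\cosh(\ell b)+b\sinh(\ell b)\bigr)\bigl(\cosh(\ell b)+\lambda b^{-1}\sinh(\ell b)\bigr)$ with $b=\sqrt{\lambda^2-\alpha}$; each factor depends only on $b^2=\lambda^2-\alpha$ and is therefore entire in $\lambda$, while $h_\alpha$ is not identically zero because $h_\alpha(\lambda)$ grows like $\lambda\, e^{2\ell\lambda}$ as $\lambda\to+\infty$ along the real axis. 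Zeros of a non-trivial entire function form a discrete set, so $\CalR_\alpha\cap\{|\lambda|\le M\}$ is finite, which establishes the theorem. The only non-routine obstacle is obtaining the bound on $|t|$ uniformly in $\mu$, and Lemma~\ref{l3.11} is precisely what makes this possible; the remainder of the argument is mechanical.
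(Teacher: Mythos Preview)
Your proof is correct, and it takes a cleaner route than the paper's. Both arguments begin identically: write a resonance as $\lambda=\sqrt{\alpha}\cosh(t+\rmi s)$ with $(t,s)$ solving $(S_0(p,q,\mu))$, apply Lemma~\ref{l3.11}, and deduce from $\re\lambda\ge 0$ that $|t|$ is bounded by a constant $T_0$ depending only on $\alpha$ and $\ell$. The divergence comes next. The paper invokes Lemma~\ref{l3.3}, which says that an \emph{infinite} sequence of solutions with a \emph{finite} set of $\mu$-values must have unbounded $t$-components; this forces an implicit use of the periodicity reduction in Remark~\ref{r3.13} to replace the infinite range $\mu\in\pi\ZZ\cup(\pi\ZZ+\pi/2)$ by the finite set $\{0,\pi/2,\pi,3\pi/2\}$. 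You bypass all of that by the elementary identity $|\cosh(t+\rmi s)|^2=\cos^2 s+\sinh^2 t\le\cosh^2 T_0$, which converts the bound on $|t|$ directly into a uniform bound on $|\lambda|$, independent of $s$ and $\mu$; finiteness then follows from discreteness of the zero set of the non-vanishing entire function $W_\alpha$. Your argument is more self-contained and makes transparent that no control on $s$ or $\mu$ is needed once $|t|$ is bounded; the paper's route has the compensating advantage that Lemma~\ref{l3.3} is reused in the proof of Theorem~\ref{t3.14}.
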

\begin{proof}
Let $\lambda$ be a resonance generated by the eigenvalue $\alpha$, that is, a solution of \eqref{FOUR3}. Then there exist a $\mu\in\pi\ZZ\cup(\pi\ZZ+\pi/2)$ and a solution $(t,s)$ of system $(S_0(p,q,\mu))$ such that $\lambda=\sqrt{\alpha}\cosh{(t+is)}$. If $\re\lambda\geq 0$ then  from \eqref{Real-lambda}  in Lemma~\ref{l3.11}  we infer 
\begin{equation}\label{l3.12.1}
|t|\leq(|p|+|q|)e^{-|t|}.	
\end{equation}
By Lemma~\ref{l3.3} system $(S_0(p,q,\mu))$ may have only finitely many solutions $(t,s)$ with bounded $t$-components. Since the set $\{t\in\RR:|t|\leq(|p|+|q|)e^{-|t|}\}$ is bounded, by \eqref{l3.12.1} we conclude that there may be  only finitely many $\alpha$-resonances in $\overline{\CC}_+$, proving the theorem.	 	
\end{proof}	
 \begin{lemma}\label{l3.7-new}
Assume that $0\notin\spec(a)$. Then, each $\alpha\in\spec(a)$ generates infinitely many resonances located in the second and  infinitely many resonances located in the third quadrant of the complex plane. 
 \end{lemma}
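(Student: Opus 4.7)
The plan is to combine the asymptotic description of $\alpha$-resonances given by Lemmas~\ref{l3.10} and \ref{l3.11} with the claim (recorded just after \eqref{sys-S0mu} and established in the appendix via Lemma~\ref{l3.2}) that system $(S_0(p,q,\mu))$ has infinitely many solutions in each of the four quadrants of $\RR^{2}$ as $\mu$ ranges over $\pi\ZZ\cup(\pi\ZZ+\pi/2)$. Fix $\alpha\in\spec(a)$, write $\ell\sqrt{\alpha}=p+\rmi q$ (with $(p,q)\neq(0,0)$ since $0\notin\spec(a)$), and parametrize each $\alpha$-resonance as $\lambda=\sqrt{\alpha}\cosh(t+\rmi s)$ with $(t,s)$ solving $(S_0(p,q,\mu))$ for some admissible $\mu$.

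The key computation translates the quadrant of $(t,s)\in\RR^{2}$ into the quadrant of $\lambda\in\CC$. From Lemma~\ref{l3.11},
\[
\ell\,\re\lambda=-|t|+O(e^{-|t|}),\qquad \ell\,\im\lambda=\sgn(t)(\mu-s)+O(e^{-|t|}),
\]
while Lemma~\ref{l3.10} gives the exponential blow-up $|s|\sim\tfrac{1}{2}\sqrt{p^{2}+q^{2}}\,e^{|t|}$ along any sequence of solutions with $|t_{n}|,|s_{n}|\to\infty$. Thus $|s_n|$ dominates $|\mu_n|$ for large $n$, and so $\sgn(\im\lambda_{n})=-\sgn(t_{n}s_{n})$ while $\re\lambda_{n}\to-\infty$. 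Picking an infinite family of solutions $(t_{n},s_{n},\mu_{n})$ with $(t_{n},s_{n})$ in the open first quadrant of $\RR^{2}$ (guaranteed by Lemma~\ref{l3.2}) produces $\lambda_{n}$ in the third quadrant of $\CC$; picking one with $(t_{n},s_{n})$ in the open fourth quadrant yields $\lambda_{n}$ in the second quadrant of $\CC$.

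It remains to ensure that infinitely many of these $\lambda_{n}$ are pairwise distinct, and this is the main obstacle. Replaying the end of the proof of Theorem~\ref{t3.1}: if $\lambda_{k}=\lambda_{j}$ with $z_n=t_{n}+\rmi s_{n}$, then either $z_{k}-z_{j}\in 2\pi\rmi\ZZ$ or $z_{k}+z_{j}\in 2\pi\rmi\ZZ$. With $t_{k},t_{j}>0$ the second case is impossible, so $t_{k}=t_{j}$ and $s_{k}-s_{j}\in 2\pi\ZZ$; by Remark~\ref{r3.13} this forces $\mu_{k}-\mu_{j}\in 2\pi\ZZ$. Thus two solutions yield the same resonance only when they belong to the same $2\pi\ZZ$-orbit of the $(s,\mu)$-shift. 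Since Lemma~\ref{l3.2} provides infinitely many solutions in the chosen quadrant of $\RR^{2}$ while each such orbit contributes at most one representative with $(t,s)$ in any fixed bounded $s$-strip, passing to a subsequence of orbit representatives delivers infinitely many distinct $\alpha$-resonances in the second (respectively third) quadrant of $\CC$, completing the proof.
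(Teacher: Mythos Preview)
Your argument is correct and follows the same path as the paper's one-line proof (Lemma~\ref{l3.11} plus Lemma~\ref{l3.2}); you have simply written out the details the paper leaves implicit. Two small clean-ups: Lemma~\ref{l3.10} is superfluous—once you fix a single $\mu$ (which is how Lemma~\ref{l3.2} is stated) and use Lemma~\ref{l3.3} to extract a subsequence with $t_n,s_n\to\infty$, the bounded $\mu$ is automatically dominated by $s_n$; and your final distinctness sentence is slightly off (the solutions are \emph{not} in a bounded $s$-strip), but with $\mu$ fixed the collision condition $t_k=t_j$, $s_k-s_j\in 2\pi\ZZ$ already forces $s_k=s_j$ by subtracting the two copies of the second equation in $(S_0(p,q,\mu))$, so no orbit bookkeeping is needed.
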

 \begin{proof}
The lemma follows shortly from Lemma~\ref{l3.11} and Lemma \ref{l3.2}.
 \end{proof}
 Our final main result of this section describes the asymptotic behavior of infinitely many $\alpha$-resonances lying in the left  half plane $\CC_-=\{\lambda\in\CC:\re\lambda<0\}$. Results of this type have of course a long history, and are in the spirit of \cite{LaP}, cf.\
also \cite[Theorem 2.10]{DZ}.
\begin{theorem}\label{t3.14}
Assume that $0\notin\spec(a)$ and let $\alpha$ be an eigenvalue of $a$. Then:
\begin{enumerate}
\item[(i)] If $\{\lambda_n\}_{n\in\NN}$ is an infinite  sequence of resonances generated by $\alpha$ then the sequences $\{\re\lambda_n\}_{n\in\NN}$ and $\{\im\lambda_n\}_{n\in\NN}$ are unbounded;
\item[(ii)] If $\{\lambda_n\}_{n\in\NN}$ is an infinite sequence of resonances generated by $\alpha$ such that $\re\lambda_n\to-\infty$ and $|\im\lambda_n|\to\infty$ as $n\to\infty$ then
\begin{equation}\label{asympt-res}
\lim_{n\to\infty}\big({|\im\lambda_n|}{e^{\ell\re\lambda_n}}\big)={|\sqrt{\alpha}|}/{2}>0;
\end{equation}
\item[(iii)] For any $\eta>0$ and $\delta\in\Big(0,{1}/{\ell}\Big)$ there are finitely many resonances generated by the eigenvalue $\alpha$ lying to the right of the curve
$\re\lambda=-\eta-\delta\ln(1+|\im\lambda|)$.	
\end{enumerate}	
\end{theorem}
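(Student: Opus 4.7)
The plan is to treat the three claims in order, using throughout the parametrization $\lambda_n=\sqrt{\alpha}\cosh(t_n+\rmi s_n)$, where $(t_n,s_n)$ solves $(S_0(p,q,\mu_n))$ for some $\mu_n\in\pi\ZZ\cup(\pi\ZZ+\pi/2)$, together with the estimates of Lemma~\ref{l3.11}. Since each $\alpha$-resonance is a zero of an entire nontrivial function of $\lambda$ (extracted from~\eqref{Wa}--\eqref{Wb}), any infinite sequence of pairwise distinct $\alpha$-resonances must satisfy $|\lambda_n|\to\infty$, and by Theorem~\ref{t3.12} I may pass to a subsequence on which $\re\lambda_n<0$.

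For (i), suppose first that $\{\re\lambda_n\}$ is bounded. Then~\eqref{Real-lambda} forces $\{|t_n|\}$ to be bounded (since $e^{-|t_n|}\leq 1$), and the identity $|\cosh(t+\rmi s)|^2=\cosh^2 t-\sin^2 s$ gives $|\lambda_n|\leq|\sqrt{\alpha}|\cosh|t_n|$, contradicting $|\lambda_n|\to\infty$. Thus, along a subsequence, $\re\lambda_n\to-\infty$ and hence $|t_n|\to\infty$. Suppose next that $\{\im\lambda_n\}$ is bounded; then~\eqref{Imaginary-lambda} makes $\{\mu_n-s_n\}$ bounded. Rewriting the second equation of $(S_0(p,q,\mu_n))$ as $q\sinh t_n\cos s_n+p\cosh t_n\sin s_n=\mu_n-s_n$ and dividing by $\cosh t_n$ yields $q\tanh(t_n)\cos s_n+p\sin s_n\to 0$; dividing the first equation by $\cosh t_n$ (and using $t_n/\cosh t_n\to 0$) gives $p\tanh(t_n)\cos s_n-q\sin s_n\to 0$. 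Fixing $\sigma=\sgn(t_n)\in\{\pm 1\}$ along a further subsequence, $(\cos s_n,\sin s_n)$ is annihilated in the limit by
\begin{equation*}
\begin{pmatrix}\sigma p & -q\\ \sigma q & p\end{pmatrix},
\end{equation*}
whose determinant $\sigma(p^2+q^2)$ is nonzero by~\eqref{p-q-condition}; hence $(\cos s_n,\sin s_n)\to (0,0)$, contradicting $\cos^2 s_n+\sin^2 s_n=1$.

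For (ii), \eqref{Real-lambda} yields $\ell|\re\lambda_n|=|t_n|+o(1)$, so in particular $|t_n|\to\infty$, and the identity $|\cosh(t+\rmi s)|^2=\cosh^2 t-\sin^2 s$ gives
\begin{equation*}
|\lambda_n|=|\sqrt{\alpha}|\sqrt{\cosh^2 t_n-\sin^2 s_n}\sim |\sqrt{\alpha}|e^{|t_n|}/2.
\end{equation*}
Because $|\re\lambda_n|$ grows linearly in $|t_n|$ while $|\lambda_n|$ grows exponentially, $\re\lambda_n/|\lambda_n|\to 0$, so $|\im\lambda_n|/|\lambda_n|\to 1$ and $|\im\lambda_n|\sim|\sqrt{\alpha}|e^{|t_n|}/2$. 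Combining with $\ell\re\lambda_n=-|t_n|+o(1)$ produces $|\im\lambda_n|e^{\ell\re\lambda_n}\to|\sqrt{\alpha}|/2$.

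For (iii), suppose toward a contradiction that there are infinitely many distinct $\alpha$-resonances $\lambda_n$ satisfying $\re\lambda_n>-\eta-\delta\ln(1+|\im\lambda_n|)$. By Theorem~\ref{t3.12} and part~(i), along a subsequence $\re\lambda_n\to-\infty$, and the hypothesis then forces $|\im\lambda_n|\to\infty$. Part~(ii) gives $\ell\re\lambda_n=-\ln|\im\lambda_n|+\ln(|\sqrt{\alpha}|/2)+o(1)$, which combined with the hypothesis and $\ln(1+|\im\lambda_n|)=\ln|\im\lambda_n|+o(1)$ rearranges to $(1/\ell-\delta)\ln|\im\lambda_n|\leq\eta+O(1)$; this is absurd since $1/\ell-\delta>0$ and $\ln|\im\lambda_n|\to\infty$. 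The main obstacle is the second half of~(i): once $\re\lambda_n\to-\infty$ it is not immediate that $\{\im\lambda_n\}$ is unbounded, because the $\mu_n$ come from an infinite discrete set and could themselves grow. The crucial structural fact is that the limiting $2\times 2$ coefficient matrix extracted from $(S_0(p,q,\mu_n))$ is (up to one sign) the matrix of real multiplication by $p+\rmi q$, whose invertibility is equivalent to the standing assumption $\alpha\neq 0$.
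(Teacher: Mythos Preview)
Your proof is correct, and in places more direct than the paper's. The main difference is in part~(ii): the paper proves \eqref{asympt-res} via the decomposition
\[
|\im\lambda_n|e^{\ell\re\lambda_n}=\frac{|\im\lambda_n|}{|\tilde s_n|}\cdot |\tilde s_n|e^{-|t_n|}\cdot e^{\ell\re\lambda_n+|t_n|},
\]
and the middle factor is handled by Lemma~\ref{l3.10}, whose proof in turn relies on the rather delicate Appendix Lemmas~\ref{l3.5} and~\ref{l3.8}. Your route bypasses all of that machinery by exploiting the elementary identity $|\cosh(t+\rmi s)|^2=\cosh^2 t-\sin^2 s$ to get $|\lambda_n|\sim|\sqrt{\alpha}|e^{|t_n|}/2$ directly, then noting that $|\re\lambda_n|/|\lambda_n|\to 0$ forces $|\im\lambda_n|\sim|\lambda_n|$. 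This is a genuine simplification. For part~(i) the paper first uses the periodicity shift $\tilde s_n=s_n-2\pi i_n$ to reduce to a \emph{finite} sequence $\{\tilde\mu_n\}$ and then invokes Lemma~\ref{l3.3}; you instead work with the original $(s_n,\mu_n)$ and observe that, under the hypothesis $\{\im\lambda_n\}$ bounded, \eqref{Imaginary-lambda} makes $\{\mu_n-s_n\}$ bounded, which is exactly what is needed after dividing the two equations of $(S_0(p,q,\mu_n))$ by $\cosh t_n$. The resulting $2\times 2$ argument is essentially the core of the paper's proof of Lemma~\ref{l3.3}, inlined and freed from the finiteness hypothesis on $\{\mu_n\}$. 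Part~(iii) is argued in the same way as in the paper.
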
	
\begin{proof}
(i) Since $\lambda_n$, $n\in\NN$, is a resonance generated by $\alpha$, from \eqref{res-ab} and \eqref{res-cd} we infer that there exist a sequence $\{\mu_n\}_{n\in\NN}\subset\RR$  and a sequence $\{(t_n,s_n)\}_{n\in\NN}$ of solutions to $(S_0(p,q,\mu_n))$ such that
\begin{equation}\label{t3.14.1}
\lambda_n=\sqrt{\alpha}\cosh{(t_n+\rmi s_n)}\;\mbox{and}\;\mu_n\in\pi\ZZ\cup(\pi\ZZ+\pi/2)\;\mbox{for any}\;n\in\NN.	
\end{equation}	
We split
\begin{equation}\label{t3.14.2}
\mu_n=2\pi i_n+\tmu_n\, \text{ where $i_n\in\ZZ$, $\tmu_n\in\{0,\pi/2,\pi,3\pi/2\}$ and define $\ts_n=s_n-2\pi i_n$.}		
\end{equation}
 By Remark~\ref{r3.13}
$(t_n,\ts_n)$ is a solution to $(S_0(p,q,\tmu_n))$. Since the sequence $\{\lambda_n\}_{n\in\NN}$ is infinite and $\cosh(\cdot)$ is periodic with period $2\pi\rmi$, by \eqref{t3.14.1}  the sequence $\{(t_n,\ts_n)\}_{n\in\NN}$ is infinite.
By Lemma~\ref{l3.3} the sequences $\{t_n\}_{n\in\NN}$ and $\{\ts_n\}_{n\in\NN}$ are unbounded.  Passing to a subsequence we may assume  $|\ts_{n}|\to\infty$ as $n\to\infty$.
 Using the second equation of $(S_0(p,q,\tmu_{n}))$, 
\begin{equation}\label{t3.14.3}
q\sinh{(t_{n})}\cos{(\ts_{n})}+p\cosh{(t_{n})}\sin{(\ts_{n})}+\ts_{n}=\tmu_{n}.
\end{equation}	
Since the sequence $\{\tmu_n\}_{n\in\NN}$ is finite, by \eqref{t3.14.3}   $\{t_{n}\}_{n\in\NN}$ is unbounded. Again, passing to a
 subsequence, we assume that $|t_{n}|\to\infty$ as $n\to\infty$. Using estimate \eqref{Real-lambda}, we conclude that 
 $\re\lambda_{n}\to-\infty$ as $n\to\infty$, as required. 
 Similarly, by \eqref{Imaginary-lambda}  $\{\sgn{(t_{n})}\,\ell\, \im\lambda_{n}+\ts_{n}\}_{n\in\NN}$ is bounded. Since $|\ts_{n}|\to\infty$  then $|\im\lambda_{n}|\to\infty$ finishing the proof of (i).

\noindent\textit{Proof of (ii).} For an infinite sequence $\{\lambda_n\}_{n\in\NN}$  of $\alpha$-resonances such that $\re\lambda_n\to-\infty$ and $|\im\lambda_n|\to\infty$ as $n\to\infty$ we
let $\{t_n\}_{n\in\NN}$, $\{\ts_n\}_{n\in\NN}$ and $\{\tmu_n\}_{n\in\NN}$ be the sequences defined above in the proof of (i). We begin the proof of formula \eqref{asympt-res} by proving the following two claims.

{\textit{Claim 1.\,}}  $\lim_{n\to\infty}|t_n|=\infty$.
 Indeed, seeking a contradiction and passing to a subsequence we may suppose that
 $|t_{n}|\to r$ as $n\to\infty$ for some $r\in[0,\infty)$. This contradicts  estimate \eqref{Real-lambda} since $\re\lambda_n\to-\infty$ thus proving  Claim 1.
 
{\textit{Claim 2.\,}} $\lim_{n\to\infty}|\ts_n|=\infty$. Indeed, seeking a contradiction and passing to a subsequence we may suppose that $\lim_{n\to\infty}|\ts_n|=r<\infty$. This contradicts estimate 
 \eqref{Imaginary-lambda} since $|\im\lambda_n|\to\infty$ and $\lim_{n\to\infty}|t_n|=\infty$ by Claim 1 thus proving Claim 2.
 
We are ready to prove formula \eqref{asympt-res}. Indeed,  we write
\begin{equation}\label{t3.14.9}
{|\im\lambda_n|}{e^{\ell\re\lambda_n}}=\big({|\im\lambda_n|}/{|\ts_n|}\big)\cdot\big(|\ts_n|{e^{-|t_n|}}\big)\cdot \big(e^{\ell\re\lambda_n+|t_n|)}\big)
\end{equation}
and establish convergence as $n\to\infty$ of each of the three factors in \eqref{t3.14.9} as follows.
By estimate   \eqref{Imaginary-lambda}, Claim 1 and Claim 2 
we have
${\sgn{(t_n)}\,\im\lambda_n}/{\ts_n}\to-{1}/{\ell}$
and therefore ${|\im\lambda_n|}/{|\ts_n|}\to{1}/{\ell}$.
Convergence of the second factor in \eqref{t3.14.9} is proved in Lemma~\ref{l3.10} (and we recall that $\ell\sqrt{\alpha}=p+\rmi q$).
Finally, by \eqref{Real-lambda} and Step 1 we have
$\ell\re\lambda_n+|t_n|\to0$.
Now formula \eqref{asympt-res} and thus (ii) follows by passing to the limit in \eqref{t3.14.9}.	 	

\noindent\textit{Proof of (iii).} Fix $\eta>0$ and $\delta\in(0,{1}/{\ell})$. Seeking a contradiction, suppose there is an infinite  sequence $\{\lambda_n\}_{n\in\NN}$ of $\alpha$-resonances such that
\begin{equation}\label{t3.14.10}
\re\lambda_n>-\eta-\delta\ln{(1+|\im\lambda_n|)}\;\mbox{for any}\;n\in\NN.	
\end{equation}
By (i),  $\{\re\lambda_n\}_{n\in\NN}$ and $\{\im\lambda_n\}_{n\in\NN}$ are unbounded and thus passing to a subsequence we may assume that $|\re\lambda_{n}|\to\infty$ and $|\im\lambda_{n}|\to\infty$	as $n\to\infty$. By Theorem~\ref{t3.12} $\re\lambda_{n}\to-\infty$. Using \eqref{t3.14.10}, 
\begin{equation}\label{t3.14.11}
{|\im\lambda_{n}|}{e^{\ell\re\lambda_{n}}}>e^{-\eta\ell}{|\im\lambda_{n}|}{(1+|\im\lambda_{n}|)^{-\delta\ell}}\;\mbox{for any}\;n\in\NN.	
\end{equation}	
Passing to the limit in \eqref{t3.14.11} and using (ii) and  $\delta\ell<1$ leads to a contradiction.
\end{proof}	

\begin{example}\label{e3.15}
We finish this section with an example illustrating the results on Theorem ~\ref{t3.14}. Let $a=\bigl[\begin{smallmatrix}0&1\\-1&0\end{smallmatrix}\bigr]$ and $\ell=3$. In this case $\spec(a)=\{\pm\rmi\}$. Below are plots of $\rmi$-resonances and $-\rmi$-resonances that show the behavior proved in this section. In particular, we can see that there are finitely many resonances with positive real part, there are infinitely many in the second and third quadrant and the behavior predicted by \eqref{asympt-res} is true.

\begin{figure*}[ht!]
\includegraphics[width=.8\textwidth]{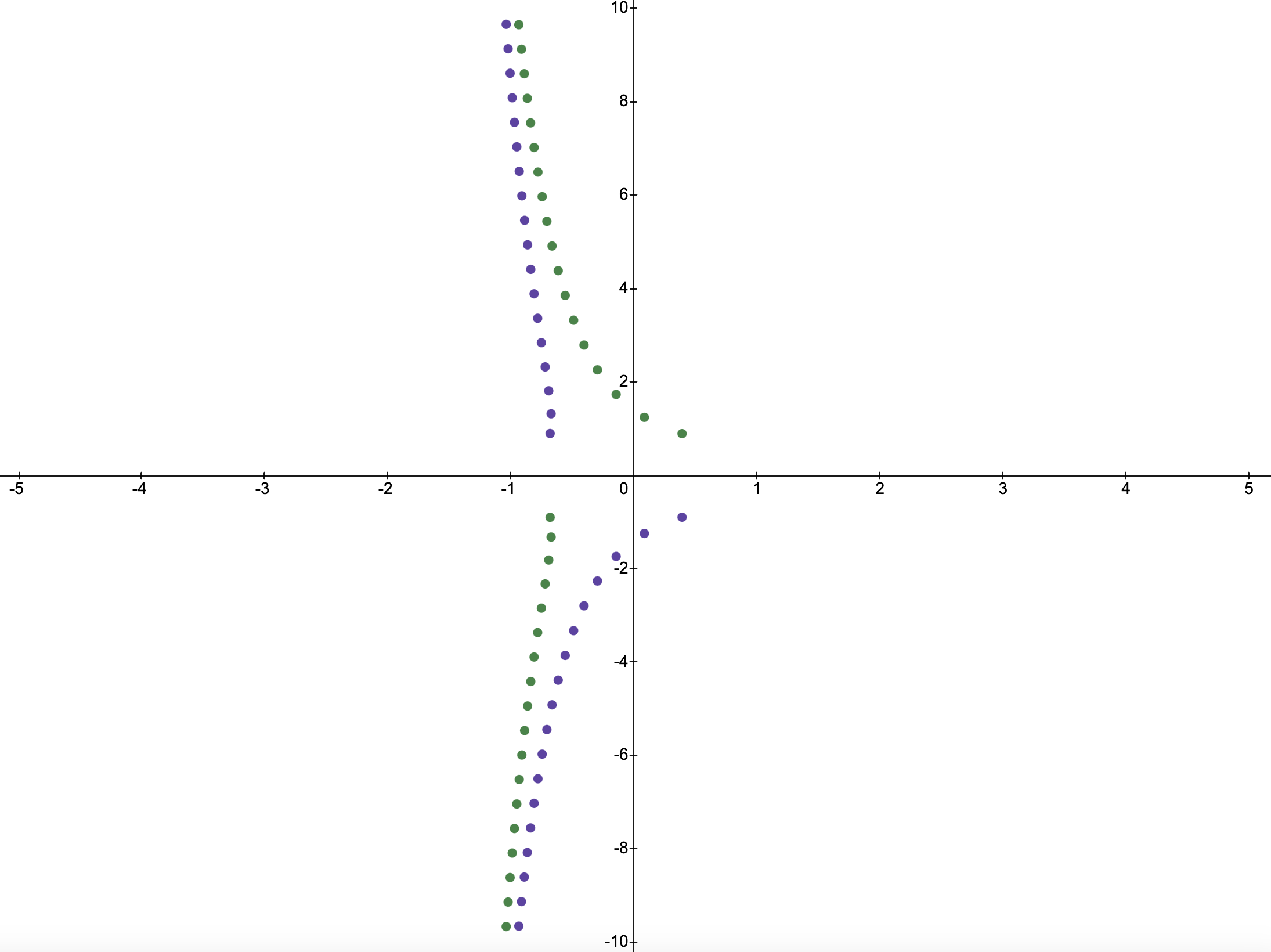}
\caption{A plot of the resonances of $A$ satisfying the condition $|\im\lambda|\leq 10$. The green dots are the $\rmi$-resonances, while the magenta dots are the $-\rmi$ resonances.}
\end{figure*}

\end{example}	

\section{Real and pure imaginary resonances}\label{Real-Zeros}
In this section we will use equations \eqref{wwa}, \eqref{wwb} and \eqref{FOUR3} to describe the resonances located on the real and pure imaginary lines. We begin with the case when the value $\alpha\in\spec(a)$ in equations \eqref{wwa}, \eqref{wwb} and \eqref{FOUR3} is real and, moreover, is positive. 

\begin{theorem}\label{thm: 3.1}  Assume $0\notin\spec(a)$ and let $\alpha$ be a positive eigenvalue of the matrix $a$. 
If $\ell\sqrt{\alpha} =k\pi$ or $\ell\sqrt{\alpha} =\pi/2+k\pi$ for some $k=0,1,\ldots$ then $\lambda=0$ while if $\alpha=1/\ell^2$ then $\lambda=-1/\ell$ is a resonance generated by $\alpha$. 
Assuming $\ell\sqrt{\alpha} \neq k\pi$,
$\ell\sqrt{\alpha} \neq\pi/2+k\pi$, $k\in\ZZ$, and $\alpha\neq 1/\ell^2$, the following holds:
\begin{enumerate}
\item[(i)] There are no resonances generated by $\alpha$ such that $\lambda>\sqrt{\alpha}$.

\item[(ii)] If $\ell\sqrt{\alpha} \in\big(0,1\big)$ then there is at exactly one solution $\lambda$ to \eqref{wwb} such that $\lambda<-\sqrt{\alpha}$ which gives a negative resonance generated by $\alpha$. However, if $\ell\sqrt{\alpha} >1$ then there are no resonances generated by $\alpha$ such that $\lambda<-\sqrt{\alpha}$.

\item[(iii)] If $\ell\sqrt{\alpha} \in\big(0,\pi\big)$ then there is a unique solution $\lambda$ to \eqref{wwa} such that $\lambda\in\big(0,\sqrt{\alpha}\big)$ which gives a positive resonance generated by $\alpha$.

\item[(iv)] Furthermore, if $\ell\sqrt{\alpha} \in\big(k \pi , (k+1)\pi\big)$ for some $k=0,1,\ldots$ then there are  exist $k+1$ solutions $\lambda_j$, $j=0,1,\ldots,k$, to \eqref{wwa} such that \[\lambda_j\in\big((\sqrt{\alpha-(\pi/2+j\pi)^2/\ell^2}),(\sqrt{\alpha-(j\pi)^2/\ell^2}\big)\big), j=0,1,\ldots,k-1, \lambda_k\in\big(0,\sqrt{\alpha-(k\pi)^2/\ell^2}\big),\] each of which gives a positive resonance generated by $\alpha$.

\item[(v)] If $\ell\sqrt{\alpha} \in\big(0,1\big)$ then there are no negative resonances generated by $\alpha$ and such that
$\lambda\in(-\sqrt{\alpha},0)$.

\item(vi)] If $\ell\sqrt{\alpha} \in\big(1,\pi/2\big)$ then there is at least one solution $\lambda$ to \eqref{wwb} such that $\lambda\in\big(-\sqrt{\alpha}, 0\big)$ which gives a negative resonance generated by $\alpha$. 

\item[(vii)] Furthermore, if $\ell\sqrt{\alpha} \in\big(k\pi, (k+1)\pi\big)$ for some $k=1,2,\ldots$ then there are $k$ solutions $\lambda_j$ to \eqref{wwb} such that $\lambda_j\in\big(-(\sqrt{\alpha-(j\pi)^2/\ell^2}), -(\sqrt{\alpha-(\pi/2+j\pi)^2/\ell^2})\big)$, $j=0,1,\ldots,k-1$
each of which gives a negative resonance generated by $\alpha$. 

\item[(viii)] Finally, there are no pure complex resonances $\lambda$ generated by $\alpha$. 
\end{enumerate}
\end{theorem}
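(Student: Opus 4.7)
The plan is to reduce the transcendental equations \eqref{Wa} and \eqref{Wb} (equivalently \eqref{wwa} and \eqref{wwb}) to real equations in a single variable by splitting the real line into the three regions $\lambda>\sqrt{\alpha}$, $|\lambda|<\sqrt{\alpha}$, $\lambda<-\sqrt{\alpha}$, and then to count roots by the intermediate value theorem together with monotonicity. First, dispose of the exceptional cases collected at the start of the statement: Lemma~\ref{lzero} identifies $\lambda=0$ as a resonance precisely when $\ell\sqrt{\alpha}\in(\pi/2)\ZZ$, and Lemma~\ref{l1l} gives the case $\alpha=1/\ell^2$. Throughout the remainder these are excluded, so all $\cos$ and $\sin$ factors that appear below are nonzero.

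For (i), (ii), set $\kappa=\sqrt{\lambda^2-\alpha}>0$. Equation \eqref{Wa} becomes $\lambda\cosh(\ell\kappa)+\kappa\sinh(\ell\kappa)=0$ and \eqref{Wb} becomes $\cosh(\ell\kappa)+(\lambda/\kappa)\sinh(\ell\kappa)=0$. If $\lambda>\sqrt{\alpha}$ both left-hand sides are strictly positive, giving (i). If $\lambda<-\sqrt{\alpha}$, then \eqref{Wa} gives $\lambda=-\kappa\tanh(\ell\kappa)$, whence $\kappa^2\bigl(1-\tanh^2(\ell\kappa)\bigr)=-\alpha<0$, a contradiction; \eqref{Wb} gives $\lambda=-\kappa\coth(\ell\kappa)$, which upon substituting $\kappa^2=\lambda^2-\alpha$ reduces to $u/\sinh u=\ell\sqrt{\alpha}$ with $u=\ell\kappa$, and since $u/\sinh u$ is strictly decreasing from $1$ to $0$ on $(0,\infty)$, a unique solution exists precisely when $\ell\sqrt{\alpha}<1$, proving (ii).

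For (iii)--(vii), substitute $\lambda=\sqrt{\alpha}\sin\theta$ with $\theta\in(-\pi/2,\pi/2)$. By the branch convention of the introduction $\sqrt{\lambda^2-\alpha}=i\sqrt{\alpha}\cos\theta$, so elementary identities reduce \eqref{Wa} and \eqref{Wb} to $\sin\psi(\theta)=0$ and $\cos\psi(\theta)=0$ respectively, where $\psi(\theta):=\theta-c\cos\theta$ and $c:=\ell\sqrt{\alpha}$; note $\psi'(\theta)=1+c\sin\theta$. On $(0,\pi/2)$, $\psi$ is a strictly increasing bijection onto $(-c,\pi/2)$, so counting integer multiples $k'\pi$ in $(-c,\pi/2)$ yields exactly $k+1$ positive solutions to \eqref{Wa} when $c\in(k\pi,(k+1)\pi)$; from $\psi(\theta_j)=-j\pi$ one reads $\ell\beta_j=c\cos\theta_j=\theta_j+j\pi\in(j\pi,j\pi+\pi/2)$, which translates via $\lambda_j^2=\alpha-\beta_j^2$ into the intervals claimed in (iii), (iv). On $(-\pi/2,0)$: if $c\leq 1$ then $\psi'\geq 0$, so $\psi$ is monotone with range $(-\pi/2,-c)$, which for $c\in(0,1)$ contains no element of $\pi\ZZ\cup(\pi/2+\pi\ZZ)$, yielding (v); if $c>1$ there is a unique critical point $\theta_0=-\arcsin(1/c)$ with $\psi(\theta_0)=-\arcsin(1/c)-\sqrt{c^2-1}<-\pi/2$, and $\psi$ is decreasing on $(-\pi/2,\theta_0)$ and increasing on $(\theta_0,0)$. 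For (vi) with $c\in(1,\pi/2)$, the increasing branch has range $(\psi(\theta_0),-c)\ni-\pi/2$ (using $1<c<\pi/2$), producing the claimed solution to \eqref{Wb}. For (vii), the relevant values $\pi/2+k'\pi$ in the two branch ranges of $\psi$ on $(-\pi/2,0)$ are enumerated, and the identity $\ell\beta=c\cos\theta$ matches each to an interval $\ell\beta_j\in(j\pi,j\pi+\pi/2)$ and hence to the stated range for $\lambda_j$.

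For (viii), setting $\lambda=i\tau$ with $\tau\neq 0$ real gives $\sqrt{\lambda^2-\alpha}=i\gamma$ with $\gamma:=\sqrt{\alpha+\tau^2}>0$; then \eqref{Wa} becomes $i\tau\cos(\ell\gamma)-\gamma\sin(\ell\gamma)=0$, whose real and imaginary parts force $\sin(\ell\gamma)=0$ and $\tau\cos(\ell\gamma)=0$ simultaneously, impossible since $\sin$ and $\cos$ have no common zeros and $\tau\neq 0$; \eqref{Wb} becomes $\cos(\ell\gamma)+i(\tau/\gamma)\sin(\ell\gamma)=0$ and is ruled out by the same reasoning. The main obstacle is the bookkeeping in (vii): when $c>1$, $\psi$ is non-monotone on $(-\pi/2,0)$ and a given value $\pi/2+k'\pi$ may be attained zero, once, or twice, so a branch-by-branch analysis, combined with the monotone dependence of $u=c\cos\theta$ on $\theta$ along each branch of $\psi$, is needed to match the index $j$ to the exact interval $(j\pi,j\pi+\pi/2)$ for $\ell\beta_j$.
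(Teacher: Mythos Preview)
Your argument is correct. For (i), (ii) and (viii) you proceed essentially as the paper does (the paper reaches the same equation $t/\sinh t=\ell\sqrt\alpha$ in (ii) after an extra intermediate substitution, and for (viii) it works with the four equations \eqref{FOUR3} rather than directly with \eqref{Wa}--\eqref{Wb}, but the content is identical).

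For (iii)--(vii), however, you take a genuinely different route. The paper sets $\gamma=\ell\sqrt{\alpha-\lambda^2}$ and rewrites \eqref{wwa}, \eqref{wwb} as $\tan\gamma=g(\gamma)$, $\tan\gamma=\pm 1/g(\gamma)$ with $g(\gamma)=\sqrt{\alpha\ell^2-\gamma^2}/\gamma$; it then counts intersections of $\tan$ with the monotone algebraic functions $g$ and $1/g$ on successive branches $(j\pi,j\pi+\pi/2)$, and for (v)--(vi) analyses the sign of $G(\gamma)=1/g(\gamma)-\tan\gamma$ and of $G'$ near $\gamma=0$. Your substitution $\lambda=\sqrt\alpha\,\sin\theta$ collapses both equations into a single smooth object, $\sin\psi(\theta)=0$ and $\cos\psi(\theta)=0$ with $\psi(\theta)=\theta-c\cos\theta$; counting solutions then reduces to reading off the range of the monotone (or two-branch) map $\psi$ on $(0,\pi/2)$ and $(-\pi/2,0)$. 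This is cleaner, it treats \eqref{wwa} and \eqref{wwb} symmetrically, and the location statements in (iv) and (vii) fall out of the identity $\ell\beta=c\cos\theta=\psi(\theta)+(\text{integer})\pi$ rather than from a separate interval count. The paper's approach has the compensating advantage that the variable $\gamma$ is the ``physical'' one appearing in \eqref{Wa}--\eqref{Wb}, so no branch checking of $\sqrt{\lambda^2-\alpha}=i\sqrt\alpha\cos\theta$ is needed. Both arguments leave (vii) at the same level of detail: once $c>1$ the relevant map ($\psi$ on $(-\pi/2,0)$ for you, $G$ on $(0,\ell\sqrt\alpha)$ for the paper) is non-monotone, and one only establishes the existence of the $k$ roots in the stated subintervals without a full uniqueness argument.
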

\begin{proof}
The first (unnumbered) assertions in the theorem hold by  Lemmas \ref{lzero} and \ref{l1l}. We now consider only the cases when $\ell\sqrt{\alpha} \neq k\pi$,
$\ell\sqrt{\alpha} \neq\pi/2+k\pi$, $k\in\ZZ$ and $\alpha\neq 1/\ell^2$.

We will be looking for real solutions $\lambda\in\RR\setminus\{0\}$ of \eqref{wwa} or \eqref{wwb} and split the proof in the following two cases, 
\begin{equation}\label{AB} \text{ Case A:  $|\lambda|>\sqrt{\alpha}$ and Case B: $|\lambda|<\sqrt{\alpha}$}.
\end{equation}
We begin with Case A in \eqref{AB}.  We set $\beta:=\ell\sqrt{\lambda^2-\alpha}$, note that $\beta>0$ 
and introduce the function $f(\beta)=(\sqrt{\beta^2+\alpha\ell^2})/\beta$.

First, let us assume that $\lambda>\sqrt{\alpha}$ so that $\lambda=(\sqrt{\beta^2+\alpha\ell^2})/\ell>\sqrt{\alpha}$.
Equations \eqref{wwa} and \eqref{wwb}, respectively, become
$\tanh \beta=-f(\beta)$ and $\tanh \beta=-1/f(\beta)$;
they clearly do not have positive solutions proving item {\em (i)} in the theorem.

Next, let us assume that $\lambda<-\sqrt{\alpha}$ so that $\lambda=-(\sqrt{\beta^2+\alpha\ell^2})/\ell<-\sqrt{\alpha}$. Equations \eqref{wwa} and \eqref{wwb}, respectively, become
$\tanh \beta=f(\beta)$ and $\tanh \beta=1/f(\beta)$.
Since $\tanh \beta<1$ and $f(\beta)>1$, the first equation has no positive solutions. To study the second equation, we make the change of variables  $\beta=\ell\sqrt{\alpha}\sinh{t}$, $t\geq 0$. Since $f(\ell\sqrt{\alpha}\sinh{t})=\coth{t}$ and $\tanh$ is increasing on $\RR$, thus one-to-one, we obtain the equation $ \tanh \beta=1/f(\beta)$ which is equivalent to $F(t)=0$, where		
the function $F(t):=\ell\sqrt{\alpha}\sinh{t}-t$, $t\geq 0$, has derivative
$F'(t)=\ell\sqrt{\alpha}\cosh{t}-1$. First, if $\ell\sqrt{\alpha}>1$ then $F$ is increasing on $[0,\infty)$ and thus has no positive real roots. If $\ell\sqrt{\alpha}<1$ then $F'$ has a unique root at $t_{\ell,\alpha}^*=\mathrm{arccosh}(\frac{1}{\ell\sqrt{\alpha}})$, is decreasing on $[0, t_{\ell,\alpha}^*]$ and increasing on $[ t_{\ell,\alpha}^*,\infty)$. Since $F(0)=0$ and $\lim_{t\to\infty}{F(t)}=\infty$, we infer that $F$ has exactly one positive solution, proving (ii).

We now consider Case B in \eqref{AB}, that is, consider $\lambda\in(-\sqrt{\alpha}, \sqrt{\alpha})$. We set $\gamma:=(\sqrt{\alpha-\lambda^2})\ell$ so that $i\gamma=\ell\sqrt{\lambda^2-\alpha}$, note that $\gamma\in(0,\ell\sqrt{\alpha} )$ and introduce the function $g(\gamma):=(\sqrt{\alpha\ell^2-\gamma^2})/\gamma$. The function  $g(\cdot)$ monotonically decays from $+\infty$ to $0$ when $\gamma$ changes from $0$ to $\ell\sqrt{\alpha} $. 
We recall that $\tanh(iz)=i\tan(z)$.

First, let us assume that $\lambda\in(0,\sqrt{\alpha})$ so that $\lambda=(\sqrt{\alpha\ell^2-\gamma^2})/\ell$. Equations \eqref{wwa} and \eqref{wwb}, respectively, become
$
\tan\gamma=g(\gamma)$ and $\tan\gamma=-1/g(\gamma)$.
Clearly, the second of the two equations  has no positive solutions $\gamma$ while if 
$\ell\sqrt{\alpha} \in\big(0,\pi\big)$ then the graphs of $g(\cdot)$ and $\tan(\cdot)$ have exactly one intersection corresponding to the point $\gamma_0$ in the interval $(0,\min\{\pi/2,\ell\sqrt{\alpha} \})$ thus proving item {\em (iii)} in the theorem. 

To prove item {\em (iv)}, we assume $\ell\sqrt{\alpha} \in\big(k \pi ,(k+1)\pi\big)$ for some $k=0,1,\ldots$ and find $k+1$ intersections of the graphs corresponding to the points $\gamma_j\in\big(j\pi, \pi/2+j\pi\big)$, $j=0,1,\ldots,k-1$, and the last point $\gamma_k$ that belongs to the interval $\big(k\pi,\min\{\ell\sqrt{\alpha} , \pi/2+k\pi\}\big)$.

Next, let us assume that $\lambda\in(-\sqrt{\alpha},0)$ so that $\lambda=-(\sqrt{\alpha\ell^2-\gamma^2})/\ell$. Equations \eqref{wwa} and \eqref{wwb}, respectively, become
$\tan\gamma=-g(\gamma)$ and $\tan\gamma=1/g(\gamma)$.
Clearly, the first equation has no positive solutions $\gamma$. To deal 
with the second equation we introduce the function $G(\gamma)=1/g(\gamma)-\tan\gamma$, $\gamma\in(0,\ell\sqrt{\alpha} )$, so that $G'(\gamma)=\psi(\gamma)\big((\alpha\ell^2-\gamma^2)^{-3/2}\cos^{-2}\gamma\big)$ where we temporarily denote $\psi(\gamma)=\alpha\ell^2\cos^2\gamma-(\alpha\ell^2-\gamma^2)^{3/2}$.
We notice that $G(0)=0$.

If $\ell\sqrt{\alpha} <1$ then $G'(\gamma)>0$
for all $\gamma\in(0,\ell\sqrt{\alpha} )$ because $\psi(0)>0$ and $\psi'(\gamma)>0$, and thus $G(\cdot)$ has no roots just proving item {\em (v)} in the theorem.

If $\ell\sqrt{\alpha} >1$ then  $G'(0)<0$ and $G(\gamma)\to+\infty$ as $\gamma\to\ell\sqrt{\alpha} ^-$, and thus $G(\cdot)$ has at least one root 
that corresponds to a $\lambda\in(-\sqrt{\alpha},0)$ thus proving item {\em (vi)} in the theorem.

To prove {\em (vii)}, we notice that the graph of the monotonically growing from $0$ to $+\infty$ function $1/g(\gamma)$ as $\gamma$ changes from $0$ to $\ell\sqrt{\alpha} $ has $k$ intersections with the graph of $\tan(\cdot)$ corresponding to some points $\gamma_j\in(j\pi,\pi/2+j\pi)$, $j=0,1,\ldots, k-1$. There is no point $\gamma_k$ on this list because for $j=k\ge1$ the conditions
$\gamma_k\in\big(k\pi, \pi/2+k\pi\big)$ and $\ell\sqrt{\alpha} \in \big(k\pi, \pi/2+k\pi\big)$  for  $k\ge1$
contradict the equality $G(\gamma_k)=0$. Indeed, $G(\gamma)=0$ if and only if  the equality $(2-\alpha\ell^2/\gamma^2)+(-1+\alpha\ell^2/\gamma^2)\cos^{-2}\gamma=0$ holds,  
but the inequality $2<\alpha\ell^2/\gamma^2$ is in contradiction with the conditions.
Finally, each of the points $\gamma_j$ just constructed  gives a respective resonance $\lambda_j$ proving  item {\em (vii)} in the theorem.

To begin the proof of {\em (viii)}, we let $\lambda=iw$ for some $w\in\RR\setminus\{0\}$ and use $\sqrt{\lambda^2-\alpha}=i\sqrt{w^2+\alpha}$ and $\sqrt{-\alpha}=i\sqrt{\alpha}$ to re-write equations \eqref{FOUR3} as follows,
\begin{equation*}
\sqrt{\alpha}e^{i\ell\sqrt{w^2+\alpha}}=\pm i(\sqrt{w^2+\alpha}-w),\,
\sqrt{\alpha}e^{i\ell\sqrt{w^2+\alpha}}=\pm (\sqrt{w^2+\alpha}-w).
\end{equation*}
The first equation yields $\cos(\ell\sqrt{w^2+\alpha})=0$ and $\sqrt{\alpha}\sin(\ell\sqrt{w^2+\alpha})=\pm(\sqrt{w^2+\alpha}-w)$ or, equivalently,
$\ell\sqrt{w^2+\alpha}=\pi/2+k\pi$ and $\sqrt{\alpha}(-1)^k=\pm(\sqrt{w^2+\alpha}-w)$ and therefore we conclude that $\sqrt{w^2+\alpha}=w\pm\sqrt{\alpha}(-1)^k$ for some $k\in\ZZ$ which is not possible since $w\neq0$. The argument for the second equation is analogous, and so \eqref{FOUR3} has no pure imaginary solutions thus proving item {\em (viii)} in the theorem.
\end{proof}
Next, we consider the resonances generated by negative eigenvalues $\alpha\in\spec(a)$ that are located on the real or imaginary axis, if any. 
We begin with the following elementary observations.
\begin{remark}\label{r4.2}
\begin{enumerate}
\item[(i)] The equation $\sqrt{\gamma^2+1}-\ln{(1+\sqrt{\gamma^2+1})}+\ln{\gamma}=0$, $\gamma\in\RR$, has a unique solution denoted by $\gamma_*\in(0,1)$;
\item[(ii)]  For any $\gamma\in(0,\gamma_*)$ the equation $\gamma\cosh{t}=t$ has two roots, $t_1(\gamma)\in\big(0,\arcsinh({1}/{\gamma})\big)$ and $t_2(\gamma)\in\big(\arcsinh({1}/{\gamma}),\infty\big)$;
\item[(iii)]  The equation $\gamma_*\cosh{t}=t$ has a unique root $t=\arcsinh({1}/{\gamma_*})$;
\item[(iv)]   If  $\gamma>\gamma_*$ the equation $\gamma\cosh{t}=t$ has no roots.
\hfill$\Diamond$
\end{enumerate}
\end{remark}
\begin{theorem}\label{t4.3} Assume that $0\notin\spec(a)$ and let $\alpha$ be a negative eigenvalue of $a$. Let $\gamma_*$ be constant introduced in Remark~\ref{r4.2}. Then,
\begin{enumerate}
\item[(i)] If $\ell\sqrt{-\alpha}\in(0,\gamma_*)$ then there are two real $\alpha$-resonances, and both are negative;
\item[(ii)] If $\ell\sqrt{-\alpha}=\gamma_*$ then $\lambda=-{1}/{\ell}$ is the unique real $\alpha$-resonance;
\item[(iii)] If $\ell\sqrt{-\alpha}>\gamma_*$ there are no real $\alpha$-resonances;
\item[(iv)] There are no pure imaginary nonzero $\alpha$-resonances.
\end{enumerate}
\end{theorem}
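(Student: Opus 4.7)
The strategy is to reduce the real $\alpha$-resonance problem to the single scalar equation $\gamma\cosh{t}=t$ with $\gamma:=\ell\sqrt{-\alpha}$, analysed in Remark~\ref{r4.2}, and to rule out pure imaginary $\alpha$-resonances by a real/imaginary decomposition of \eqref{Wa}--\eqref{Wb}.

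For parts (i)--(iii), since $\alpha<0$ one has $\sqrt{\alpha}=\rmi\sqrt{-\alpha}$ and $\sqrt{-\alpha}>0$. I will parametrize real $\lambda$ by $\lambda=-\sqrt{-\alpha}\sinh{t}$, $t\in\RR$; this is a bijection of $\RR$ onto itself and gives $\sqrt{\lambda^{2}-\alpha}=\sqrt{-\alpha}\cosh{t}>0$. Since $\lambda^{2}\geq 0>\alpha$, equations \eqref{FOUR3} apply with the given $\alpha$. Equations (a) and (b) have purely imaginary nonzero left-hand side $\rmi\sqrt{-\alpha}e^{\gamma\cosh{t}}$ and real right-hand sides, hence admit no real solution; equation (d) reduces to $\sqrt{-\alpha}e^{\gamma\cosh{t}}=-\sqrt{-\alpha}e^{t}$, impossible since the left-hand side is positive and the right-hand side negative; equation (c) collapses, via $\cosh{t}+\sinh{t}=e^{t}$, to $\gamma\cosh{t}=t$. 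For $t\leq 0$ this has no root, so Remark~\ref{r4.2}(ii)--(iv) immediately yields (i), (ii) and (iii). In case (ii), the unique root $t_{*}=\arcsinh(1/\gamma_{*})$ satisfies $\sinh{t_{*}}=1/\gamma_{*}$, producing $\lambda=-\sqrt{-\alpha}/\gamma_{*}=-1/\ell$, as claimed.

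For part (iv), write $\lambda=\rmi w$ with $w\in\RR\setminus\{0\}$ and apply \eqref{Wa}--\eqref{Wb}, supplemented by the edge condition $1+\lambda\ell=0$ when $\lambda^{2}\in\spec(a)$ (which cannot hold for pure imaginary $\lambda$). Split by the sign of $|\alpha|-w^{2}$. If $w^{2}<|\alpha|$, then $\sqrt{\lambda^{2}-\alpha}=u$ with $u:=\sqrt{|\alpha|-w^{2}}>0$; the imaginary part of \eqref{Wa} forces $w\cosh(\ell u)=0$ (a contradiction), while the real part of \eqref{Wb} demands $\cosh(\ell u)=0$ (impossible). If $w^{2}>|\alpha|$, then by the branch convention $\sqrt{\lambda^{2}-\alpha}=\rmi v$ with $v:=\sqrt{w^{2}-|\alpha|}>0$; the hyperbolic functions become trigonometric through $\cosh(\rmi\ell v)=\cos(\ell v)$, $\sinh(\rmi\ell v)=\rmi\sin(\ell v)$, and separating real and imaginary parts in \eqref{Wa} (respectively \eqref{Wb}) yields the incompatible pair $\sin(\ell v)=0$ and $\cos(\ell v)=0$. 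Finally, if $w^{2}=|\alpha|$, then $\lambda^{2}=\alpha\in\spec(a)$, and Lemma~\ref{l1l} restricts any such $\alpha$-resonance to $\lambda=-1/\ell$, which is real, not pure imaginary.

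The main technical point will be the case analysis in (iv), where one must carefully respect the branch convention for $\sqrt{\dott}$ (cut along $(-\infty,0]$, values on the cut taken from $\CC_{+}$) and keep the trigonometric/hyperbolic split straight across the three sub-cases; the real-$\lambda$ analysis is routine once the substitution $\lambda=-\sqrt{-\alpha}\sinh{t}$ is in place and Remark~\ref{r4.2} is invoked.
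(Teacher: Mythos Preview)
Your proof is correct and shares the paper's core idea for (i)--(iii): reduce the real $\alpha$-resonance condition to the scalar equation $\gamma\cosh t=t$ with $\gamma=\ell\sqrt{-\alpha}$ and invoke Remark~\ref{r4.2}. The execution differs in a useful way. The paper reaches this equation via the complex parametrization $\lambda=\sqrt{\alpha}\cosh z$ of Section~\ref{Complex-Resonances}, writes out the system $(S_0(p,q,\mu))$, and solves it to discover $s=(k+\tfrac12)\pi$ before arriving at $\gamma\cosh\big((-1)^kt\big)=(-1)^kt$. You instead parametrize real $\lambda$ directly by $\lambda=-\sqrt{-\alpha}\sinh t$, plug into \eqref{FOUR3}, and obtain $\gamma\cosh t=t$ from equation~(c) in one line while ruling out (a), (b), (d) by elementary real/imaginary and sign considerations; this is shorter and avoids the machinery of $(S_0)$. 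For (iv) the roles are reversed: the paper argues through \eqref{FOUR3} (obtaining, in Case~A, the impossible relation $\sqrt{w^2+\alpha}=w\pm(-1)^k\sqrt{-\alpha}$), while you use \eqref{Wa}--\eqref{Wb}, where separating real and imaginary parts immediately forces the incompatible pair $\cos(\ell v)=0=\sin(\ell v)$ (respectively $\cosh(\ell u)=0$ or $w\cosh(\ell u)=0$). Both routes are valid; yours is a bit more direct since no intermediate squaring/solving step is needed. The edge case $w^2=|\alpha|$ is correctly dispatched by Lemma~\ref{l1l} in your argument.
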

\begin{proof} We recall that $\lambda$ is $\alpha$-resonance provided $\lambda=\sqrt{\alpha}\cosh{z}$ and $z\in\CC$ satisfies equation \eqref{S0mu}. Imposing the condition $\lambda\in\RR$, denoting $t=\re  z$ and $s=\im z$, and using $\sqrt{\alpha}=\rmi\sqrt{-\alpha}$, we re-write 
\eqref{S0mu} as
\begin{equation}\label{t4.3.1}
\begin{split}
&\lambda=-\sqrt{-\alpha}\sinh{t}\sin{s},\quad
\sqrt{-\alpha}\cosh{t}\cos{s}=0,\\
&-\ell\sqrt{-\alpha}\cosh{t}\sin{s}+t=0,\quad
\ell\sqrt{-\alpha}\sinh{t}\cos{s}+s=\mu.\end{split}
\end{equation}
By the second equation in \eqref{t4.3.1} $s=(k+1/2)\pi$ for some 
$k\in\ZZ$ and so \eqref{t4.3.1} reads
\begin{equation}\label{t4.3.2}
\lambda=-\sqrt{-\alpha}\sinh{\big((-1)^kt\big)},\;s=\mu=(k+1/2)\pi,\;\ell\sqrt{-\alpha}\cosh{\big((-1)^kt\big)}=(-1)^kt.
\end{equation}
If $\ell\sqrt{-\alpha}\in(0,\gamma_*)$ then $(-1)^kt\in\{t_1(\gamma),t_2(\gamma)\}$ by Remark~\ref{r4.2}(ii) and thus the only real $\alpha$-resonances are $-\sqrt{-\alpha}\sinh{\big(t_1(\gamma)\big)}$ and $-\sqrt{-\alpha}\sinh{\big(t_2(\gamma)\big)}$, proving (i). If $\ell\sqrt{-\alpha}=\gamma_*$ then by Remark~\ref{r4.2}(iii) the last equation in \eqref{t4.3.2} has a unique solution $(-1)^{k}t=\arcsinh({1}/{\gamma_*})$ and so 
$\lambda=-\sqrt{\alpha}\sinh{\big((-1)^kt\big)}=-{\sqrt{-\alpha}}/{\gamma_*}=-{1}/{\ell}$, thus proving (ii). Assertion (iii) follows immediately since by Remark~\ref{r4.2}(iv) the last equation in \eqref{t4.3.2} has no solutions in the case when $\ell\sqrt{-\alpha}>\gamma_*$. 
 
To begin the proof of {\em (iv)}, let $\lambda=iw$ for a $w\in\RR\setminus\{0\}$ be a solution to one of the equations in \eqref{FOUR3}. Clearly, $|w|\neq\sqrt{-\alpha}$ as otherwise $\lambda^2=\alpha\in\spec(a)$ and by Lemma \ref{l1l} $\lambda=-1/\ell$ must be real, which is not the case. We will thus consider two cases:
 Case A: $|w|>\sqrt{-\alpha}$ and Case B: $|w|<\sqrt{-\alpha}$.
In Case A $\sqrt{\lambda^2-\alpha}=i\sqrt{w^2+\alpha}$ and thus \eqref{FOUR3} become
\begin{equation*}
\sqrt{-\alpha}e^{i\ell\sqrt{w^2+\alpha}}=\pm(\sqrt{w^2+\alpha}-w),\,
\sqrt{-\alpha}e^{i\ell\sqrt{w^2+\alpha}}=\pm i (\sqrt{w^2+\alpha}-w).
\end{equation*}
The first equation yields $\sin(\ell\sqrt{w^2+\alpha})=0$ and $\sqrt{-\alpha}\cos(\ell\sqrt{w^2+\alpha})=\pm(\sqrt{w^2+\alpha}-w)$, or equivalently, $\ell\sqrt{w^2+\alpha}=k\pi$ and $\sqrt{-\alpha}(-1)^k=\pm(\sqrt{w^2+\alpha}-w)$ for some $k\in\ZZ$, and therefore $\sqrt{w^2+\alpha}=w\pm(-1)^k\sqrt{-\alpha}$ which is not possible as $|w|\neq\sqrt{-\alpha}$. The argument for the second equation is similar completing the proof in Case A.
  In Case B one has $\sqrt{\lambda^2-\alpha}=\sqrt{-w^2-\alpha}\in\RR$ and thus \eqref{FOUR3} becomes
\[
i\sqrt{-\alpha}e^{\ell\sqrt{-w^2-\alpha}}=\pm(\sqrt{-w^2-\alpha}-iw),\,
\sqrt{-\alpha}e^{\ell\sqrt{-w^2-\alpha}}=\pm (\sqrt{-w^2-\alpha}-iw).
\]
The first equation yield $\sqrt{-w^2-\alpha}=0$ which is not possible as $|w|\neq\sqrt{-\alpha}$ while the second equation yield $w=0$ which is not the case either thus completing the proof of item {\em (iv)} in the theorem in Case B.
\end{proof}

\section{Multiplicity of Resonances and the Weyl Law}\label{Multiplicity}
In this section we study multiplicities of resonances of the Schr\"odinger operator $A$. In the sequel we denote by $R_A^\infty(\cdot)$
the finitely meromorphic extension of the resolvent $\lambda\mapsto(\lambda^2-A)^{-1}$  to the entire complex plane $\CC$ so that  $R_A^\infty(\lambda) \in\cC\cL\big(L^2_{\textrm{comp}}(\RR; \CC^d),H^2_{\loc}(\RR; \CC^d)\big)$ for any $\lambda\in\CC$ that is not a zero of the Jost function $\cF$, c.f. \eqref{resA}.
We recall our notation $\oor(\lambda_0,f)$ for the multiplicity of the zero $\lambda_0$ of the holomorphic function $f$ defined in a neighborhood of $\lambda_0$ and $m_\rma(\alpha,a)$ for the algebraic multiplicity of the eigenvalue $\alpha$ of the matrix $a$. 

In the case of matrix valued potentials we choose to use the definition of multiplicity of resonances given in  \cite{G,GH}, cf.\ \eqref{def-multiplicity} below. Our goal is to
reduce the computation of multiplicity of resonances of the linear operator $A$ to the computation of multiplicity of zeros of the function ${\mathcal{F}}(\cdot)={\det}_{d\times d}W(\cdot)$ introduced in Section~\ref{Jost-Green}. To establish this result we need to introduce some additional notation in order to prove a special case of the Birman-Schwinger principle.
We fix $\psi\in L^\infty_{\mathrm{comp}}(\RR,\RR)$ such that $\psi(x)=1$ for all $x\in[-\ell,\ell]$ and $\psi(x)=0$ whenever $|x|\geq \ell+1$. Moreover, for the $d\times d$ matrix $a$, we recall the decomposition
$a=u_a|a|$, where $u_a$ is an unitary matrix and $|a|=(a^*a)^{1/2}$.
We denote by $V_\mathrm{left}, V_{\mathrm{right}}:\RR\to\CC^{d\times d}$ the matrix-valued potentials defined by 
$V_\mathrm{left}=u_a|a|^{1/2}\chi_{[-\ell,\ell]}$ and $V_\mathrm{right}=|a|^{1/2}\chi_{[-\ell,\ell]}$	
so  that $V=V_{\mathrm{left}}V_{\mathrm{right}}$.
Throughout, we denote by $M_W$ the operator acting in $L^2(\RR,\CC^d)$ of multiplication by the matrix-valued function $W(\cdot)$. Similarly, we denote by $M_\psi=M_{\psi I_{d\times d}}$. For the free Laplacian, $\spec(\partial_x^2)=\spec_{\mathrm{ess}}(\partial_x^2)=(-\infty,0]$ and
\begin{equation}\label{free resolvent}
\Big((\lambda^2-\partial_x^2)^{-1}f\Big)(x)=\frac{1}{2\lambda}\int_{\RR}e^{-\lambda|x-y|f(y)\rmd y}\;\mbox{for any}\; f\in L^2(\RR,\CC^d),\,x\in\RR,\,\lambda\in\CC_+.
\end{equation}
The resolvent $\lambda\mapsto(\lambda^2I_{L^2}-\partial_x^2)^{-1}$ can be extended holomorphically  to $\CC\setminus\{0\}$ as a function taking values in $\cC\cL\big(L^2_{\textrm{comp}}(\RR; \CC^d), H^2_{\loc}(\RR; \CC^d)\big)$. We denote this extension by $R_0^\infty(\cdot)$ and note that formula \eqref{free resolvent} holds whenever $\re\lambda<0$, $f\in L^2_{\mathrm{comp}}(\RR,\CC^d)$ if the inverse in the left hand side is replaced by $R_0^\infty(\lambda)$.  Since $V\in L^\infty_{\mathrm{comp}}(\RR,\CC^{d\times d})$ and $\psi\in L^\infty_{\mathrm{comp}}(\RR,\RR)$ we have $M_\psi\in\mathcal{B}(L^2(\RR,\CC^d),L^2_{\mathrm{comp}}(\RR,\CC^d))$ and $M_V\in\cC\cL(L^2_{\textrm{loc}}(\RR; \CC^d),L^2(\RR; \CC^d))$, which implies that $M_VR_0^\infty(\lambda)M_\psi\in\mathcal{B}(L^2(\RR,\CC^d))$ for any $\lambda\in\CC\setminus\{0\}$.
\begin{lemma}\label{l5.1} Assume that $0\notin\spec(a)$. Then 
$M_VR_0^\infty(\lambda)M_\psi$ and $M_{V_\rmr}R_0^\infty(\lambda)M_{V_\rml}$ are trace class operators in $L^2(\RR;\CC^d)$ for any $\lambda\in\CC\setminus\{0\}$.
\end{lemma}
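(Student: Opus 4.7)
The plan is to reduce the trace-class assertions to the classical fact that on a bounded interval $K\subset\RR$, the Sobolev embedding $H^2(K;\CC^d)\hookrightarrow L^2(K;\CC^d)$ is trace class. This reduction will exploit the compact support of $V$, $V_{\mathrm{left}}$, $V_{\mathrm{right}}$ and $\psi$ together with the smoothing action of $R_0^\infty(\lambda)$, read off from the explicit kernel representation \eqref{free resolvent}.

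I would fix a bounded closed interval $K\supset[-\ell-1,\ell+1]$, which contains the supports of $\psi,V,V_{\mathrm{left}},V_{\mathrm{right}}$. The first step is to verify that $R_0^\infty(\lambda)M_\psi:L^2(\RR;\CC^d)\to H^2(K;\CC^d)$ is a bounded operator for every $\lambda\in\CC\setminus\{0\}$. Writing $g:=R_0^\infty(\lambda)M_\psi f$ for $f\in L^2(\RR;\CC^d)$, the extension of formula \eqref{free resolvent} to $L^2_{\mathrm{comp}}$-data (noted in the excerpt) together with a Cauchy--Schwarz estimate on the compact support of $\psi$ yields pointwise bounds $\|g(x)\|+\|g'(x)\|\le C_\lambda\|f\|_{L^2}$ for all $x\in K$, where the constant involves a factor $e^{|\re\lambda|\cdot\mathrm{diam}(K)}$ in the relevant regime; the distributional identity $g''=\lambda^2g-\psi f$ then gives $\|g''\|_{L^2(K)}\le C_\lambda\|f\|_{L^2}$. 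Replacing $\psi$ by $V_{\mathrm{left}}$ throughout, the same reasoning shows $R_0^\infty(\lambda)M_{V_{\mathrm{left}}}:L^2(\RR;\CC^d)\to H^2(K;\CC^d)$ is bounded as well. Next, I would invoke the standard fact that the Sobolev embedding $H^s(K;\CC^d)\hookrightarrow L^2(K;\CC^d)$ on the bounded interval $K$ has singular values of order $n^{-s}$ (as follows from the Neumann eigenvalue asymptotics of $-\partial_x^2$ on $K$, up to a finite-rank correction), so for $s=2$ it is trace class. Since multiplication by the bounded, compactly supported matrix-valued function $V$ (respectively $V_{\mathrm{right}}$) is a bounded operator $L^2(K;\CC^d)\to L^2(\RR;\CC^d)$ via trivial extension, the factorization
\[
M_VR_0^\infty(\lambda)M_\psi:\;L^2(\RR;\CC^d)\xrightarrow{R_0^\infty(\lambda)M_\psi}H^2(K;\CC^d)\hookrightarrow L^2(K;\CC^d)\xrightarrow{M_V}L^2(\RR;\CC^d)
\]
displays $M_VR_0^\infty(\lambda)M_\psi$ as bounded $\circ$ trace class $\circ$ bounded, hence trace class. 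The identical factorization with $V_{\mathrm{right}},V_{\mathrm{left}}$ in place of $V,\psi$ handles the second operator.

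The main point requiring care is the $H^2(K)$-bound for $R_0^\infty(\lambda)M_\psi$ in the regime $\re\lambda\le 0$, where $R_0^\infty(\lambda)$ is only the analytic continuation of the genuine resolvent and is defined only on compactly supported inputs whose outputs need not lie in $L^2(\RR)$; fortunately, $M_\psi f$ is automatically in $L^2_{\mathrm{comp}}$, so the kernel formula continues to apply and delivers $\lambda$-dependent bounds that are locally finite on $\CC\setminus\{0\}$. As an alternative to the Sobolev-embedding argument, one could recognize the kernel $\frac{1}{2\lambda}V(x)e^{-\lambda|x-y|}\psi(y)$ as a compactly supported semi-separable kernel in the sense of \cite{GGK1,GM} and apply their trace-class results directly; this latter route is the one naturally suggested by the introduction, and either argument gives what is needed.
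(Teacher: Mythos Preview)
Your argument is correct and takes a genuinely different route from the paper's. The paper never invokes Sobolev embeddings; instead it first shows that $M_\psi R_0^\infty(\lambda)M_\psi$ is trace class by citing \cite[Theorem 4.5]{GM} for $\re\lambda>0$ and \cite[Proposition 5.6]{S1} for $\lambda\in\rmi\RR\setminus\{0\}$, and then handles $\re\lambda<0$ by observing that the difference $M_\psi\big(R_0^\infty(\lambda)-R_0^\infty(-\lambda)\big)M_\psi$ has the rank-two kernel $\frac{1}{2\lambda}\psi(x)\cosh(\lambda(x-y))\psi(y)$, so trace-class-ness transfers from $-\lambda$ to $\lambda$. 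The desired operators are then obtained by sandwiching with bounded multiplications, using $\psi V=V$, $\psi V_\rml=V_\rml$, $\psi V_\rmr=V_\rmr$.

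Your Sobolev-embedding factorization is arguably more self-contained and treats all $\lambda\in\CC\setminus\{0\}$ uniformly, without splitting into the three cases $\re\lambda>0$, $\re\lambda=0$, $\re\lambda<0$; it also makes transparent that only boundedness and compact support of the potential matter. The paper's route, by contrast, stays entirely within the semi-separable kernel framework of \cite{GGK1,GM} that is used immediately afterward in Theorem~\ref{t5.2}, and the finite-rank difference trick is a quick way to pass to the non-physical half-plane. You actually anticipated this in your closing remark about the alternative via semi-separable kernels---that is precisely what the paper does.
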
	
\begin{proof}
First, we claim that the integral operator
$M_\psi R_0^\infty(\lambda)M_\psi$ with semi-separable kernel is trace class for any $\lambda\in\CC\setminus\{0\}$.
Indeed, by \cite[Theorem 4.5]{GM}  we know that 
$M_\psi R_0^\infty(\lambda)M_\psi$ is a trace class operator whenever $\re\lambda>0$. The same conclusion holds by 
 \cite[Proposition 5.6]{S1}  for any $\lambda\in\rmi\RR\setminus\{0\}$. 
If $\re\lambda<0$ then a simple computation shows that the integral kernel of the operator $M_{\psi}\big(R_0^\infty(\lambda)-R_0^\infty(-\lambda)\big)M_\psi$ is given by the formula
\[
K_{\psi}(x,y)=\frac{1}{2\lambda}\psi(x)\cosh{\big(\lambda(x-y)\big)}\psi(y),\;x,y\in\RR,	
\]
which can be written as the difference of two separable kernels given by functions with compact supports. Hence, the operator of convolution with $K_{\psi}$ in $L^2(\RR,\CC^d)$ is trace class and so is $M_\psi R_0^\infty(\lambda)M_\psi$  whenever $\re\lambda<0$ as claimed. 
Next, since $\psi V=V$, $\psi V_\rml=V_\rml$ and $\psi V_\rmr=V_\rmr$ it follows that 
\begin{equation*}
M_VR_0^\infty(\lambda)M_\psi=M_V\Big(M_\psi R_0^\infty(\lambda)M_\psi\Big),\;M_{V_\rmr}R_0^\infty(\lambda)M_{V_\rml}=M_{V_\rmr}\Big(M_\psi R_0^\infty(\lambda)M_\psi\Big)M_{V_\rml}	
\end{equation*}
are trace class for any $\lambda\in\CC\setminus\{0\}$, proving the lemma.
\end{proof}
Our next result describes the resonances $\lambda\in\cR(A)$ of the Schr\"odinger operator $A$ using the trace class Birman-Schwinger type operators   $M_{V_\rmr}R_0^\infty(\lambda)M_{V_\rml}$ and $M_VR_0^\infty(\lambda)M_\psi$. It involves the methods and results from \cite{GM} on the theory of integral operators with semi-separable kernels, see also \cite[Chapter IX]{GGK1}. Alternatively, one could re-write the eigenvalue problem for the second order Schr\"odinger operator as a first order system of differential equations and apply results from \cite{GLM} relating the Evans (Jost) functions for such systems to Fredholm determinants; we do not pursue this alternative approach here.
\begin{theorem}\label{t5.2}
Assume that $0\notin\spec(a)$. Then
the Jost function $\cF(\lambda)=\det_{\CC^{d\times d}}(W(\lambda))$ and the Fredholm determinants of the operators in $L^2(\RR;\CC^d)$ satisfy the identity
\begin{equation}\label{det-equal}
(2\lambda)^{-d}\cF(\lambda)=\det(I_{L^2(\RR;\CC^d)}-M_VR_0^\infty(\lambda)M_\psi)=\det(I_{L^2(\RR;\CC^d)}-M_{V_\rmr}R_0^\infty(\lambda)M_{V_\rml})
\end{equation}
for any $\lambda\in\CC\setminus\{0\}$. In particular,  
\begin{equation}\label{1-K}
\lambda\in\cR(A)\setminus\{0\}\;\mbox{iff}\;\;1\in\spec_{\mathrm{d}}\big(M_{V_\rmr}R_0^\infty(\lambda)M_{V_\rml}\big)\;\mbox{iff}\;\;1\in\spec_{\mathrm{d}}\big(M_VR_0^\infty(\lambda)M_\psi\big).		
\end{equation}	
\end{theorem}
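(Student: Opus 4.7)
The plan is to prove \eqref{det-equal} using the Gohberg--Markus theory of integral operators with semi-separable kernels, and then to derive \eqref{1-K} as an immediate corollary. I would first focus on the identity
\[
(2\lambda)^{-d}\cF(\lambda)=\det\bigl(I-M_{V_\rmr}R_0^\infty(\lambda)M_{V_\rml}\bigr),
\]
since once this is established, the equality with $\det(I-M_V R_0^\infty(\lambda)M_\psi)$ follows from Sylvester's identity $\det(I-AB)=\det(I-BA)$: taking $A=M_{V_\rml}$ and $B=M_{V_\rmr}R_0^\infty(\lambda)M_\psi$, we have $AB=M_VR_0^\infty(\lambda)M_\psi$ while, using $\psi\equiv 1$ on $[-\ell,\ell]\supset\supp V_\rml$ (so that $M_\psi M_{V_\rml}=M_{V_\rml}$), $BA=M_{V_\rmr}R_0^\infty(\lambda)M_{V_\rml}$; both operators are trace class by Lemma~\ref{l5.1}.

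For the main identity, I would write $M_{V_\rmr}R_0^\infty(\lambda)M_{V_\rml}$ as an integral operator on $L^2([-\ell,\ell];\CC^d)$ with the semi-separable kernel
\[
K_\lambda(x,y)=\tfrac{1}{2\lambda}V_\rmr(x)e^{-\lambda|x-y|}V_\rml(y)=\begin{cases}\tfrac{1}{2\lambda}\bigl(V_\rmr(x)e^{-\lambda x}\bigr)\bigl(e^{\lambda y}V_\rml(y)\bigr),&x>y,\\[2pt]\tfrac{1}{2\lambda}\bigl(V_\rmr(x)e^{\lambda x}\bigr)\bigl(e^{-\lambda y}V_\rml(y)\bigr),&x<y,\end{cases}
\]
which fits the framework of \cite[Theorem 2.4]{GM} (see also \cite[\S IX.2]{GGK1}). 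That theorem expresses the Fredholm determinant of $I-K_\lambda$ as the $\CC^{d\times d}$-determinant of a certain matrix built from the fundamental solution at $x=\ell$ of the Volterra-type ODE naturally associated with the semi-separable structure. The key observation is that this fundamental solution is, up to an explicit factor, exactly the matrix Jost solution $U_+(x,\lambda)$ characterized by \eqref{Volterra-Jost}; hence the integrated endpoint data appearing in \cite[Theorem 2.4]{GM} coincides with the right-hand side of \eqref{W-Jost-connection} divided by $2\lambda$. Substituting this identification yields $(2\lambda)^{-d}\det_{d\times d}(W(\lambda))$, which is $(2\lambda)^{-d}\cF(\lambda)$.

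The main obstacle will be the careful bookkeeping required to match the explicit matrix produced by the Gohberg--Markus formula with $(2\lambda)^{-1}W(\lambda)$: one must choose the right factorization of the separable kernel, write the associated first-order matrix ODE, identify its fundamental matrix in terms of $U_\pm$ using uniqueness of the Volterra equations \eqref{Volterra-Jost}, and finally invoke \eqref{W-Jost-connection} to recognize the resulting endpoint expression. Once \eqref{det-equal} is proved, the equivalences in \eqref{1-K} are immediate: the function $\lambda\mapsto\cF(\lambda)$ and the Fredholm determinants are proportional (by a nonvanishing factor $(2\lambda)^d$), so $\lambda\in\cR(A)\setminus\{0\}$, characterized by $\cF(\lambda)=0$, is equivalent to the vanishing of either Fredholm determinant; and for trace class operators, $\det(I-T)=0$ is equivalent to $1\in\spec_{\mathrm{d}}(T)$ (e.g., via the Fredholm alternative combined with analytic Fredholm theory, since $\lambda\mapsto T(\lambda)$ is a meromorphic family of compact operators on $\CC\setminus\{0\}$).
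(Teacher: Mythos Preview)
Your proposal is correct and follows essentially the same route as the paper: Sylvester's identity to pass between the two Birman--Schwinger operators, the Gesztesy--Makarov semi-separable kernel determinant formula together with the Volterra characterization \eqref{Volterra-Jost} of the Jost solution, and the identity \eqref{W-Jost-connection} to recognize $(2\lambda)^{-1}W(\lambda)$. Two small points: the determinant formula you invoke is \cite[Theorem~3.2, (3.9)]{GM} rather than \cite[Theorem~2.4]{GM} (the latter only gives invertibility of the associated Volterra operator), and the paper carries out the computation first for $\lambda\in\CC_+$ and then extends to $\CC\setminus\{0\}$ by analytic continuation, a step you should make explicit since the cited results in \cite{GM} are formulated in that half-plane.
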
	
\begin{proof}
Since $V=V_\rml V_\rmr$  and $\psi V_\rml=V_\rml$, by \cite[Proposition 0.2.2]{Y} we obtain 
\begin{align}\label{l5.1.3}
\det(I&-M_VR_0^\infty(\lambda)M_\psi)=\det(I-M_{V_\rml}M_{V_\rmr}R_0^\infty(\lambda)M_\psi)=\det(I-M_{V_\rmr}R_0^\infty(\lambda)M_\psi M_{V_\rml})\nonumber\\
&=\det(I-M_{V_\rmr}R_0^\infty(\lambda)M_{V_\rml})\;\text{ for any $\lambda\in\CC\setminus\{0\}$ and $I=I_{L^2(\RR;\CC^d)}$},
\end{align}	
proving the second equality in \eqref{det-equal}. 

To finish the proof of the theorem, we will apply \cite[Theorem 3.2, formula (3.9)]{GM}.  To this end, following \cite{GM}, we first introduce some operators needed to reduce the computation of  the infinite dimensional Fredholm determinants in \eqref{det-equal} to the computation of finite dimensional determinants. Fix $\lambda\in\CC_+$. The trace class operator $M_VR_0^\infty(\lambda)M_\psi$ is an operator with a semi-separable kernel $K_V:\RR^2\to\CC^{d\times d}$ given by the formula
\[
K_V(x,y)=\Phi_1(x)\Psi_1(y),  \text{ for $x>y$ and $K_V(x,y)= \Phi_2(x)\Psi_2(y)$ for $x<y$},\]
where $\Phi_1,\Phi_2,\Psi_1,\Psi_2:\RR\to\CC^{d\times d}$ are defined as follows,
\begin{align}\label{def-Phi-4}
\Phi_1(x)&=e^{-\lambda x}V(x),\quad \Psi_1(x)=(2\lambda)^{-1}e^{\lambda x}\psi(x)I_{d\times d},\nonumber\\ 
\Phi_2(x)&=e^{\lambda x}V(x),\quad\Psi_2(x)=(2\lambda)^{-1}e^{-\lambda x}\psi(x)I_{d\times d}.	
\end{align}	
In addition, we introduce the Volterra operator $\cH:L^2(\RR,\CC^2)\to L^2(\RR,\CC^d)$ defined by 
\begin{equation}\label{def-cH}
(\cH f)(x)=\int_x^\infty \Lambda(x,y)f(y)\rmd y,\,f\in L^2(\RR,\CC^2), x\in\RR,
\end{equation}	 
where $\Lambda:\RR^2\to\CC^{d\times d}$ is given by 
\begin{equation}\label{def-Lambda}
\Lambda(x,y)=\Phi_1(x)\Psi_1(y)-\Phi_2(x)\Psi_2(y)=-(\lambda)^{-1}\sinh{(\lambda(x-y))}\psi(y)V(x).
\end{equation}
From \cite[Theorem 2.4]{GM} we know that $1\notin\spec(\cH)$ which implies that the Volterra equation 
\begin{equation}\label{Volterra-2}
\hPhi_1(x)=\Phi_1(x)+\lambda^{-1}\int_{x}^{\infty}\sinh{(\lambda(x-y))}\psi(y)V(x)\hPhi_1(y,\lambda)\rmd y,\, x\in\RR,	
\end{equation}	
has a unique solution $\hPhi_1=(I_{L^2}-\cH)^{-1}\Phi_1$. Multiplying \eqref{Volterra-Jost} by $V(x)$ and using $\psi V=V$, we infer  $\hPhi_1(\cdot)=V(\cdot)U_+(\cdot,\lambda)$. Formula (3.9) from \cite[Theorem 3.2]{GM} and \eqref{W-Jost-connection} yield
\begin{align}\label{l5.1.final}
\det(I_{L^2(\RR;\CC)}&-M_VR_0^\infty(\lambda)M_\psi)=\det \Big(I_{d\times d}-\int_{-\infty}^\infty\hPhi_1(x)\Psi_1(x)\rmd x\Big)\nonumber\\
&=\det \Big(I_{d\times d}-(2\lambda)^{-1}\int_{-\infty}^\infty e^{\lambda x}V(x)U_+(x,\lambda)\rmd x\Big)=(-2\lambda)^{-d}\cF(\lambda)
\end{align}
for any $\lambda\in\CC_+$. Since $\cF(\cdot)$ is an an entire function and $M_VR_0^\infty(\lambda)M_\psi$ is a holomorphic function on $\CC\setminus\{0\}$, using the uniqueness of holomorphic extensions, we conclude that \eqref{det-equal} holds for any $\lambda\in\CC\setminus\{0\}$, proving the theorem.
\end{proof}
The criterion \eqref{1-K} allows us to define the multiplicity of a nonzero resonance as it is done in \cite{G,GH}. Indeed, we define the multiplicity of the resonance by letting
\begin{equation}\label{def-multiplicity}
m_\rmR(\lambda_0,A):=m_\rma\big(1,M_{V_\rmr}R_0^\infty(\lambda_0)M_{V_\rml}\big)\;\mbox{for any}\;\lambda_0\in\cR(A)\setminus\{0\}.	
\end{equation}
Moreover, \eqref{det-equal} immediately implies that
\begin{equation}\label{multiplicity-equality}
m_\rmR(\lambda_0,A)=m_\rma\big(1,M_VR_0^\infty(\lambda_0)M_\psi\big)=\oor(\lambda_0,\cF)\;\mbox{for any}\;\lambda_0\in\cR(A)\setminus\{0\}.
\end{equation}
Since the function ${\mathcal{F}}(\cdot)={\det}_{d\times d}W(\cdot)$ is entire and we know that $0\in\cR(A)$ if and only if $\cF(0)=0$, we define $m_R(0,A):=\oor(0,\cF)$ in the case when $0\in\cR(A)$. 

Next, we show that our definition of multiplicity is consistent with the definition of multiplicity  given in \cite[Chapter 2]{DZ} in the case of scalar potentials. Assume $d=1$, $\lambda_0\in\cR(A)$ and $\eps>0$ is such that $D(\lambda_0,\eps)\cap\cR(A)=\{\lambda_0\}$. Let  $\cD:\CC\setminus\{0\}\to\CC$ be the function defined by $\cD(\lambda)=\det(I_{L^2(\RR;\CC)}-M_VR_0^\infty(\lambda)M_\psi)$. Combining \eqref{det-equal}, \eqref{multiplicity-equality} with \cite[Theorem 2.8]{DZ} we conclude that 
\begin{equation}\label{mult-resonance}
\frac{1}{2\pi\rmi}\int_{\partial D(\lambda_0,\eps)}\frac{\cD'(\lambda)}{\cD(\lambda)}\rmd\lambda=m_\rmR(\lambda_0,A)=\oor(\lambda_0,\cF)=\mathrm{Rank}\Big(\int_{\partial D(\lambda_0,\eps)} R_A^\infty(\lambda)\rmd\lambda\Big)
\end{equation} 
for any $\lambda_0\in\cR(A)\setminus\{0\}$, showing that the two definitions are consistent for the case of non-zero resonances. Next, assume that $0\in\cR(A)$.  Since $d=1$,  Lemma~\ref{lzero} yields $V=\alpha\chi_{[-\ell,\ell]}$, where $\alpha=(\pi n/\ell)^2\in\spec(a)$ for some $n\in\ZZ\setminus\{0\}$, or
$\alpha=(\pi/(2\ell)+\pi n/\ell)^2\in\spec(a)$ for some $n\in\ZZ$. Since the potential is real, Lemma~\ref{lzero} guaranties that $\oor(0,\cF)=1$, hence $m_\rmR(0,A)=1$,  which is consistent with \cite[Theorem 2.7]{DZ}. 

An interesting question is if the definitions of algebraic multiplicities of resonances given in \cite{G,GH} and \cite{DZ} coincide also in the case of general matrix valued potentials, but we do not pursue this here.
 
So far we have reduced the computation of the multiplicity of resonances of the Schr\"odinger operator $A$ to the computation of the multiplicity of  zeros of the Jost function. Next, we will look at a factorization of the Jost function, see Lemma \ref{l5.2} below, that will allow us to obtain an effective formula for the multiplicities of resonances of $A$, see Theorem \ref{t5.10}.

 We begin by recalling the definition of the Schr\"odinger operator $A_\alpha=\d_{xx}+\alpha\chi_{[-\ell,\ell]}$ acting in the space
$L^2(\RR; \CC)$ of scalar valued function with $\alpha\in\spec(a)$. Also, we recall that for any $\alpha\in\spec(a)$  the set $\cR(A_\alpha)$ of resonances of the operator $A_\alpha$ is equal to the set
$\cR_\alpha$ of $\alpha$-resonances of the operator $A$.
As in Section~\ref{Jost-Green}, see there \eqref{Wcomp} and \eqref{WprodS}, we construct an entire function $W_\alpha$ whose set of zeros is equal to $\cR(A_\alpha)$. As in \eqref{WprodS} we observe that 
\begin{align}\label{W-alpha}
W_\alpha(\lambda)&:=(\sqrt{\lambda^2-\alpha})^{-1}e^{-2\lambda\ell}\big(2\lambda (\sqrt{\lambda^2-\alpha})\cosh(2\ell\sqrt{\lambda^2-\alpha})+(2\lambda^2-\alpha)\sinh(2\ell\sqrt{\lambda^2-\alpha})\big)\nonumber\\
&=2e^{-2\lambda\ell}\big(\lambda\cosh(\ell \sqrt{\lambda^2-\alpha})+\sqrt{\lambda^2-\alpha}\sinh(\ell \sqrt{\lambda^2-\alpha})\big)\nonumber\\
&\quad\times\big(\cosh(\ell \sqrt{\lambda^2-\alpha})+\lambda (\sqrt{\lambda^2-\alpha})^{-1}\sinh(\ell \sqrt{\lambda^2-\alpha})\big)\;\mbox{for}\;\lambda^2\ne\alpha.
\end{align}
We recall the following identity for the algebraic multiplicity of matrices obtained by functional calculus:
If $h$ is an entire function then $\spec(h(a))=\{h(\alpha):\alpha\in\spec(a)\}$ and, for any $\alpha\in\spec(a)$,
\begin{equation}\label{alg-h-a}
m_\rma(h(\alpha), h(a))=\sum_{\beta\in H(\alpha)}m_\rma(\beta,a),\;\mbox{ where }\; H(\alpha):=\{\beta\in\spec(a):h(\beta)=h(\alpha)\}.
\end{equation}
\begin{lemma}\label{l5.2}
Assume that $0\notin\spec(a)$. Then 
\begin{equation}\label{decomp-cF}
\cF(\lambda)=\prod_{\alpha\in\spec(a)}\big(W_\alpha(\lambda)\big)^{m_\rma(\alpha,a)}\;\mbox{for any}\;\lambda\in\CC.	
\end{equation}
In particular, $\cR(A)=\cup_{\alpha\in\spec(a)}\cR_\alpha$.	
\end{lemma}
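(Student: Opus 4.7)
The plan is to apply functional calculus and the spectral mapping theorem to the product formula \eqref{WprodS} in order to identify $\cF(\lambda)=\det_{d\times d}W(\lambda)$ as a product of the scalar Jost functions $W_\alpha(\lambda)$ associated with the eigenvalues of $a$.

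First I would take determinants in \eqref{WprodS}. Since both bracketed factors are (entire) functions of the single matrix $b=b(\lambda)$, they commute, so the determinant of the product is the product of the determinants:
\begin{equation*}
\cF(\lambda)=(2e^{-2\lambda\ell})^{d}\,\det\bigl(\lambda\cosh(\ell b)+b\sinh(\ell b)\bigr)\,\det\bigl(\cosh(\ell b)+\lambda b^{-1}\sinh(\ell b)\bigr),
\end{equation*}
valid whenever $\lambda^{2}\notin\spec(a)$. Both bracketed factors are \emph{even} in $b$ (by the series in \eqref{defZ} together with the similar even series for $b\sinh(\ell b)$), so each can be written as an entire matrix function $F_{k}(\lambda^{2}I_{d\times d}-a)$, $k=1,2$, of $a$ alone (for fixed $\lambda$). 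Listing the eigenvalues of $a$ as $\alpha_{1},\ldots,\alpha_{d}$ with algebraic multiplicity, the spectral mapping theorem gives
\begin{equation*}
\det F_{k}(\lambda^{2}I_{d\times d}-a)=\prod_{j=1}^{d}F_{k}(\lambda^{2}-\alpha_{j}).
\end{equation*}

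Second, distributing the $(2e^{-2\lambda\ell})^{d}$ factor and matching each triple $2e^{-2\lambda\ell}\,F_{1}(\lambda^{2}-\alpha_{j})\,F_{2}(\lambda^{2}-\alpha_{j})$ against the definition \eqref{W-alpha} of $W_{\alpha_{j}}(\lambda)$ yields $\cF(\lambda)=\prod_{j=1}^{d}W_{\alpha_{j}}(\lambda)$, and regrouping by distinct eigenvalues gives \eqref{decomp-cF} on $\{\lambda:\lambda^{2}\notin\spec(a)\}$. Both sides of \eqref{decomp-cF} are entire in $\lambda$ (the left by the discussion following \eqref{Wcomp}; the right because each $W_{\alpha}$ is manifestly entire, as $\cosh z$ and $z^{-1}\sinh z$ extend to even entire functions), so the identity extends to all of $\CC$ by the identity principle. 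The in-particular statement is then immediate: $\cF(\lambda)=0$ iff $W_{\alpha}(\lambda)=0$ for some $\alpha\in\spec(a)$, i.e.\ iff $\lambda\in\cR(A_{\alpha})=\cR_{\alpha}$, giving $\cR(A)=\bigcup_{\alpha\in\spec(a)}\cR_{\alpha}$.

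The only real technical point is bookkeeping multiplicities in the spectral mapping step: if distinct eigenvalues $\alpha_{j},\alpha_{k}$ of $a$ happen to satisfy $F_{k}(\lambda^{2}-\alpha_{j})=F_{k}(\lambda^{2}-\alpha_{k})$ for some $\lambda$, one must verify that the resulting algebraic multiplicity in $F_{k}(\lambda^{2}I-a)$ is the sum of the individual multiplicities in $a$. This is exactly the identity \eqref{alg-h-a} already stated in the excerpt, so no additional work is required.
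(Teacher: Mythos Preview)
Your argument is correct and in fact cleaner than the paper's. The key identity you use---that for any entire function $g$ and any matrix $a$ with eigenvalues $\alpha_1,\ldots,\alpha_d$ listed with algebraic multiplicity one has $\det g(a)=\prod_{j=1}^d g(\alpha_j)$---follows at once from Schur triangularization and needs no distinctness assumption; so your last paragraph about \eqref{alg-h-a} is unnecessary caution (the multiplicity bookkeeping is already built into listing eigenvalues with repetition). The paper instead argues indirectly: it first proves an asymptotic claim that $W_{\alpha_1}(\tau)\neq W_{\alpha_2}(\tau)$ for all large real $\tau$ whenever $\alpha_1\neq\alpha_2$, then invokes \eqref{alg-h-a} at such $\tau$ to identify $m_\rma(W_\alpha(\tau),W(\tau))=m_\rma(\alpha,a)$, deduces \eqref{decomp-cF} on a half-line, and extends by analyticity. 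Your route avoids this asymptotic step entirely; the paper's route has the minor side benefit of exhibiting that the spectral values $W_\alpha(\lambda)$ are generically distinct, but for proving the lemma itself your direct determinant computation is shorter.
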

\begin{proof}
Since $W(\lambda)=h(b(\lambda))$ is given by \eqref{WprodS} as a function of the matrix $b(\lambda)=\sqrt{\lambda^2-a}$ while $W_\alpha(\lambda)=h(\sqrt{\lambda^2-\alpha})$ is given by \eqref{W-alpha}, by the spectral mapping theorem
\begin{equation}\label{l5.2.1}
\spec\big(W(\lambda)\big)=\{W_\alpha(\lambda):\alpha\in\spec(a)\}\;\mbox{for any}\;\lambda\in\CC.
\end{equation}
We now prove the following claim: For any $\alpha_1,\alpha_2\in\spec(a)$, $\alpha_1\ne\alpha_2$ there exists a positive number $\tau(\alpha_1,\alpha_2)>0$ such that
$W_{\alpha_1}(\tau)\ne W_ {\alpha_2}(\tau)$ for any $\tau>\tau(\alpha_1,\alpha_2)$.	
Indeed, seeking a contradiction we suppose there exists a sequence
$\tau_n\to\infty$ so that $W_{\alpha_1}(\tau_n)=W_{\alpha_2}(\tau_n)$, $n\in\NN$. 
  Collecting the growing and decaying exponential terms in the last equality yields  
\begin{align}\label{l5.2.3}
\big(2\tau_n&+2\sqrt{\tau_n^2-\alpha_1}+\alpha_1(\sqrt{\tau_n^2-\alpha_1})^{-1}\big)\big(e^{2\ell(\sqrt{\tau_n^2-\alpha_1}-\sqrt{\tau_n^2-\alpha_2})}-1\big)\nonumber\\
&=2\sqrt{\tau_n^2-\alpha_2}-2\sqrt{\tau_n^2-\alpha_1}+\mathcal{O}(\tau_n e^{-2\ell\tau_n}) \text{ as $n\to\infty$}.	
\end{align}	
Passing to the limit as $n\to\infty$ yields  $4\ell(\alpha_2-\alpha_2)=0$, a contradiction proving  the claim. 

We may now define
$\tau_a=\max\{\tau(\alpha_1,\alpha_2):\alpha_1,\alpha_2\in\spec(a),\alpha_1\ne\alpha_2\}<\infty$ so  that 
\begin{equation}\label{l5.2.5}
W_{\alpha_1}(\tau)\ne W_ {\alpha_2}(\tau)\;\mbox{for any}\;\tau>\tau_a,\,\alpha_1,\alpha_2\in\spec(a),\,\alpha_1\ne\alpha_2.	
\end{equation}	
From \eqref{alg-h-a}, \eqref{l5.2.1} and \eqref{l5.2.5} we conclude that $m_\rma(W_\alpha(\tau), W(\tau))=m_\rma(\alpha,a)$ for any $\alpha\in\spec(a)$ and $\tau>\tau_a$. Since the determinant of a matrix is the product of all of its eigenvalues, counting algebraic multiplicity, we infer that \eqref{decomp-cF} holds true for any $\lambda\in(\tau_a,\infty)$. Since $\cF(\cdot)$ and $W_\alpha(\cdot)$ are entire functions, we conclude that 
\eqref{decomp-cF} holds true for any $\lambda\in\CC$.
\end{proof}
Our next task is to examine the sets $\{\cR_\alpha\}_{\alpha\in\spec(a)}$.  The following example shows that in general they are not mutually disjoint.
\begin{example}\label{e5.3}
We take the matrix $a=\mathrm{diag}(\alpha_1,\alpha_2)$ with $\alpha_1,\alpha_2$ to be chosen later. Let $h_1,h_2:[0,\pi/2)\to\RR$ be the continuous functions defined by 
$h_1(t)=\ell\tan{t}+t-\pi/2$ and $h_2(t)=\ell\tan{t}+t-\pi$.
Since $h_1(0)=-\pi/2$ and $\lim_{t\to\pi/2-0}h_1(t)=\infty$, there exists $t_1\in(0,\pi/2)$ such that $h_1(t_1)=0$. Thus, $h_2(t_1)=-\pi/2$. Since $\lim_{t\to\pi/2-0}h_2(t)=\infty$, there exists $t_2\in(t_1,\pi/2)$ such that $h_2(t_2)=0$. We define $\alpha_j=\sec^2{(t_j)}$,  $z_j=\rmi t_j$ for $j=1,2$. Since $0<t_1<t_2<\pi/2$ we have $0<\alpha_1<\alpha_2$. We now compute
\begin{equation}\label{e5.3.2}
\ell\sqrt{\alpha_j}\sinh{(z_j)}=\rmi\big(\ell\tan{(t_j)}+t_j\big)\in\{\rmi\pi/2,\rmi\pi\},\,j=1,2.	
\end{equation}	
The change of variables as in \eqref{res-ab} leads to $\lambda=\sqrt{\alpha_j}\cosh{(z_j)}=\sec{(t_j)}\cos{(t_j)}=1$ for $j=1,2$, and so from \eqref{S0mu} and \eqref{e5.3.2} we conclude that $1\in\cR_{\alpha_1}\cap\cR_{\alpha_2}\neq\emptyset$. 
\end{example}	
Although  $\cR_{\alpha_1}\cap\cR_{\alpha_2}$ is not necessarily empty, it is always finite, as our next lemma shows.

\begin{lemma}\label{l5.4}
Assume that $0\notin\spec(a)$. Then, $\cR_{\alpha_1}\cap\cR_{\alpha_2}$ is finite for any $\alpha_1,\alpha_2\in\spec(a)$, $\alpha_1\ne \alpha_2$.
\end{lemma}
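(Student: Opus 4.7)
The plan is a proof by contradiction via an asymptotic expansion of the resonance equation in inverse powers of $\lambda$. Suppose, toward a contradiction, that $\{\lambda_n\}_{n\in\NN}$ is an infinite sequence in $\cR_{\alpha_1}\cap\cR_{\alpha_2}$; since each $\cR_{\alpha_i}$ is the zero set of the entire function $W_{\alpha_i}$ it is discrete in $\CC$, so $|\lambda_n|\to\infty$ after passing to a subsequence. Theorem~\ref{t3.12} applied to both $\alpha_1$ and $\alpha_2$ shows that only finitely many $\lambda_n$ can lie in $\overline{\CC}_+$, so we may assume $\re\lambda_n<0$; Theorem~\ref{t3.14}(i) then lets us pass to a further subsequence with $\re\lambda_n\to-\infty$.

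The key reformulation comes from \eqref{FOUR3}: squaring any of the four equations there and invoking the identity $(\lambda-w)(\lambda+w)=\alpha$ with $w:=\sqrt{\lambda^2-\alpha}$ shows that for $\lambda^2\ne\alpha$ one has $\lambda\in\cR_\alpha$ if and only if
\[
(\lambda+w)^{2}\,e^{2\ell w}=\epsilon\,\alpha\quad\text{for some}\quad\epsilon\in\{\pm1\};
\]
this condition is symmetric under $w\mapsto -w$, so the choice of branch is immaterial. The signs $\epsilon_i^{(n)}\in\{\pm1\}$ arising from the two equations ($i=1,2$) attached to each $\lambda_n$ can be made independent of $n$ after a further pigeonhole subsequence, say $\epsilon_i^{(n)}=\epsilon_i$. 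Choose the branch $w_i:=\sqrt{\lambda^2-\alpha_i}$ with $\re w_i>0$; for $|\lambda_n|\to\infty$ with $\re\lambda_n<0$ one has the elementary expansions
\[
w_i=-\lambda_n+\frac{\alpha_i}{2\lambda_n}+O(\lambda_n^{-3}),\qquad \lambda_n+w_i=\frac{\alpha_i}{2\lambda_n}+O(\lambda_n^{-3}).
\]

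Dividing the two resonance equations and substituting these expansions yields, for each $n$ exactly,
\[
\bigg(\frac{\alpha_1}{\alpha_2}\bigg)^{\!2}\bigg[1+\frac{\ell(\alpha_1-\alpha_2)}{\lambda_n}+O(\lambda_n^{-2})\bigg]=\frac{\epsilon_1\alpha_1}{\epsilon_2\alpha_2}.
\]
Letting $n\to\infty$, the leading term forces $(\alpha_1/\alpha_2)^{2}=\epsilon_1\alpha_1/(\epsilon_2\alpha_2)$, i.e.\ $\alpha_1/\alpha_2=\epsilon_1/\epsilon_2\in\{\pm1\}$. The hypothesis $\alpha_1\ne\alpha_2$ rules out $\alpha_1=\alpha_2$, hence $\alpha_1=-\alpha_2$ and $\epsilon_1/\epsilon_2=-1$. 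Substituting these back, the constant terms cancel and one is left with $-2\ell\alpha_2/\lambda_n+O(\lambda_n^{-2})=0$, equivalently $2\ell\alpha_2=O(\lambda_n^{-1})$; passing to the limit gives $\alpha_2=0$, contradicting the standing hypothesis $0\notin\spec(a)$.

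The main obstacle is that leading-order asymptotics alone do not suffice: Theorem~\ref{t3.14}(ii) applied separately to the two families already gives $|\sqrt{\alpha_1}|=|\sqrt{\alpha_2}|$, but leaves open the case $\alpha_1=-\alpha_2$. The point of casting the resonance condition in the compact form $(\lambda+w)^{2}e^{2\ell w}=\pm\alpha$ is precisely that it makes the $O(1/\lambda_n)$ correction visible and permits one to break this remaining degeneracy with an elementary expansion; the only subtle bookkeeping is keeping track of signs and square-root branches, which is controlled by the $w\mapsto-w$ symmetry noted above.
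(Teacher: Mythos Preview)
Your proof is correct and follows essentially the same route as the paper: assume an infinite intersection, pass to a subsequence with $|\lambda_n|\to\infty$ and $\re\lambda_n<0$, fix the signs $\epsilon_i$ by pigeonhole, divide the two resonance equations, read off $\alpha_1/\alpha_2\in\{\pm1\}$ at leading order, and then kill the remaining case $\alpha_1=-\alpha_2$ with the next-order term. The paper carries this out using the form $\pm\alpha e^{2\ell w}=(\lambda-w)^2$ together with the auxiliary limits \eqref{aux-lim-lambda-n} (whose hypothesis it verifies via Theorem~\ref{t3.14}(ii)), splitting explicitly into ``same sign'' and ``opposite sign'' cases; your recasting as $(\lambda+w)^2e^{2\ell w}=\epsilon\alpha$ together with the two-term Taylor expansion of $w_i$ in $1/\lambda_n$ is a slightly more compact packaging of the same computation and, in particular, lets you bypass the separate appeal to Theorem~\ref{t3.14}(ii).
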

\begin{proof}
 We will use the following elementary result:
  If $\{\lambda_n\}_{n\in\NN}\subset\CC$ is such that $\lim_{n\to\infty}\re\lambda_n=-\infty$ and $\lim_{n\to\infty}|\im\lambda_n| e^{\ell\re\lambda_n}\in(0,\infty)$, then
\begin{equation}\label{aux-lim-lambda-n}
\lim_{n\to\infty}\frac{\sqrt{\lambda_n^2-\alpha}}{\lambda_n}=\lim_{n\to\infty}\frac{\lambda_n+\sqrt{\lambda_n^2-\alpha_1}}{\lambda_n+\sqrt{\lambda_n^2-\alpha_2}}=1,\;\lim_{n\to\infty}\big(\sqrt{\lambda_n^2-\alpha_1}-\sqrt{\lambda_n^2-\alpha_2}\big)=0.
\end{equation}	
To proceed with the proof of the lemma, seeking a contradiction, we suppose that  
 $\cR_{\alpha_1}\cap\cR_{\alpha_2}$ is infinite for some
 $\alpha_1,\alpha_2\in\spec(a)$, $\alpha_1\ne\alpha_2$. 
 We introduce the sets, cf.\ \eqref{equiv-ab} and \eqref{equiv-cd},
\begin{equation}\label{l5.4.1}
\cR_{\alpha_j}^\pm=\big\{\lambda\in\CC:\pm\alpha e^{2\ell\sqrt{\lambda^2-\alpha}}=(\lambda-\sqrt{\lambda^2-\alpha})^2\big\},\;j=1,2.
\end{equation}
Since $\cR_{\alpha_1}\cap\cR_{\alpha_2}$ is infinite it follows that
either $\cR_{\alpha_1}^\pm\cap\cR_{\alpha_2}^\pm$ is infinite (Case A),  or  $\cR_{\alpha_1}^\pm\cap\cR_{\alpha_2}^\mp$ is infinite (Case B).	
In Case A we have $\{\lambda_n\}_{n\in\NN}\subseteq\cR_{\alpha_1}^\pm\cap\cR_{\alpha_2}^\pm$ such that $\lim_{n\to\infty}\re\lambda_n=-\infty$ and $\lim_{n\to\infty}|\im\lambda_n|=\infty$. Theorem~\ref{t3.14}(ii) now implies $\lim_{n\to\infty}{|\im\lambda_n|}{e^{\ell\re\lambda_n}}={|\sqrt{\alpha_1}|}/{2}$ as needed in \eqref{aux-lim-lambda-n}. Equation \eqref{l5.4.1} implies
\begin{equation}\label{l5.4.3}
e^{2\ell\big(\sqrt{\lambda_n^2-\alpha_1}-\sqrt{\lambda_n^2-\alpha_2}\big)}=(\alpha_1/\alpha_2)(\lambda_n+\sqrt{\lambda_n^2-\alpha_2})^2(\lambda_n+\sqrt{\lambda_n^2-\alpha_1})^{-2},\end{equation} 
and using \eqref{aux-lim-lambda-n} we arrive at $\alpha_1=\alpha_2$, a contradiction.

Similarly, in Case B, take $\{\lambda_n\}_{n\in\NN}\subseteq\cR_{\alpha_1}^\pm\cap\cR_{\alpha_2}^\mp$ such that $\lim_{n\to\infty}\re\lambda_n=-\infty$ and $\lim_{n\to\infty}|\im\lambda_n|=\infty$. Again, Theorem~\ref{t3.14}(ii) implies that $\lim_{n\to\infty}{|\im\lambda_n|}{e^{\ell\re\lambda_n}}={|\sqrt{\alpha_1}|}/{2}$ as needed in \eqref{aux-lim-lambda-n}. Now, in contrast to Case A, $\lambda_n$ satisfies the equation
\begin{equation}\label{l5.4.4}
-e^{2\ell\big(\sqrt{\lambda_n^2-\alpha_1}-\sqrt{\lambda_n^2-\alpha_2}\big)}=(\alpha_1/\alpha_2)(\lambda_n+\sqrt{\lambda_n^2-\alpha_2})^2(\lambda_n+\sqrt{\lambda_n^2-\alpha_1})^{-2}.
\end{equation} 
Using \eqref{aux-lim-lambda-n} gives $\alpha_1=-\alpha_2$. Plugging this back to \eqref{l5.4.4} yields
\begin{equation}\label{l5.4.5}
\frac{e^{2\ell\big(\sqrt{\lambda_n^2+\alpha_2}-\sqrt{\lambda_n^2-\alpha_2}\big)}-1}{\sqrt{\lambda_n^2+\alpha_2}-\sqrt{\lambda_n^2-\alpha_2}}=-\frac{1}{\lambda_n+\sqrt{\lambda_n^2+\alpha_2}}\Big(\frac{\lambda_n+\sqrt{\lambda_n^2-\alpha_2}}{\lambda_n+\sqrt{\lambda_n^2+\alpha_2}}+1\Big),	
\end{equation}	
and passing to the limit as $n\to\infty$ we conclude that $2\ell=0$,  a contradiction.
\end{proof}	
The next main goal in this section is to compute the multiplicity $m_\rmR(\lambda_0,A)$ of a resonance $\lambda_0\in\cR(A)$ via the multiplicities $m_\rma(\alpha,a)$ of the eigenvalues $\alpha\in\spec(a)$. This will be done in two steps: First, in Lemma \ref{l5.5} we will calculate
$m_\rmR(\lambda_0,A)$ using $\oor(\lambda_0,W_\alpha)$ for $W_\alpha$ from \eqref{W-alpha} and, second, in Lemma \ref{l5.9} we will evaluate $\oor(\lambda_0,W_\alpha)$. The final formula is presented in Theorem \ref{t5.10}. 
To start, for any fixed $\lambda_0\in\cR(A)$ we define a subset of $\spec(a)$,
\begin{equation}\label{def-J}
\cJ(\lambda_0)=\{\alpha\in\spec(a):\lambda_0\in\cR_\alpha\},	
\end{equation}
that might contain more than one element, cf.\ Example~\ref{e5.3}.  	
\begin{lemma}\label{l5.5}
Assume that $0\notin\spec(a)$. Then, 
\begin{equation}\label{mult-formula}
m_\rmR(\lambda_0,A)=\sum_{\alpha\in\cJ(\lambda_0)}m_\rma(\alpha,a)\cdot\oor(\lambda_0,W_\alpha)\;\mbox{for any}\;\lambda_0\in\cR(A).
\end{equation}
\end{lemma}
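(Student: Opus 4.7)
The proof will be a short bookkeeping argument combining the multiplicity identity \eqref{multiplicity-equality}, the Jost-function factorization \eqref{decomp-cF}, and the basic fact that the order of a zero of a product of holomorphic functions is the sum of the orders of the zeros of the factors. The core observation is that by Lemma~\ref{l5.2} and the equality $\cR_\alpha = \cR(A_\alpha)$ established in Section~\ref{Jost-Green}, the set $\cR_\alpha$ coincides with the zero set of the entire function $W_\alpha$ from \eqref{W-alpha}; consequently $\cJ(\lambda_0)$ is precisely the set of $\alpha\in\spec(a)$ for which $W_\alpha(\lambda_0)=0$, and $\oor(\lambda_0,W_\alpha)=0$ for every $\alpha\in\spec(a)\setminus\cJ(\lambda_0)$.

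Concretely, the plan is as follows. First, I would record that for $\lambda_0\in\cR(A)\setminus\{0\}$ formula \eqref{multiplicity-equality} gives $m_\rmR(\lambda_0,A)=\oor(\lambda_0,\cF)$, while for $\lambda_0=0\in\cR(A)$ the same equality holds by the definition $m_\rmR(0,A):=\oor(0,\cF)$ introduced just before Lemma~\ref{l5.2}. Second, apply Lemma~\ref{l5.2} to write
\begin{equation*}
\cF(\lambda)=\prod_{\alpha\in\spec(a)}\big(W_\alpha(\lambda)\big)^{m_\rma(\alpha,a)}\quad\text{for all }\lambda\in\CC,
\end{equation*}
and take orders of zeros at $\lambda_0$ on both sides. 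Since $\cF$ and each $W_\alpha$ are entire, this gives
\begin{equation*}
\oor(\lambda_0,\cF)=\sum_{\alpha\in\spec(a)} m_\rma(\alpha,a)\cdot\oor(\lambda_0,W_\alpha).
\end{equation*}
Finally, discarding the vanishing summands (those with $W_\alpha(\lambda_0)\ne 0$) reduces the sum to the index set $\cJ(\lambda_0)$, yielding the claimed identity.

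There is essentially no hard step here; the only point that requires a moment of care is the identification $\cJ(\lambda_0)=\{\alpha\in\spec(a):W_\alpha(\lambda_0)=0\}$, which I would justify by recalling that $W_\alpha$ is the analogue of the Jost function $\cF$ for the scalar operator $A_\alpha$ (cf.\ \eqref{Wcomp} and the derivation of \eqref{W-alpha}), so its zero set is $\cR(A_\alpha)=\cR_\alpha$ by the discussion following Definition~\ref{def:gener}. With this identification in hand, the lemma follows immediately from the two displayed equations above.
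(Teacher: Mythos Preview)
Your proposal is correct and follows essentially the same approach as the paper, which simply states that the lemma follows from Lemma~\ref{l5.2}, \eqref{multiplicity-equality}, and the definition of $m_\rmR(0,A)$. You have spelled out in detail precisely the bookkeeping that the paper leaves implicit.
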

\begin{proof}
The lemma follows shortly from Lemma~\ref{l5.2}, \eqref{multiplicity-equality} and the definition of $m_\rmR(0,A)$.
\end{proof} 
To determine  the order of each zero of the entire complex valued function $W_\alpha$ we begin by recalling the following well-known result on  zeros of a composition of functions.
\begin{remark}\label{r5.6}
Assume that $g:\Omega_1\to\Omega_2$	and $f:\Omega_2\to\CC$ are holomorphic functions, and that $\lambda_0$ is a zero of order $k_1$ of the function $f(\cdot)$ while $\nu_0$ is a zero of order $k_2$ of $g(\cdot)-\lambda_0$. Then $\nu_0$ is a zero of order $k_1k_2$ of the composition $f\circ g$.
\end{remark}	
To find the order of a zero $\lambda_0$ of the function $W_\alpha$ for some $\alpha\in\spec(a)$, we need to find the order of zeros of two functions: First, of
the function used to perform the change of variables in Section~\ref{Complex-Resonances}, that is, $\sqrt{\alpha}\cosh{(\cdot)}-\lambda_0$ with $\lambda_0$ being a resonance of $A$, and, second, of the function $f_\alpha(\cdot)-\rmi\mu$, with $\mu\in\pi\ZZ\cup(\pi\ZZ+\pi/2)$, where $f_\alpha(z)=\ell\sqrt{\alpha}\sinh z+z$ is the function used  in the proof of Theorem~\ref{t3.1}. In the next remark we collected some easily checkable  facts about the zeros.
\begin{remark}\label{r5.7} 
\begin{enumerate}
\item[(i)] If $\alpha\in\CC\setminus\{0\}$ and $\lambda_0\in\cR_\alpha$ then all zeros of $\sqrt{\alpha}\cosh{(\cdot)}-\lambda_0$ are of order $1$, unless $\alpha=1/\ell^2$ and $\lambda_0=-1/\ell$. In the later case, the set of zeros of $1/\ell\cosh{(\cdot)}+1/\ell$ is $(2\ZZ+1)\pi\rmi$, all of which are of order 2;
\item[(ii)] The function $f_\alpha(\cdot)-\rmi\mu$ has no zeros of order $4$ or higher for any $\alpha\in\CC\setminus\{0\}$, $\mu\in\pi\ZZ\cup(\pi\ZZ+\pi/2)$;
\item[(iii)] The function $f_\alpha(\cdot)-\rmi\mu$ has a zero of order $3$ if and only if $\alpha=1/\ell^2$ and $\mu\in(2\ZZ+1)\pi\rmi$, and in this case  the zero is equal to $\rmi\mu$.\hfill$\Diamond$
\end{enumerate}	
\end{remark}	
The zeros of order $2$ of the function $f_\alpha(\cdot)-\rmi\mu$
will require some work culminating in Lemma \ref{l5.8}. To begin their study, we introduce the following sets, $\Omega^\pm_j$, $j=1,2$, each containing infinitely many elements, and, moreover, containing infinitely many elements in the interval $(1/\ell^2,\infty)$,
\begin{equation}\label{def-Omega}\begin{split}
\Omega_1^\pm&=\{\alpha\in\CC:\ell\sqrt{\alpha}\sinh{(\sqrt{1-\ell^2\alpha})}=\pm\sqrt{1-\ell^2\alpha}, \ell\sqrt{\alpha}\cosh{(\sqrt{1-\ell^2\alpha})}=\pm1\},\\
\Omega_2^\pm&=\{\alpha\in\CC:\ell\sqrt{\alpha}\cosh{(\sqrt{1-\ell^2\alpha})}=\pm\rmi\sqrt{1-\ell^2\alpha}, \ell\sqrt{\alpha}\sinh{(\sqrt{1-\ell^2\alpha})}=\pm\rmi\}\\
\Omega&=\Omega_1^+\cup\Omega_1^-\cup\Omega_2^+\cup\Omega_2^-.\end{split}
\end{equation}	
One can readily check the following assertions to be used later,
\begin{align}\label{prop-Omega-pm}
&\ell\sqrt{\alpha}\in\Omega_j^+\;\mbox{ if and only if }\;-\ell\sqrt{\alpha}\in\Omega_j^-,\;\mbox{for}\; j=1,2,\\
\label{prop-Omega-1-plus}
& \alpha\in\Omega_1^+\;\mbox{ and }\;\sqrt{1-\ell^2\alpha}\in\pi\rmi\ZZ\;\mbox{ implies }\;\alpha=1/\ell^2,\\
\label{prop-Omega-2}
&\sqrt{1-\ell^2\alpha}\notin\pi\rmi\ZZ\;\mbox{ for any }\;\alpha\in\Omega_1^-\cup\Omega_2^+\cup\Omega_2^-.
\end{align}
The zeros of order $2$ of the function $f_\alpha(z)=\ell\sqrt{\alpha}\sinh z+z$ are described next.
\begin{lemma}\label{l5.8}
\begin{enumerate}
\item[(i)] If $\alpha\in\Omega_1^+\setminus\{1/\ell^2\}$ and $k\in 2\ZZ+1$ then, $k\pi\rmi\pm\sqrt{1-\ell^2\alpha}$ are the only zeros of order $2$ of $f_\alpha(\cdot)-k\pi\rmi$;
\item[(ii)] If $\alpha\in\Omega_1^-$ and $k\in 2\ZZ$ then, $k\pi\rmi\pm\sqrt{1-\ell^2\alpha}$ are the only zeros of order $2$ of $f_\alpha(\cdot)-k\pi\rmi$;
\item[(iii)] If $\alpha\in\Omega_2^+$ and $k\in 2\ZZ$ then $k\pi\rmi+\pi\rmi/2+\sqrt{1-\ell^2\alpha}$ is the only zeros of order $2$ of $f_\alpha(\cdot)-k\pi\rmi-\pi\rmi/2$;
\item[(iv)] If $\alpha\in\Omega_2^-$ and $k\in 2\ZZ+1$ then $k\pi\rmi+\pi\rmi/2+\sqrt{1-\ell^2\alpha}$ is the only zeros of order $2$ of $f_\alpha(\cdot)-k\pi\rmi-\pi\rmi/2$;
\item[(v)] The function $f_\alpha(\cdot)-\rmi\mu$ has no other zeros of order $2$ except the ones listed in (i)-(iv).
\item[(vi)] The function $f_{1/\ell^2}(\cdot)-\rmi\mu$ has no zeros of order $2$ for any $\mu\in\pi\ZZ\cup(\pi\ZZ+\pi/2)$. 
\end{enumerate}
\end{lemma}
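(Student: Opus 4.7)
The strategy is to characterize a zero $z_0$ of $f_\alpha(\cdot)-\rmi\mu$ of order exactly $2$ by the three conditions $f_\alpha(z_0)=\rmi\mu$, $f_\alpha'(z_0)=0$, $f_\alpha''(z_0)\ne 0$, and then to perform a case analysis on $\mu\in\pi\ZZ\cup(\pi\ZZ+\pi/2)$ together with a sign ambiguity that appears when solving for $z_0$.

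First I would record $f_\alpha'(z)=\ell\sqrt{\alpha}\cosh{z}+1$ and $f_\alpha''(z)=\ell\sqrt{\alpha}\sinh{z}$. The equation $f_\alpha'(z_0)=0$ gives $\cosh{z_0}=-1/(\ell\sqrt{\alpha})$, and the identity $\cosh^2-\sinh^2=1$ then yields $\sinh{z_0}=\varepsilon\, w/(\ell\sqrt{\alpha})$ with $\varepsilon\in\{+1,-1\}$ and $w:=\sqrt{1-\ell^2\alpha}$. Substituting into $f_\alpha(z_0)=\rmi\mu$ gives $z_0=\rmi\mu-\varepsilon w$, so every candidate double zero is parametrized by $(\mu,\varepsilon)$.

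Next I would verify consistency by expanding $\cosh(\rmi\mu-\varepsilon w)$ and $\sinh(\rmi\mu-\varepsilon w)$ via the hyperbolic addition formulas and equating them to the required values $-1/(\ell\sqrt{\alpha})$ and $\varepsilon w/(\ell\sqrt{\alpha})$. For $\mu=k\pi$ the trigonometric factors $\cos\mu$, $\sin\mu$ collapse to $(-1)^k$ and $0$, and the matching conditions reduce to $\ell\sqrt{\alpha}\cosh{w}=(-1)^{k+1}$ and $\ell\sqrt{\alpha}\sinh{w}=(-1)^{k+1}w$, which are precisely the defining relations of $\Omega_1^+$ for $k$ odd and of $\Omega_1^-$ for $k$ even, giving (i) and (ii). For $\mu=k\pi+\pi/2$ one has $\cos\mu=0$, $\sin\mu=(-1)^k$, and an analogous matching forces the defining relations of $\Omega_2^\pm$, with the admissible sign $\varepsilon$ tied to the parity of $k$; tracking this carefully yields (iii) and (iv). Item (v) is automatic, since the above derivation was exhaustive in $(\mu,\varepsilon)$.

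Finally, to upgrade ``order $\ge 2$'' to ``order exactly $2$'' I would use $f_\alpha''(z_0)=\ell\sqrt{\alpha}\sinh{z_0}=\varepsilon w$, which is nonzero provided $w\ne 0$. Properties \eqref{prop-Omega-1-plus} and \eqref{prop-Omega-2} guarantee $w\ne 0$ for $\alpha$ in the ranges specified by (ii)--(iv), and in (i) one explicitly excludes $\alpha=1/\ell^2$. For (vi) the equality $\alpha=1/\ell^2$ forces $w=0$, hence $\sinh{z_0}=0$ and $f_\alpha''(z_0)=0$, so every candidate zero there is in fact of order $\ge 3$ (consistent with Remark~\ref{r5.7}(iii)) and none is of order exactly $2$. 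The main obstacle will be the sign-and-parity bookkeeping in the case analysis: the branches of $\sqrt{\alpha}$ and $w$, together with the parity of $k$ and the choice $\varepsilon=\pm1$, must be tracked consistently so that each candidate $(\mu,\varepsilon)$ is assigned to the correct one of $\Omega_1^+$, $\Omega_1^-$, $\Omega_2^+$, $\Omega_2^-$.
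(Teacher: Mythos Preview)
Your proposal is correct and follows essentially the same approach as the paper's own proof: the paper also characterizes the double zeros via the system $f_\alpha(z_0)=\rmi\mu$, $f_\alpha'(z_0)=0$, $f_\alpha''(z_0)\ne 0$, obtains the candidates $\rmi\mu\pm\sqrt{1-\ell^2\alpha}$, and then matches them against the defining relations of $\Omega_j^\pm$, invoking \eqref{prop-Omega-1-plus} and \eqref{prop-Omega-2} for the nonvanishing of $f_\alpha''$. Your write-up is in fact more detailed than the paper's terse argument, and your acknowledged caveat about the sign-and-parity bookkeeping is exactly where the work lies.
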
	
\begin{proof}
Fix $\mu\in\pi\ZZ\cup(\pi\ZZ+\pi/2)$. To prove the lemma we solve the system of equations $f_\alpha(z)=\rmi\mu$ and $f_\alpha'(z)=0$, imposing the condition $f_\alpha''(z)\ne 0$. The latter follows from \eqref{prop-Omega-1-plus} and \eqref{prop-Omega-2}. Solving the system we infer that $\rmi\mu+\pm\sqrt{1-\ell^2\alpha}$ are the only two candidates to be zeros of order $2$. Using this information in the equations $f_\alpha(z)=\rmi\mu$ and $f_\alpha'(z)=0$, and the definitions of $\Omega_j^\pm$, $j=1,2$, in \eqref{def-Omega} we immediately conclude assertions \textit{(i)-(v)}. Assertion \textit{(vi)} is straightforward.
\end{proof}
We are ready to calculate the order of zeros of $W_\alpha$ (recall notation in \eqref{def-Omega}).
\begin{lemma}\label{l5.9}
Assume that $0\not\in\spec(a)$ and let $\alpha$ be an eigenvalue of the matrix $a$. Then
\begin{enumerate}
\item[(i)] If $\alpha\in\CC\setminus\Omega$, then  $\oor(\lambda_0,W_\alpha)=1$ for any $\lambda_0\in\cR_\alpha$;
\item[(ii)] If $\alpha\in\Omega\setminus\{1/\ell^2\}$ then $\oor(\lambda_0,W_\alpha)=1$ if $\lambda_0\in\cR_\alpha\setminus\{-1/\ell\}$ and $\oor(-1/\ell,W_\alpha)=2$;
\item[(iii)] If $\alpha=1/\ell^2$ then $\oor(\lambda_0,W_{1/\ell^2})=1$ for any $\lambda_0\in\cR_{1/\ell^2}$. 
\end{enumerate}
\end{lemma}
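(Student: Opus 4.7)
The plan is to reduce the computation of $\oor(\lambda_0, W_\alpha)$ to that of $\oor(z_0, f_\alpha - \rmi\mu)$ via the change of variables $\lambda = \sqrt{\alpha}\cosh z$ used in Section~\ref{Complex-Resonances}, where $f_\alpha(z) = \ell\sqrt{\alpha}\sinh z + z$. Starting from \eqref{WprodS} one writes $W_\alpha(\lambda) = 2e^{-2\ell\lambda}P_\alpha(\lambda)Q_\alpha(\lambda)$ with entire factors
\begin{equation*}
P_\alpha(\lambda) := \lambda\cosh(\ell b) + b\sinh(\ell b), \qquad Q_\alpha(\lambda) := \cosh(\ell b) + \lambda b^{-1}\sinh(\ell b), \quad b = \sqrt{\lambda^2 - \alpha}.
\end{equation*}
A short manipulation using the addition formulas for $\cosh$ and $\sinh$ with $A = z$, $B = \ell\sqrt{\alpha}\sinh z$ yields the key identities
\begin{equation}\label{eq:key-identities-plan}
P_\alpha(\sqrt{\alpha}\cosh z) = \sqrt{\alpha}\cosh(f_\alpha(z)), \qquad (\sinh z)\, Q_\alpha(\sqrt{\alpha}\cosh z) = \sinh(f_\alpha(z)).
\end{equation}
Simultaneous vanishing of $P_\alpha$ and $Q_\alpha$ at $\lambda_0$ would make both \eqref{equiv-ab} and \eqref{equiv-cd} hold there, forcing $\alpha = 0$; hence $\oor(\lambda_0, W_\alpha)$ equals exactly one of $\oor(\lambda_0, P_\alpha)$ or $\oor(\lambda_0, Q_\alpha)$.

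Fix $\lambda_0 \in \cR_\alpha$ and pick $z_0$ with $\sqrt{\alpha}\cosh z_0 = \lambda_0$. By Lemma~\ref{l1l}, $\lambda_0^2 = \alpha$ (equivalently $\sinh z_0 = 0$) can happen only when $\alpha = 1/\ell^2$ and $\lambda_0 = -1/\ell$. In the generic case $\sinh z_0 \ne 0$, the map $\phi(z) := \sqrt{\alpha}\cosh z$ is a local biholomorphism at $z_0$ (cf.\ Remark~\ref{r5.7}(i)), so $\oor(\lambda_0, W_\alpha) = \oor(z_0, W_\alpha \circ \phi)$; combining \eqref{eq:key-identities-plan} with Remark~\ref{r5.6} and the simplicity of the zeros of $\cosh$ and $\sinh$, this gives
\begin{equation*}
\oor(\lambda_0, W_\alpha) = \oor(z_0, f_\alpha - \rmi\mu), \qquad \rmi\mu := f_\alpha(z_0) \in \rmi\pi\ZZ \cup (\rmi\pi\ZZ + \rmi\pi/2).
\end{equation*}

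Parts (i), (ii), and the ``generic'' half of (iii) then follow by invoking Lemma~\ref{l5.8} and Remark~\ref{r5.7}(iii). If $\alpha \notin \Omega$, every zero of $f_\alpha - \rmi\mu$ is simple, giving (i). If $\alpha \in \Omega \setminus \{1/\ell^2\}$, the order-$2$ zeros $z_0$ of $f_\alpha - \rmi\mu$ listed in Lemma~\ref{l5.8}(i)-(iv) all satisfy $\sqrt{\alpha}\cosh z_0 = -1/\ell$, as a short computation using \eqref{def-Omega} shows (for instance, in the $\Omega_1^+$ case, $z_0 = k\pi\rmi \pm \omega$ with $\omega := \sqrt{1-\ell^2\alpha}$ and $k$ odd, so $\cosh z_0 = -\cosh\omega$, and $\ell\sqrt{\alpha}\cosh\omega = 1$ yields $\lambda_0 = -1/\ell$). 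This proves (ii). For $\alpha = 1/\ell^2$ and $\lambda_0 \ne -1/\ell$, Lemma~\ref{l5.8}(vi) precludes order-$2$ zeros, and the order-$3$ zeros from Remark~\ref{r5.7}(iii) lie at $z_0 \in \rmi(2\ZZ+1)\pi$ where $\sinh z_0 = 0$, hence outside the generic case.

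The main obstacle is the ramified sub-case $\alpha = 1/\ell^2$, $\lambda_0 = -1/\ell$ needed to finish (iii). Here $\sinh z_0 = 0$, $\phi$ has ramification index $2$, and $\sinh(f_\alpha)/\sinh z$ is indeterminate of type $0/0$, so the composition argument breaks down. I would bypass the change of variables by expanding $Q_{1/\ell^2}$ in Taylor series around $\lambda_0 = -1/\ell$ using \eqref{defZ}: with $\lambda = -1/\ell + \eta$ so that $b^2 = -2\eta/\ell + \eta^2$, collecting orders in $\eta$ gives $Q_{1/\ell^2}(\lambda) = (\ell/3)(\lambda + 1/\ell) + \Oh((\lambda + 1/\ell)^2)$, a simple zero; together with $P_{1/\ell^2}(-1/\ell) = -1/\ell \ne 0$ this yields $\oor(-1/\ell, W_{1/\ell^2}) = 1$ and completes (iii).
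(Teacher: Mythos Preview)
Your argument is correct and follows essentially the same route as the paper: reduce via $\lambda=\sqrt{\alpha}\cosh z$ to the order of $z_0$ as a zero of $f_\alpha-\rmi\mu$, then invoke Remark~\ref{r5.7} and Lemma~\ref{l5.8}, treating $(\alpha,\lambda_0)=(1/\ell^2,-1/\ell)$ separately. Your explicit identities \eqref{eq:key-identities-plan} and the direct Taylor expansion at the ramified point are cleaner than the paper's (somewhat terse) factorization $(e^{2f_\alpha}-1)(e^{2f_\alpha}+1)\tW_\alpha$ and its citation of Lemma~\ref{l1l}, but the substance is the same.
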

\begin{proof}
Fix $\alpha\in\spec(a)$ and $\lambda_0\in\cR_\alpha$ and assume $(\alpha,\lambda_0)\ne(1/\ell^2,-1/\ell)$. Then,  $\lambda_0^2\ne\alpha$. Hence, we can find $\eps>0$ such that $\lambda^2\ne\alpha$ for any $\lambda\in D(\lambda_0,\eps)$. Then, there exists $\mu\in\pi\ZZ\cup(\pi\ZZ+\pi/2)$ and a zero $z_0$ of the function $f_\alpha(\cdot)-\rmi\mu$ such that $\sqrt{\alpha}\cosh{z_0}=\lambda_0$. By continuity, there exists $\delta>0$ such that $\sqrt{\alpha}\cosh{z}\in D(\lambda_0,\eps)$ for any $z\in D(z_0,\delta)$.
Making again the change of variables $\lambda=\sqrt{\alpha}\cosh{z}$ with $z\in D(z_0,\delta)$ and using the arguments from Section~\ref{Jost-Green}, we infer that  there exists a holomorphic in $D(z_0,\delta)$ function $\tW_\alpha$ such that $\tW_\alpha(z)\ne 0$ for any $z\in D(z_0,\delta)$ and
\begin{equation}\label{l5.9.1}
W_\alpha(\sqrt{\alpha}\cosh{z})=(e^{2lf_\alpha(z)}-1)(e^{2lf_\alpha(z)}+1)\tW_\alpha(z)\;\mbox{for any}\;z\in D(z_0,\delta).
\end{equation}	
All zeros of the entire functions $\mu\mapsto e^\mu-1$ and $\mu\mapsto e^\mu+1$ are of order 1. Assertions \textit{(i)} and \textit{(ii)} of the lemma  follow  from Remark~\ref{r5.6}, Remark~\ref{r5.7}, Lemma~\ref{l5.8} and \eqref{l5.9.1}. Similarly, we see that $\oor(\lambda_0,W_{1/\ell^2})=1$ for any $\lambda_0\in\cR_{1/\ell^2}\setminus\{-1/\ell\}$. In the special case $\lambda_0=-1/\ell$, the conclusion of assertion \textit{(iii)} follows from Lemma~\ref{l1l}.
\end{proof}

Summarizing in the next theorem the results obtained  in Lemma~\ref{l5.5} and Lemma~\ref{l5.9} we arrive at the final formula for  the multiplicity of a resonance of the Schr\"odinger operator $A$. We define the function $\mathfrak{m}:\CC\times\CC\setminus\{0\}\to\NN$ by letting
$\mathfrak{m}(\lambda,\alpha)=2$  if $\lambda=-1/\ell$ and $\alpha\in\Omega\setminus\{1/\ell^2\}$ and $\mathfrak{m}(\lambda,\alpha)=1$ otherwise and recall that the set $\cJ(\lambda_0)$ is defined in \eqref{def-J}.
\begin{theorem}\label{t5.10}
Assume that $0\notin\spec(a)$. Then for any $\lambda_0\in\cR(A)\setminus\{0\}$ we have  
\begin{equation}\label{mult-formula-2}
m_\rmR(\lambda_0,A)=\sum_{\alpha\in\cJ(\lambda_0)}m_\rma(\alpha,a)\mathfrak{m}(\lambda_0,\alpha).
\end{equation}
\end{theorem}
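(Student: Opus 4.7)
The plan is to observe that Theorem~\ref{t5.10} is essentially a repackaging of Lemma~\ref{l5.5} and Lemma~\ref{l5.9}: the former reduces $m_\rmR(\lambda_0,A)$ to a weighted sum of the orders $\oor(\lambda_0,W_\alpha)$ with weights $m_\rma(\alpha,a)$, while the latter computes those orders, and the function $\mathfrak{m}(\lambda_0,\alpha)$ has been defined precisely so that its values match the case distinction of Lemma~\ref{l5.9}.

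Concretely, I would fix $\lambda_0\in\cR(A)\setminus\{0\}$ and begin from Lemma~\ref{l5.5}, which yields
\begin{equation*}
m_\rmR(\lambda_0,A)=\sum_{\alpha\in\cJ(\lambda_0)}m_\rma(\alpha,a)\cdot\oor(\lambda_0,W_\alpha).
\end{equation*}
Then, for each $\alpha\in\cJ(\lambda_0)$, I would verify the identity $\oor(\lambda_0,W_\alpha)=\mathfrak{m}(\lambda_0,\alpha)$ by running through the three alternatives of Lemma~\ref{l5.9}: (a) if $\alpha\in\CC\setminus\Omega$, then $\oor(\lambda_0,W_\alpha)=1$, and since the definition of $\mathfrak{m}$ requires $\alpha\in\Omega\setminus\{1/\ell^2\}$ to take the value $2$, we also have $\mathfrak{m}(\lambda_0,\alpha)=1$; (b) if $\alpha\in\Omega\setminus\{1/\ell^2\}$, then Lemma~\ref{l5.9}(ii) gives $\oor(\lambda_0,W_\alpha)=2$ exactly when $\lambda_0=-1/\ell$ and $\oor(\lambda_0,W_\alpha)=1$ otherwise, which by definition of $\mathfrak{m}$ matches $\mathfrak{m}(\lambda_0,\alpha)$; (c) if $\alpha=1/\ell^2$, then $\oor(\lambda_0,W_{1/\ell^2})=1$ by Lemma~\ref{l5.9}(iii) and $\mathfrak{m}(\lambda_0,1/\ell^2)=1$ by definition.

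Substituting the identity $\oor(\lambda_0,W_\alpha)=\mathfrak{m}(\lambda_0,\alpha)$ term-by-term into the sum from Lemma~\ref{l5.5} yields the desired formula
\begin{equation*}
m_\rmR(\lambda_0,A)=\sum_{\alpha\in\cJ(\lambda_0)}m_\rma(\alpha,a)\,\mathfrak{m}(\lambda_0,\alpha).
\end{equation*}
There is no real obstacle at this stage: the heavy lifting has already been done in establishing Lemma~\ref{l5.5} (which in turn rests on the factorization of $\cF$ from Lemma~\ref{l5.2} and the Birman--Schwinger-type identification \eqref{multiplicity-equality}) and in the case analysis of Lemma~\ref{l5.9} (which depends on the careful classification of double zeros of $f_\alpha(\cdot)-\rmi\mu$ carried out in Lemma~\ref{l5.8} together with the change of variables $\lambda=\sqrt{\alpha}\cosh z$). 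The only thing one must be slightly careful about is to check that the case $(\alpha,\lambda_0)=(1/\ell^2,-1/\ell)$, the sole genuinely degenerate configuration identified in Remark~\ref{r5.7}(i) and Lemma~\ref{l1l}, is correctly absorbed into the definition of $\mathfrak{m}$; this is the reason for excluding $\alpha=1/\ell^2$ from the set of $\alpha$ where $\mathfrak{m}(\lambda,\alpha)=2$.
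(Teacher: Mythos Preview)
Your proposal is correct and matches the paper's own approach exactly: the paper presents Theorem~\ref{t5.10} as a direct summary of Lemma~\ref{l5.5} and Lemma~\ref{l5.9}, with $\mathfrak{m}$ defined so as to encode the case distinction of the latter. Your case-by-case verification that $\oor(\lambda_0,W_\alpha)=\mathfrak{m}(\lambda_0,\alpha)$ is precisely the intended argument.
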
	

In the remaining part of this section we discuss the Weyl Law for the number of resonances of the Schr\"odinger operator $A$ with the matrix valued square well potential, that is, we determine the asymptotic behavior of the number of resonances, counting multiplicities, located in a large disk centered at $0$. This type of problems has a long history, and we refer to \cite{CH,DZ,Sj2,Z,Z2} for its extensive review and discussions of its importance in various settings. Our objective here is to merely demonstrate how the matrix valued case could be reduced to the well studied scalar valued case. We define the resonance counting function for $A$ by
\begin{equation}\label{cal-N-A}
\cN_A(R)=\sum_{\lambda_0\in\cR(A)\cap D(0,R)}	m_\rmR(\lambda_0,A),\;\mbox{for}\; R>0
\end{equation}  
and  the resonance counting function
for $A_\alpha=\partial_{xx}+\alpha\chi_{[-\ell,\ell]}$, $\alpha\in\spec(a)$, 
by 
\begin{equation}\label{cal-N-A-alpha}
\cN_{A_\alpha}(R)=\sum_{\lambda_0\in\cR_\alpha\cap D(0,R)}	m_\rmR(\lambda_0,A_\alpha),\;\mbox{for}\; R>0.
\end{equation}  
\begin{theorem}\label{t5.11}
Assume $0\notin\spec(a)$ and let $A=\partial_{xx}+a\chi_{[-\ell,\ell]}$, where $a\in\CC^{d\times d}$ is a matrix. Then, 
\begin{equation}\label{Weyl-lim }
\lim_{R\to\infty}\big({\cN_A(R)}/{R}\big)={4\ell d}/{\pi}.	
\end{equation}		
\end{theorem}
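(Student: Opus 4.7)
My plan is to reduce the matrix-valued Weyl Law to the scalar case and then invoke the classical counting result for scalar compactly supported potentials.

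The first step is to combine the factorization $\cF(\lambda)=\prod_{\alpha\in\spec(a)}(W_\alpha(\lambda))^{m_\rma(\alpha,a)}$ from Lemma~\ref{l5.2} with the multiplicity identity \eqref{multiplicity-equality}, applied both to the matrix operator $A$ and to each scalar operator $A_\alpha$ (whose Jost function is precisely $W_\alpha$). This yields, for every resonance $\lambda_0$, the pointwise identity
\[
m_\rmR(\lambda_0,A)=\oor(\lambda_0,\cF)=\sum_{\alpha\in\spec(a)}m_\rma(\alpha,a)\,\oor(\lambda_0,W_\alpha)=\sum_{\alpha\in\spec(a)}m_\rma(\alpha,a)\,m_\rmR(\lambda_0,A_\alpha).
\]
Summing over $\lambda_0\in\cR(A)\cap D(0,R)$ and interchanging the two sums, while noting that $\oor(\lambda_0,W_\alpha)>0$ exactly when $\lambda_0\in\cR_\alpha\subseteq\cR(A)$, produces the clean reduction
\[
\cN_A(R)=\sum_{\alpha\in\spec(a)}m_\rma(\alpha,a)\,\cN_{A_\alpha}(R).
\]

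Next I would invoke the classical scalar Weyl Law for Schr\"odinger operators with compactly supported potentials, cf.\ \cite{Z,DZ,Sj2,CH,Z2}, which for each $A_\alpha=\partial_{xx}+\alpha\chi_{[-\ell,\ell]}$ asserts $\lim_{R\to\infty}\cN_{A_\alpha}(R)/R=4\ell/\pi$; the prefactor equals $(2/\pi)$ times the diameter $2\ell$ of the convex hull of $\supp(\chi_{[-\ell,\ell]})$. Dividing the reduction identity by $R$, passing to the limit, and using $\sum_{\alpha\in\spec(a)}m_\rma(\alpha,a)=d$ then gives $\lim_{R\to\infty}\cN_A(R)/R=4\ell d/\pi$, as claimed.

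The main obstacle is the scalar Weyl Law itself; once it is available, the matrix statement is essentially pure bookkeeping. For a self-contained derivation I would exploit the fact that $W_\alpha$ is an entire function of exponential type $2\ell$ (visible from \eqref{WprodF}, since every exponential there carries an $\ell$ in its exponent), together with the precise asymptotic location of its zeros along the logarithmic curve supplied by Theorem~\ref{t3.14}(ii). Jensen's formula applied on $\partial D(0,R)$, or equivalently the Levinson-type counting theorem for sine-type entire functions, should then yield the density $4\ell/\pi$ directly, thereby bypassing the citation.
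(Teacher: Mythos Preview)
Your proposal is correct and follows essentially the same approach as the paper: reduce to the scalar case via the factorization of $\cF$ (the paper routes this through Lemma~\ref{l5.5}, which is itself a consequence of Lemma~\ref{l5.2}), obtain the identity $\cN_A(R)=\sum_{\alpha\in\spec(a)}m_\rma(\alpha,a)\,\cN_{A_\alpha}(R)$, and then invoke the scalar Weyl Law from \cite[Theorem 2.16]{DZ}. The only point the paper makes explicit that you leave implicit is the compatibility \eqref{mult-resonance} between the multiplicity definition used here and the one in \cite{DZ}, which is needed to legitimately cite the scalar result.
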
	
\begin{proof}
Applying \eqref{multiplicity-equality} for the case of $d=1$, we infer that $m_\rmR(\lambda_0,A_\alpha)=\oor(\lambda_0,W_\alpha)$ for any $\lambda_0\in\cR(A)\setminus\{0\}$. The equality holds for $\lambda_0=0$ as well by our definition of $m_\rmR(0,A_\alpha)$, provided $0\in\cR_\alpha$. From Lemma~\ref{l5.5} we obtain 
\begin{equation}\label{t5.11.1}
m_\rmR(\lambda_0,A)=\sum_{\alpha\in\cJ(\lambda_0)}m_\rma(\alpha,a)\cdot m_\rmR(\lambda_0,A_\alpha)\;\mbox{for any}\;\lambda_0\in\cR(A),
\end{equation}	
where $\cJ(\lambda_0)$ is defined in \eqref{def-J}.  Since $\cR(A)=\cup_{\alpha\in\spec(a)}\cR_\alpha$, by \eqref{cal-N-A}, \eqref{cal-N-A-alpha} and \eqref{t5.11.1}  
\begin{equation}\label{t5.11.2}
\cN_A(R)=\sum_{\alpha\in\spec(a)}m_\rma(\alpha,a)\cN_{A_\alpha}(R)\;\mbox{for any}\;R>0.	
\end{equation}	
As pointed out in \eqref{mult-resonance}, in the case of scalar potentials the definition of multiplicity of a non-zero resonance given in the current paper, cf.\ \eqref{def-multiplicity}, coincides with that given in \cite[Definition 2.3]{DZ}. Hence, from \cite[Theorem 2.16]{DZ} (see also \cite{Z2}), we obtain 
\begin{equation}\label{t5.11.3}
\lim_{R\to\infty}\big({\cN_{A_\alpha}(R)}/{R}\big)={4\ell}/{\pi}\;\mbox{ for any }\;\alpha\in\spec(a).	
\end{equation}
The conclusion of the theorem now follows from \eqref{t5.11.3} by passing to the limit in \eqref{t5.11.2} since $\sum_{\alpha\in\spec(a)}m_\rma(\alpha)=d$. 	 		
\end{proof}
\appendix

\section{Properties of Solutions of the system $S_0(p,q,\mu)$}\label{App}
In this appendix we discuss several technical lemmas that describe the solutions of the system $(S_0(p,q,\mu))$ introduced in Section~\ref{Complex-Resonances}. In particular, Lemma \ref{l3.2} yields the existence of infinitely many resonances in each quadrant of the left half plane while Lemmas \ref{l3.5} and \ref{l3.8} provide main technical ingredients of the proof of Lemma \ref{l3.10}. 
\begin{lemma}\label{l3.2}
Assume $p\geq 0$ and $(p,q)\ne (0,0)$. Then system $(S_0(p,q,\mu))$ has infinitely many solutions $(t,s)$ in each quadrant of $\RR^2$. 	
\end{lemma}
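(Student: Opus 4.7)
The plan is to sharpen Theorem~\ref{t3.1} by locating the complex roots of the equation $f_\alpha(z)=\rmi\mu$ precisely. With $z=t+\rmi s$ and $p+\rmi q=\ell\sqrt{\alpha}$, the real system $(S_0(p,q,\mu))$ is equivalent to the single complex equation
\begin{equation*}
(p+\rmi q)\sinh z + z = \rmi\mu.
\end{equation*}
I will produce, for each integer $n$ with $|n|$ sufficiently large, two distinct roots $z_n^+$ and $z_n^-$ satisfying
\begin{equation*}
\re z_n^+ = \log(4\pi|n|/r)+O(1)>0,\quad \re z_n^- = -\log(4\pi|n|/r)+O(1)<0,\quad \im z_n^\pm = 2\pi n + O(1),
\end{equation*}
where $r:=\sqrt{p^2+q^2}>0$. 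Letting $n\to+\infty$ then populates the first quadrant (via $z_n^+$) and the second quadrant (via $z_n^-$) with infinitely many distinct solutions, while $n\to-\infty$ populates the fourth and third quadrants. Distinctness is automatic because $\im z_n^\sigma - \im z_m^\sigma \approx 2\pi(n-m)$ for $n\neq m$, and for a fixed $n$ the real parts of $z_n^+$ and $z_n^-$ have opposite signs.

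To construct $z_n^+$, I rewrite the equation as $e^z = 2(\rmi\mu-z)/(p+\rmi q) + e^{-z}$ and set up a Banach contraction fixed-point argument for
\begin{equation*}
\Phi_n^+(z) := \log\!\left(\frac{2(\rmi\mu-z)}{p+\rmi q} + e^{-z}\right) + 2\pi\rmi n
\end{equation*}
on a small disk around the approximate root $z_n^{+,0}:=\log(-4\pi\rmi n/(p+\rmi q))+2\pi\rmi n$; here a branch of $\log$ is chosen (depending on $\sgn n$) whose imaginary part lies in an interval of length $2\pi$ containing $\im z_n^{+,0}-2\pi n$. A direct computation shows $|(\Phi_n^+)'(z)| = O(1/|n|)$ and $|\Phi_n^+(z_n^{+,0}) - z_n^{+,0}| = O(\log|n|/|n|)$ on the disk of interest, whence the contraction principle yields a unique fixed point $z_n^+$ with $z_n^+ - z_n^{+,0} = O(\log|n|/|n|)$. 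The construction of $z_n^-$ is symmetric: rewrite the equation as $e^{-z} = 2(z-\rmi\mu)/(p+\rmi q) + e^z$ and contract the analogous map $\Phi_n^-(z) := -\log[2(z-\rmi\mu)/(p+\rmi q) + e^z] + 2\pi\rmi n$ near $z_n^{-,0}:=-\log(4\pi\rmi n/(p+\rmi q))+2\pi\rmi n$, obtaining $\re z_n^-<0$ with the asserted asymptotic.

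The main obstacle is the branch bookkeeping: for each sign combination of $\sigma\in\{+,-\}$ and $\sgn n\in\{+,-\}$ one must verify that the argument of the complex quantity inside $\log$ stays within the strip of the chosen branch throughout the iteration disk, so that $\Phi_n^\sigma$ is holomorphic there and the contraction estimate is uniform in $n$. This reduces to checking that $\arg(\mp 4\pi\rmi n/(p+\rmi q))$ is bounded away from the chosen branch cut for all sufficiently large $|n|$, which follows from the hypothesis $p\geq 0$ (so $\arg(p+\rmi q)\in[-\pi/2,\pi/2]$) together with $(p,q)\neq(0,0)$ (so the argument of $\log$ is never zero); for the two boundary cases $p=0$ one simply rotates the branch cut. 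Once these details are in place, the enumeration described in the first paragraph yields infinitely many distinct solutions in each of the four quadrants of $\RR^2$.
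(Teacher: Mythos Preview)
Your approach is correct and genuinely different from the paper's. The paper solves the first real equation of $(S_0(p,q,\mu))$ for $s$ in the form $s=s_k^\pm(t)=\pm\arccos(t/R(t))-\varphi(t)+(2k\mp1)\pi$, substitutes into the second equation to obtain a scalar equation $F_\pm(t)+s_k^\pm(t)=\mu$, and then uses the intermediate value theorem quadrant by quadrant, exploiting that $F_\pm(t)\to\mp\infty$ as $t\to+\infty$ while $s_k^\pm$ can be pushed to $\pm\infty$ by choosing $k$. Your route is instead a Banach contraction on the single complex equation $(p+\rmi q)\sinh z+z=\rmi\mu$, producing roots $z_n^\pm$ with explicit asymptotics $\re z_n^\pm=\pm\log(4\pi|n|/r)+O(1)$, $\im z_n^\pm=2\pi n+O(1)$. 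This costs you the branch bookkeeping you flag (which is genuine but routine: for each of the four sign combinations the leading argument is a fixed angle, and one chooses the cut away from it), but it buys you quantitative control on the root locations; in particular your asymptotics already encode the content of Lemma~\ref{l3.10}, namely $|s_n|e^{-|t_n|}\to r/2$, as a byproduct. The paper's IVT argument is shorter and avoids all branch issues, but yields only existence with no localization, which is why the paper needs the separate technical Lemmas~\ref{l3.5} and~\ref{l3.8} afterwards.
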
	
\begin{proof}  We re-write the first equation in $(S_0(p,q,\mu))$ as  $R(t)\cos{(s+\varphi(t))}=-t$ and
express $s$ as a function of $t$ by $s=s_k^\pm(t)$ where
we introduced the functions
 $R$, $\varphi$, $s_k^\pm$ such that
\begin{equation*}\label{l3.2.1}\begin{split}
R(t)&=\sqrt{(p\sinh{t})^2+(q\cosh{t})^2},\, \cos{(\varphi(t))}={p\sinh{t}}/{R(t)},\,\sin{(\varphi(t))}={q\cosh{t}}/{R(t)},\\
s_k^\pm(t)&=\pm\arccos(t/R(t))-\varphi(t)+(2k\mp 1)\pi, \, k\in\ZZ,\, |\varphi(t)|\le\pi/2,\,
\, |t|\ge t_*,\end{split}	
\end{equation*}
and choose $t_*$ such that $R(t)\geq |t|$ whenever $|t|\geq t_*>0$. The choice of $t_*$ is possible because  by the assumptions in the lemma
\begin{equation}\label{l3.2.2}
\lim_{t\to\pm\infty}{{R(t)}{e^{-|t|}}}=\sqrt{p^2+q^2}>0,\quad	\lim_{t\to\pm\infty}{{t}/{R(t)}}=0.
\end{equation}
 Plugging $s=s_k^\pm(t)$ into the second equation of \eqref{sys-S0mu},  a  computation shows that $(t,s)$ is a solution to \eqref{sys-S0mu} if and only if for some $k\in\ZZ$ it is a solution to one of the following two systems,
$s=s_k^+(t)$, $F_+(t)+s_k^+(t)=\mu$ or $s=s_k^-(t)$, $F_-(t)+s_k^-(t)=\mu$, where we introduce notation
\[ F_\pm(t)=\big(\mp(p^2+q^2)\sinh{t}\cosh{t}\sqrt{R^2(t)-t^2}+pqt\big)/{R^2(t)}.\]
Clearly, $s_k^+(t)\to\pm\infty$ and $s^-_k(t)\to\pm\infty$ uniformly for $|t|\ge t_*$ as $k\to\pm\infty$ while $F_+(t)\to\mp\infty$ and $F_-(t)\to\pm\infty$ as $t\to\pm\infty$. 
We show that there are infinitely many $(t,s)$ in the first quadrant as follows:  $s_k^+(t)>0$ for $|t|\ge t_*$ and $F_+(t_*)+s_k^+(t_*)>\mu$ 
for each $k>0$ sufficiently large because $s_k^+(t)\to+\infty$ as $k\to\infty$. We can find a $t_k\ge t_*>0$ so that $F_+(t_k)+s_k^+(t_k)=\mu$ because $F_+(t)\to-\infty$ as $t\to+\infty$. Now $(t_k,s_k^+(t_k))$ is the desired solution. For the fourth quadrant we argue as follows:  $s_k^-(t)<0$ for $|t|\ge t_*$ and $F_-(t_*)+s_k^-(t_*)<\mu$ for each $k<0$ with $|k|$ sufficiently large because $s_k^-(t)\to-\infty$ as $k\to-\infty$. We can find a $t_k\ge t_*>0$ so  that $F_-(t_k)+s_k^-(t_k)=\mu$ because $F_-(t)\to\infty$ as $t\to+\infty$. Now $(t_k,s_k^-(t_k))$ is the desired solution. The arguments for the second and third quadrants are analogous.
\end{proof}
\begin{remark}\label{r3.4}
The study of solutions $(t,s)$ to \eqref{sys-S0mu} can be reduced to the case of solutions located in the first quadrant. Indeed,
\begin{enumerate}
\item[(i)] If $(t,s)$ is a solution of $S_0(p,q,\mu)$ then $(-t,s)$ is a solution of $S_0(p,-q,\mu)$; 
\item[(ii)] If $(t,s)$ is a solution of $S_0(p,q,\mu)$ then $(t,-s)$ is a solution of $S_0(p,-q,-\mu)$;
\item[(iii)] If $(t,s)$ is a solution of $S_0(p,q,\mu)$ then $(-t,-s)$ is a solution of $S_0(p,q,-\mu)$.  
\end{enumerate}
\end{remark}

A goal of this appendix is to relate $e^t$ and $s$ as $t\to\infty$ and $s\to\infty$ for $(t,s)$ being  solutions of  system \eqref{sys-S0mu}. We will deal with the equivalent reformulation of  
\eqref{sys-S0mu} given in \eqref{l3.11.1a}, \eqref{l3.11.1b}.
The next lemma
shows how to control the term $q\cos{s}+p\sin{s}$ in the equations. We start with the case when $q\ne0$.
\begin{lemma}\label{l3.5}
Assume $p\geq 0$ and $q\ne 0$, let $\{\mu_n\}_{n\in\NN}$ be a finite sequence and $\{(t_n,s_n)\}_{n\in\NN}$ be such a sequence of solutions  to $(S_0(p,q,\mu_n))$ that $t_n\to\infty$ and $s_n\to\infty$ as $n\to\infty$. Then
\begin{align}\label{u-conv-case 1}
q\cos{(s_n)}+p\sin{(s_n)}&\to-\sqrt{p^2+q^2}\;\mbox{as}\; n\to\infty,\\
\label{t-s-q1-case 1}
\big|s_ne^{-t_n}+\big(q\cos(s_n)+p\sin(s_n)\big)/{2}\big|&\leq \big(\max_{n\in\NN}|\mu_n|+({|p|+|q|})/{2}\big)e^{-t_n}, \, n\in\NN.
\end{align}
\end{lemma}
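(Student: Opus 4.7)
\textbf{Proof proposal for Lemma~\ref{l3.5}.}
The plan is to work entirely from the system \eqref{l3.11.1a}--\eqref{l3.11.1b}, which was already derived in the paper as an equivalent rewriting of $(S_0(p,q,\mu))$. That is, for each $n$ the solution $(t_n,s_n)$ satisfies
\begin{align*}
 e^{t_n}(p\cos s_n-q\sin s_n)-e^{-t_n}(p\cos s_n+q\sin s_n)&=-2t_n,\\
 e^{t_n}(q\cos s_n+p\sin s_n)+e^{-t_n}(-q\cos s_n+p\sin s_n)&=2\mu_n-2s_n.
\end{align*}

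To obtain \eqref{t-s-q1-case 1}, I would simply multiply the second equation by $e^{-t_n}/2$, move $s_n e^{-t_n}$ to the left, and solve for $s_n e^{-t_n}+(q\cos s_n+p\sin s_n)/2$; the right-hand side equals $\mu_n e^{-t_n}-(e^{-2t_n}/2)(-q\cos s_n+p\sin s_n)$, and bounding $|-q\cos s_n+p\sin s_n|\leq|p|+|q|$ together with $e^{-2t_n}\leq e^{-t_n}$ (valid since $t_n\to\infty$, so eventually $t_n\geq 0$) produces the stated estimate.

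For the convergence \eqref{u-conv-case 1}, the key observation is the Pythagorean-type identity
\[(p\cos s-q\sin s)^2+(q\cos s+p\sin s)^2=p^2+q^2.\]
Dividing the first equation of the system by $e^{t_n}$ shows that $p\cos s_n-q\sin s_n=-2t_ne^{-t_n}+e^{-2t_n}(p\cos s_n+q\sin s_n)\to 0$ since $t_ne^{-t_n}\to 0$ and the second factor is bounded. Hence $(q\cos s_n+p\sin s_n)^2\to p^2+q^2$, so every subsequential limit of $q\cos s_n+p\sin s_n$ equals $\pm\sqrt{p^2+q^2}$.

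The main (and only mildly delicate) step is to fix the sign. From \eqref{t-s-q1-case 1} the sequence $s_ne^{-t_n}+(q\cos s_n+p\sin s_n)/2$ tends to $0$; since $q\cos s_n+p\sin s_n$ is bounded and $s_n>0$ for $n$ large, the nonnegative quantity $s_ne^{-t_n}$ is bounded, and any subsequential limit $L\geq 0$ of $s_ne^{-t_n}$ forces the corresponding subsequential limit of $(q\cos s_n+p\sin s_n)/2$ to equal $-L\leq 0$. Combining with the previous paragraph, $L=\sqrt{p^2+q^2}/2$, and this value is independent of the subsequence chosen; therefore the full sequence satisfies $s_ne^{-t_n}\to\sqrt{p^2+q^2}/2$ and $q\cos s_n+p\sin s_n\to-\sqrt{p^2+q^2}$, which is \eqref{u-conv-case 1}. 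I do not expect any real obstacle: the entire argument reduces to rearranging the two equations and exploiting the bound $p^2+q^2>0$ guaranteed by $q\ne 0$.
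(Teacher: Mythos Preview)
Your proof is correct and takes a genuinely cleaner route than the paper's. Both you and the paper begin the same way---rewriting the system as \eqref{l3.11.1a}--\eqref{l3.11.1b}, deriving \eqref{t-s-q1-case 1} by rearranging the second equation, and showing $v_n:=p\cos s_n-q\sin s_n\to 0$ from the first. From there the approaches diverge sharply. The paper solves the original system for $\sinh t_n\cos s_n$ and $\cosh t_n\sin s_n$, passes to $\tan s_n$, writes $s_n=m_n\pi+\theta_n$ with $m_n\in\ZZ$, and then argues via an auxiliary ``Claim 3'' that $(-1)^{m_n}=-\sgn(q)$ eventually, which finally pins down the sign of $u_n:=q\cos s_n+p\sin s_n$. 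You instead invoke the elementary identity $v_n^2+u_n^2=p^2+q^2$, so that $v_n\to 0$ immediately gives $u_n^2\to p^2+q^2$; then the sign is forced by the single observation that \eqref{t-s-q1-case 1} gives $s_ne^{-t_n}+u_n/2\to 0$ with $s_ne^{-t_n}\ge 0$. Your argument is shorter, avoids the $\tan$ computation and the integer-parity bookkeeping entirely, and in fact never uses the hypothesis $q\ne 0$ (only $(p,q)\ne(0,0)$), so it simultaneously covers the case treated separately in Lemma~\ref{l3.8}. The paper's approach, on the other hand, yields the extra intermediate information $\tan s_n\to p/q$ and the lower bound $s_ne^{-t_n}\ge c_1>0$ along the way, though neither is needed for the lemma as stated.
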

\begin{proof}
For brevity, we introduce the bounded sequences 
\begin{equation}\label{l3.5.1}\begin{split}
u_n&=q\cos{(s_n)}+p\sin{(s_n)},\;v_n=p\cos{(s_n)}-q\sin{(s_n)},\\
\tu_n&=-q\cos{(s_n)}+p\sin{(s_n)},\;\tv_n=-p\cos{(s_n)}-q\sin{(s_n)}. 	
\end{split}\end{equation}
Since $\{(t_n,s_n)\}_{n\in\NN}$ is a sequence of solutions of $(S_0(p,q,\mu_n))$, 
\begin{equation}\label{l3.5.2}
v_ne^{t_n}+\tv_ne^{-t_n}+2t_n=0,\,
u_ne^{t_n}+\tu_ne^{-t_n}+2s_n=2\mu_n.
\end{equation}
Solving the first equation in \eqref{l3.5.2} for $v_n$ and using that $\{\tv_n\}_{n\in\NN}$ is bounded yields 
\begin{equation}\label{l3.5.4} 
v_n=-\big(2t_n+\tv_ne^{-t_n}\big)e^{-t_n}\to 0\;\mbox{as}\;n\to\infty.
\end{equation}
\noindent{\textbf{Claim 1.}} There exist $c_1,c_2>0$ such that 
$c_1\leq s_ne^{-t_n}\leq c_2$ for any $n\in\NN$; consequently,
\begin{equation}\label{l3.5.8}
\lim_{n\to\infty}{t_n}/{s_n}=0.
\end{equation}
Indeed, by the second equation in \eqref{l3.5.2}
\begin{equation}\label{l3.5.6}
s_ne^{-t_n}=-{u_n}/{2}+\big(2\mu_n-\tu_n\big)e^{-t_n}/{2}\;\mbox{for any}\; n\in\NN.
\end{equation}
Thus the sequence $\{s_ne^{-t_n}\}$ is bounded from above since  $\{u_n\}$, $\{\tu_n\}$ and $\{\mu_n\}$ are bounded and $t_n\to\infty$ as $n\to\infty$. 
To show that $\liminf_{n\to\infty}s_ne^{-t_n}>0$, seeking a contradiction and passing to a subsequence, we may suppose that $s_{n}e^{-t_{n}}\to0$ as $n\to\infty$.
 Then $u_{n}\to0$ by \eqref{l3.5.6}. 
 Using the first line in \eqref{l3.5.1} we arrive at a contradiction,
\begin{equation}\label{l3.5.7}	 	
\cos{(s_{n})}=({qu_{n}+pv_{n}})/({p^2+q^2})\to0\;\mbox{and}\;\sin{(s_{n})}=({pu_{n}-qv_{n}})/({p^2+q^2})\to0
\end{equation}
as $n\to\infty$, thus finishing the proof of Claim 1.   

\noindent{\textbf{Claim 2.}} $s_n\notin\pi\ZZ+{\pi}/{2}$ for all $n\in\NN$ and $\lim_{n\to\infty}\tan{(s_n)}={p}/{q}$.
Indeed, viewing $(S_0(p,q,\mu))$ as a system of equations for  $\sinh{(t_n)}\cos{(s_n)}$ and $\cosh{(t_n)}\sin{(s_n)}$ yields 
\begin{equation}\label{l3.5.10}
\sinh{(t_n)}\cos{(s_n)}=\frac{-pt_n-q(s_n-\mu_n)}{p^2+q^2}\;\mbox{and}\;\cosh{(t_n)}\sin{(s_n)}=\frac{-p(s_n-\mu_n)+qt_n}{p^2+q^2}.
\end{equation}
Since $q\ne0$, by \eqref{l3.5.8} and \eqref{l3.5.10}  $|\sinh{(t_n)}\cos{(s_n)}|\to\infty$ and thus, renumbering $s_n$ if needed, $\cos{(s_n)}\ne0$ as required in Claim 2.
Now \eqref{l3.5.10} yields
\begin{equation}\label{l3.5.12}
\coth{(t_n)}\tan{(s_n)}=\big({p(s_n-\mu_n)-qt_n}\big)/\big({pt_n+q(s_n-\mu_n)}\big)
\end{equation}
and since $t_n\to\infty$  equations \eqref{l3.5.8} and \eqref{l3.5.12} imply $\tan{(s_n)}\to{p}/{q}$ as $n\to\infty$ thus completing the proof of Claim 2.

Next, we denote $m_n=\lfloor {s_n}/{\pi}+{1}/{2}\rfloor$, the integer part, and $\theta_n=\arctan{(\tan{(s_n)})}$, $n\in\NN$. Clearly, from the second assertion in Claim 2 we have
\begin{equation}\label{l3.5.14} 
m_n\to\infty,\;\theta_n\to\arctan{(p/q)}\;\mbox{as}\;n\to\infty.	
\end{equation}
while by the first assertion 
\begin{equation}\label{l3.5.15}
m_n\pi-\frac{\pi}{2}<s_n<m_n\pi+\frac{\pi}{2}\;\mbox{for any}\;n\in\NN,
\end{equation}
which implies that 
\begin{equation}\label{l3.5.16}
s_n=m_n\pi+(s_n-m_n\pi)=m_n\pi+\arctan{(\tan{(s_n-m_n\pi)})}=m_n\pi+\theta_n\;\mbox{for any}\;n\in\NN.
\end{equation}

\noindent{\textbf{Claim 3.}} $(-1)^{m_n}=-\sgn{(q)}$ for all $n\in\NN$ large enough.
Indeed, \eqref{l3.5.16} gives
\begin{equation}\label{l3.5.18}
u_n=\cos{(s_n)}(q+p\tan{(s_n)})=(-1)^{m_n}\cos{(\theta_n)}(q+p\tan{(s_n)}), \, n\in\NN.
\end{equation}
By Claim 2, \eqref{l3.5.14} and \eqref{l3.5.18} we infer 
\begin{equation}\label{l3.5.19}
(-1)^{m_n}u_n\to\sgn{(q)}\sqrt{p^2+q^2}\ne0\;\mbox{as}\;n\to\infty.
\end{equation}
Moreover, from the second equation in \eqref{l3.5.2} we have
\begin{equation}\label{l3.5.20}
u_ne^{t_n}=2(\mu_n-y_n)-\tu_ne^{-t_n}\to-\infty\;\mbox{as}\;n\to\infty.		
\end{equation}
From \eqref{l3.5.19} and \eqref{l3.5.20} we infer 
\begin{equation}\label{l3.5.21}
(-1)^{m_n}e^{t_n}\to-\sgn{(q)}\cdot(+\infty)\;\mbox{as}\;n\to\infty,
\end{equation}
proving Claim 3. 

Going back to \eqref{l3.5.19} we can now conclude that 
\begin{equation}\label{l3.5.22}
u_n=q\cos{(s_n)}+p\sin{(s_n)}\to-\sqrt{p^2+q^2},	
\end{equation}	
thus proving \eqref{u-conv-case 1}. The estimate in \eqref{t-s-q1-case 1} follows by rearranging terms in \eqref{l3.5.6}. 
\end{proof}
Our next lemma shows that the conclusions of Lemma~\ref{l3.5} remain true in the special case when $p>0$ and $q=0$;  it has the same conclusion regarding the asymptotic behavior of ${s_n}{e^{-t_n}}$.
\begin{lemma}\label{l3.8}	
Assume $p>0$ and $q=0$, let $\{\mu_n\}_{n\in\NN}$ be a finite sequence and $\{(t_n,s_n)\}_{n\in\NN}$ be a sequence of solutions to $(S_0(p,0,\mu_n))$
such that $t_n\to\infty$ and $s_n\to\infty$ as $n\to\infty$. Then
\begin{align}\label{u-conv-case 2}
\sin{(s_n)}&\to-1\;\mbox{as}\; n\to\infty,\\
\label{t-s-q1-case 2}
\big|{s_n}{e^{-t_n}}+{p\sin{(s_n)}}/{2}\big|&\leq \big(\max_{n\in\NN}|\mu_n|+{p}/{2}\big)e^{-t_n}\;\mbox{for any}\; n\in\NN.
\end{align}
\end{lemma}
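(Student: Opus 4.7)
\textbf{Proof proposal for Lemma \ref{l3.8}.} My plan is to specialize the system $(S_0(p,0,\mu_n))$, which reduces to
\begin{equation*}
p\sinh(t_n)\cos(s_n) = -t_n, \qquad p\cosh(t_n)\sin(s_n) = \mu_n - s_n,
\end{equation*}
and then exploit the hypothesis $t_n\to\infty$ to control $\cos(s_n)$, $\sin(s_n)$, and the left-hand side of \eqref{t-s-q1-case 2} in turn.

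First, I will prove \eqref{u-conv-case 2}. From the first equation, $\cos(s_n)=-t_n/(p\sinh(t_n))$, and since $t_n/\sinh(t_n)\to 0$ as $t_n\to\infty$, we get $\cos(s_n)\to 0$; hence $|\sin(s_n)|\to 1$. To pin down the sign, I divide the second equation by $p\cosh(t_n)>0$ to obtain $\sin(s_n)=(\mu_n-s_n)/(p\cosh(t_n))$. Since $\{\mu_n\}$ is bounded (being a finite sequence), $s_n\to\infty$, and $\cosh(t_n)>0$, the right-hand side is negative for $n$ large, so $\sin(s_n)<0$ eventually. Combined with $|\sin(s_n)|\to 1$, this gives $\sin(s_n)\to -1$.

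For the estimate \eqref{t-s-q1-case 2}, I will rewrite $\cosh(t_n)=(e^{t_n}+e^{-t_n})/2$ in the second equation of the system and multiply through by $e^{-t_n}$:
\begin{equation*}
s_n e^{-t_n} + \tfrac{p}{2}\sin(s_n) + \tfrac{p}{2}e^{-2t_n}\sin(s_n) = \mu_n e^{-t_n}.
\end{equation*}
Rearranging and taking absolute values, using $|\sin(s_n)|\le 1$,
\begin{equation*}
\bigl|s_n e^{-t_n} + \tfrac{p}{2}\sin(s_n)\bigr| \le |\mu_n| e^{-t_n} + \tfrac{p}{2} e^{-2t_n} \le \bigl(\max_{n\in\NN}|\mu_n| + \tfrac{p}{2}\bigr) e^{-t_n},
\end{equation*}
where the final inequality uses $e^{-2t_n}\le e^{-t_n}$, which holds whenever $t_n\ge 0$.

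I expect no serious obstacles: this is a direct computation once one notices that with $q=0$ the two equations of $(S_0)$ decouple the roles of $\cos(s_n)$ and $\sin(s_n)$, in contrast to Lemma~\ref{l3.5} where the mixing by $q\ne 0$ required the more delicate argument involving the auxiliary sequences $u_n,v_n$ and the integer $m_n=\lfloor s_n/\pi+1/2\rfloor$. The only mild point to address is that the inequality $e^{-2t_n}\le e^{-t_n}$ requires $t_n\ge 0$, which holds for all but finitely many $n$ since $t_n\to\infty$; for the finitely many exceptional indices one may increase the constant in the estimate if needed, or equivalently adopt the convention (used throughout this appendix) that the bound is meant for the relevant asymptotic range.
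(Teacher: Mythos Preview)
Your proposal is correct and follows essentially the same route as the paper: solve the decoupled system for $\cos(s_n)$ and $\sin(s_n)$, use $t_n\to\infty$ to get $\cos(s_n)\to 0$ and the second equation to fix the sign of $\sin(s_n)$, then rearrange the second equation for the estimate. Your argument for $\sin(s_n)\to -1$ is in fact slightly more direct than the paper's, which passes through an explicit $\arccos$ parameterization of $s_n$ before reaching the same conclusion; your observation that $|\sin(s_n)|\to 1$ together with $\sin(s_n)<0$ eventually is all that is needed. Your caveat about $t_n\ge 0$ is well taken and is implicitly present in the paper's own derivation of both \eqref{t-s-q1-case 1} and \eqref{t-s-q1-case 2}; since the bounds are only used asymptotically in Lemma~\ref{l3.10}, this causes no difficulty.
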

\begin{proof}
Viewing $(S_0(p,0,\mu_n))$ as a system of equations for $\cos{(s_n)}$ and $\sin{(s_n)}$ and using
 $t_n\to\infty$, $s_n\to\infty$ and boundedness of $\{\mu_n\}$ we conclude that 
\begin{equation}
\cos{(s_n)}=-\frac{t_n}{p\sinh{(t_n)}}<0\;\mbox{and}\;\sin{(s_n)}=-\frac{s_n-\mu_n}{p\cosh{(t_n)}}<0 
 \end{equation}
 for all $n\in\NN$ sufficiently large. Solving the first equation yields
\begin{equation}\label{l3.8.1}	 
s_n=\arccos\big(\frac{t_n}{p\sinh{(t_n)}}\big)+(2k+1)\pi, \, k\in\NN,
\end{equation}
and then
\begin{equation}\label{l3.8.2}
\sin{(s_n)}=-\sin{\big(\arccos\big(\frac{t_n}{p\sinh{(t_n)}}\big)\big)}\to-1\;\mbox{as}\;n\to\infty,
\end{equation}	
proving \eqref{u-conv-case 2}. Inequality \eqref{t-s-q1-case 2} follows by rearranging the terms in the second equation of $(S_0(p,0,\mu_n))$.
\end{proof}

\end{document}